\titleformat{\section}[block]{\Large\bfseries\filcenter}{\thesection}{1em}{}
\numberwithin{equation}{section}
\theoremstyle{plain}
\begin{document}

\begin{frontmatter}
\title{ \sc Inference under Fine-Gray  competing risks  model  with high-dimensional covariates}
\runtitle{Inference for Competing Risks}
%\thankstext{T1}{Footnote to the title with the `thankstext' command.}

\begin{aug}
\author{\fnms{Jue} \snm{Hou}\ead[label=e1]{j7hou@ucsd.edu}} \ \
\author{\fnms{Jelena} \snm{Bradic} \thanksref{t2}\ \
\ead[label=e2]{jbradic@ucsd.edu}}
\author{\fnms{Ronghui} \snm{Xu}%\thanksref{t3}
\ead[label=e3]{rxu@ucsd.edu}}

\address{Department of Mathematics\\
University of California, San Diego\\
\printead{e1,e2}}

%\ead[label=u1,url]{www.foo.com}}

\address{Department of Family Medicine and Public Health\\
Department of Mathematics\\
University of California, San Diego\\
\printead{e3}}
%\printead{u1}}

\thankstext{t2}{Bradic is supported by the NSF DMS grant number \#1712481 }
%\thankstext{t2}{NSF DMS grant \# 1712481}
%\thankstext{t3}{???}
\runauthor{J.Hou et al.}

\affiliation{Some University and Another University}

\end{aug}

\begin{abstract}
The purpose of this paper is to construct confidence intervals for the regression coefficients in the Fine-Gray model for competing risks data with random censoring, where the number of
covariates can be larger than the sample size.
Despite strong motivation from biomedical applications,
a high-dimensional Fine-Gray model has attracted relatively little attention among the methodological or theoretical literature.
We fill in this gap by developing confidence intervals based on a one-step bias-correction for a regularized estimation. We develop a theoretical framework for the partial likelihood, which does not have independent and identically distributed entries and therefore presents many technical challenges.
We also study the approximation error from the weighting scheme under random censoring for competing risks and establish new concentration results for time-dependent processes.
In addition to the theoretical results and algorithms, we present extensive numerical experiments and an application to a study of non-cancer mortality among prostate cancer patients using the linked Medicare-SEER data.\end{abstract}

\begin{keyword}[class=MSC]
\kwd[Primary ]{62N03 }
\kwd{62F30}
\kwd[; secondary ]{62J07}
\end{keyword}

\begin{keyword}
\kwd{p-value, high-dimensional inference, one-step estimator, survival analysis}
\end{keyword}
%\tableofcontents
\end{frontmatter}

\section{Introduction}

% old contents moved to after \end{document}

High-dimensional regression has attracted increasing interest in statistical analysis and has provided a useful tool in modern biomedical, ecological, astrophysical or economics data pertaining to the setting  where the number of parameters is greater than the number of samples (see \cite{buhlmann2011statistics} for an overview).  Regularized methods \citep{tibshirani1996regression, fan2001variable} provide straightforward interpretation of resulting estimators while allowing the number of covariates to be exponentially larger than the sample size. While they can be consistent for prediction (i.e.~estimating the underlying regression function),
 confidence intervals cannot be consistently formulated,
 as firm guarantees of correct variable selection can only be established under a restrictive set of assumptions,
  including but not limited to the assumption of the minimal signal strength of the true parameter  \citep{wasserman2009high,fan2010selective,meinshausen2009lasso},
    which cannot be verified in practice.
For practical purposes, it is of interest to develop inferential tools, most commonly confidence intervals and p-values,  that do not depend on such assumptions and are yet able to provide theoretical guarantees of the quality of estimation and/or testing; and this is the goal of our work here.

For the purposes of constructing confidence intervals or testing significance of the effect from certain covariates, relying on a naive regularized estimation alone is not appropriate;
notably, construction of confidence intervals for those coefficients that have been shrunk to zero is impossible.
\cite{ZhangZhang14} and \cite{vdGeerEtal14}
 proposed the one-step bias-correction estimator, which can be subsequently used to carry out proper statistical inference.
Our work here was motivated by an illustration project of how information contained in patients' electronic medical records can be harvested for precision medicine. The  data set linking the
Surveillance, Epidemiology and End Results  (SEER) Program database of the National Cancer Institute with the federal health insurance program Medicare database contained  prostate cancer patients of age 65 or older.  A total of 57,011 patients diagnosed between 2004 and 2009 had
 information available on 7 relevant clinical variables (age, PSA, Gleason score, AJCC stage, and AJCC stage T, N, M, respectively), 5 demographical variables (race, marital status, metro, registry and year of diagnosis), plus 9321 binary insurance claim codes. Among these patients
 1,247  died due to cancer, and 5,221 had deaths unrelated to cancer by December 2013.
An important goal of the project was to evaluate the impact of risk factors (clinical, demographical, and claim codes) on the non-cancer versus cancer mortality, with appropriate statistical inference.
Cancer and non-cancer versus cancer mortality are known as competing risks in survival analysis, and cannot be handled using linear or generalized linear regression models as considered in \cite{ZhangZhang14} and \cite{vdGeerEtal14}.
Instead, we consider the proportional subdistribution hazards regression model, often known as
the Fine-Gray model \cite{FineGray99}.
Under classical low-dimensional setting, Fine and Gray derived the estimation and inference for the model coefficients via the partial likelihood principle,
and handled right censoring by inverse probability censoring weighting (IPCW).

Considerable research effort has been devoted to developing regularized  methods to handle various regression settings  \citep{ravikumar2010high,belloni2011,obozinski2011support,meinshausen2006high,basu2015regularized,cho2015multiple}, including those for right-censored time-to-event data \cite[among others]{sun2014network,bradic2011regularization,gaiffas2012high,johnson2008variable,lemler2013oracle,bradic2015structured,huang2006regularized}.
However,   regression has  been little studied for the competing risks setting, with random  censoring and high-dimensional covariates.
The purpose of this paper has two folds: 1) to study estimators under the Fine-Gray regression model for competing risks data with many more covariates than the number of events; 2) to develop statistical inference procedures in this setting. To our best knowledge, no published work exists on statistical inference for competing risks data that allows high-dimensional models;    univariate testing was studied in Cox proportional hazards model  -- however, our construction allows for the testing of general linear hypothesis.

 There are at least three significant challenges for addressing high-dimensional competing risks regression under the Fine-Gray model.
  The structure of the score function related to the partial likelihood causes a somewhat subtle issue with many of the unobserved factors  preventing a simple martingale representation. Additionally,  the structure, as well as,  size of the sample information matrix renders both methodology and theoretical analysis based on the Hessian matrix problematic. Thirdly, random censoring presents non-trivial challenges in the presence of competing risks and weighting is needed which further complicates the theoretical analysis.  Also, although bootstrap has been considered for inference under the Fine-Gray regression model, this approach is no longer applicable given the known problems of the bootstrap in high-dimensional settings. Development of high-dimensional inferential methods for competing risks data and under the Fine-Gray model, in particular, may have been hampered by these considerations.

In this paper, we propose a natural and sensible formulation of inferential procedure for this high-dimensional competing risks regression.
 We first study a  regularized estimator of the high-dimensional parameter of interest and derive its finite-sample properties where the interplay between the sparsity, ambient dimension and the sample size can be directly seen.    We then propose a bias-correction procedure by formulating a new pragmatic estimator of the inverse of a large covariance matrix that allows broad dependence structures within the Fine-Gray model.
  We find that the bias-corrected estimator   is effective at capturing strong
as well as weak signals, and can be used for statistical inference.
  This  combination leads to
an efficient and simple-to-implement procedure under the Fine-Gray model
with many covariates.

\subsection{Model and notation}\label{section:model}

%\subsection{The model}

For  subject  $i=1, ..., n$ in a study, let $T_i$ be the event time, with the event type or cause $\epsilon_i$; we use the two words interchangeably in the following.
Under the Fine-Gray model that we consider below, we assume without loss of generality that the event type of interest is `1', and we code all the other event types as `2'
without further specification.
In the presence of a potential right-censoring time $C_i$,
the observed time is $X_i = T_i \wedge C_i$.
We denote the event indicator as $\delta_i = I(T_i \le C_i)$.
The  type of the event  $\epsilon_i$ is observed,  if the event occurs before the censoring time, i.e., when $\delta_i=1$.
Let $\bZ_i(\cdot)$ be the vector of covariates that are possibly time-dependent.
We focus on the situation that the dimension of $\bZ_i(\cdot)$, $p$,
is larger than the sample size $n$.
Assume that the observed data $\{(X_i, \delta_i, \delta_i\epsilon_i, \bZ_i(\cdot))\}$ are independent and identically distributed (i.i.d.) for
$i = 1,\dots, n$.

Since the cumulative incidence function (CIF) is often the quantity of interest,
\cite{FineGray99} proposed a proportional subdistribution hazards model where the CIF
\begin{equation}\label{def:CIF}
  F_1(t | \bZ_i(\cdot)) = \Pr(T_i \le t, \epsilon_i = 1 |\bZ_i(\cdot) )
  = 1-\exp\left( -\intt \RR{i}{u}{o} \lamT(u) du\right),
\end{equation}
 the $p$-dimensional coefficient $\betao$
%\lily{why is 0 superscript?} \hj{High-d convention. Free up the subscript for $\betao_{\Ocal}$ the non-zero part and $\betao_{\Ocal^c}$ zero part}
 is the unknown parameter of interest, and $\lamT(t)$ is the baseline subdistribution hazard.
 Under the model (\ref{def:CIF})  corresponding subdistribution hazard $h_1(t | \bZ_i(\cdot)) = \lamT(t) \RR{i}{t}{o} $.
Throughout the paper, we assume
 that there exists  a sparsity factor $\sbeta = |\mbox{supp}(\betao)|$ for some $\sbeta \leq n$.
Note that if we define an improper random variable
$
T^1_i = T_i I(\epsilon_i = 1) + \infty I(\epsilon_i >1)
$,
then the subdistribution hazard can be seen as the conditional hazard of $T^1_i$ given $ \bZ_i(\cdot)$.
%$$
%\lim_{dt \to 0+} \Pr(T^1_i \in (t, t+dt)| T^1_i \ge t, \bZ_i(\cdot) ) = \RR{i}{t}{o} \lamT(t)
%$$
%is proportional to   with the factor $\RR{i}{t}{o}$.
%Treating $T^1_i$'s as the time-to-event random variables,

We denote the counting process for  type 1 event as
$\NI_i(t) = I(T^1_i \le t)$ and its observed counterpart as
$
\Nobs_i(t) = I(\delta_i\epsilon_i = 1)I(X_i\le t)
$.
We also denote the counting process for the censoring time
as $\Nc_i(t) = I(C_i \le t)$.
Let $Y_i(t)=1-\NI_i(t-)$ (note that this is not the `at risk' indicator like under the classic Cox model), and $r_i(t) = I(C_i \ge T_i \wedge t)$. Note that $r_i(t)Y_i(t) = I(t \le X_i)+I(t>X_i)I(\delta_i\epsilon_i>1)$ is always observable, even though $Y_i(t)$ or $r_i(t)$ may not.
Let $G(t) = \Pr(C_i \ge t)$ and let $\hat{G}(\cdot)$ be the Kaplan-Meier estimator for $G(\cdot)$.
Here we assume that $C$ is independent of $T$, $\epsilon$ and $\bZ$.
%This independent censoring assumption is admittedly stronger than the conditional independent censoring assumption,
%but it still covers many situations in practice.
%Existing works on Fine-Gray model are predominantly developed under the assumption \cite[among others]{FineGray99,BellachEtal18JASA}.
Following the notation of Fine and Gray we call  the IPW at-risk process:
\begin{equation}\label{def:ipw}
  \omega_i(t)Y_i(t) = r_i(t) Y_i(t) \frac{\hat{G}(t)}{\hat{G}(t\wedge X_i)};
\end{equation}
in other words, the weight for subject $i$ is one if $t<X_i$, zero after being censored or failure due to cause 1, and $ \hat{G}(t) / \hat{G}( X_i)$ after failure due to other causes.
%Thus, the IPW weight in \eqref{def:ipw} is computable from the observed data.
The log pseudo likelihood (as recently named in \cite[]{BellachEtal18JASA}) that gives rise to the weighted score function in \cite{FineGray99}
 for $\bbeta$  is
\begin{equation}\label{def:pl}
  m(\bbeta) = n^{-1} \sumin \inttao \left\{ \bbeta^\top \bZ_i(t) - \log\left(\sumjn \omega_j(t)Y_j(t)\RR{j}{t}{}\right) \right\} d\Nobs_i(t).
\end{equation}
where $t^*<\infty$ is the study end time.
%\lily{previously in v1 you said "maximal follow-up time", what is the (subtle?) difference?}

In the following, for a vector $\bfv$, let
$\bfv^{\otimes 0} = 1$, $\bfv^{\otimes 1} = \bfv$ and $\bfv^{\otimes 2} = \bfv \bfv^\top$.
We define for $l=0,1,2$
\begin{gather}
\bs{l}(t, \bbeta) =  \E\left\{ G(t)/{G}(t\wedge X_i) r_i(t) Y_i(t)\RR{i}{t}{}\bZ_i(t)^{\otimes l}\right\}, \quad
\bmu(t) =  \bs{1}(t, \betao)/\bso(t, \betao), \notag \\
\bS{l}(t, \bbeta) =  n^{-1}\sumin \omega_i(t)Y_i(t)\RR{i}{t}{}\bZ_i(t)^{\otimes l}, \quad
\bZbar(t, \bbeta) =  \bS{1}(t, \bbeta)/\bSo(t, \bbeta). \label{def:Sk}
\end{gather}
We then have %$\dm(\bini)$ is
the score function, i.e.~derivative of the log pseudo likelihood \eqref{def:pl}, % evaluated at $\bini$,
$$
\dm(\bbeta) = n^{-1}\sumin \inttao \{\bZ_i(t)-\bZbar(t,\bbeta)\} d\Nobs_i(t).
$$

Regarding notation, let us mention that all constants are assumed to be independent of $n$, $p$ and $\sbeta$.
We use $K$ and $\rho$ with corresponding enumerated subscripts to denote ``big" and ``small" constants.
We use $Q$ to denote intermediate terms used in the statements or the proofs of various results.
We label the subscripts by the corresponding order of their appearance.

\subsection{Organization of the paper}

This paper is organized as follows. In Section \ref{section:method}, we provide the bias corrected estimator,  Section \ref{section:method one-step}, as well as the confidence interval construction,   Section \ref{section:method CI},   for the high-dimensional Fine-Gray model. Construction of a new Hessian estimator, the cornerstone for our bias correction and variance estimation, is presented in Section \ref{section:method nodewise}. Section \ref{section:theory} presents properties of the developed estimator. Additional notations for theoretical considerations are presented in Section \ref{section:notation}. Bounds for the prediction error  are presented in Section \ref{section:theory-init}; Theorem \ref{thm:initial oracle} is the main result on estimation. Section \ref{section:theory-inf} studies the sampling distribution of a newly develop test statistics while not requiring model selection consistency or minimal signal strength. Theorem \ref{thm:normality} is the main result regarding asymptotic distribution. There we present a sequence of  intermediate  results as well. We examine our regularized estimator and the one-step bias-corrected estimator through simulation experiments in Section \ref{section:simulation}, and apply them to the  SEER-Medicare data in Section \ref{section:data analysis}.

\section{Estimation and inference for competing risks with more regressors than events}\label{section:method}

\subsection{One-step corrected estimator}\label{section:method one-step}

%\subsection{Regularized estimator}

 A natural initial estimator to consider under the high dimensional setting is a  $l_1$-regularized estimator, where the particular loss function of interest would be  the log pseudo likelihood as defined in \eqref{def:pl}.
 We note that our results are easily generalizable to any sparsity-inducing and convex penalty functions, but due to  the simplicity of presentation we present details only for the $l_1$ regularization.
 That is, we consider
 \begin{equation}\label{def:bini}
\bini(\lambda)=\argmin_{\bbeta \in \R^p} \Bigl\{-m(\bbeta) + \lambda\|\bbeta\|_1\Bigl\}
\end{equation}
 for a suitable choice of the tuning parameter $\lambda>0$.
% \lily{$\lambda$ is also hazard}\hj{Can we change hazards into $h(\cdot)$ and $H(\cdot)$? That was Jianqin Fan's choice with his paper. }.
 Whenever possible, we suppress $\lambda$ in the notation above and use $\bini$ to denote the  $l_1$-regularized estimator. In the Section \ref{section:theory-init}, we quantify the non-asymptotic oracle risk bound and show that the estimator above, as a typical regularized estimator with $p \gg n$, converges at a rate slower than root-$n$. This implies that for inferential purposes the bias of the estimator cannot be ignored.

Inspired by the work of \cite{ZhangZhang14} and \cite{vdGeerEtal14},
we propose the one-step bias-correction estimator
\begin{equation}\label{def:bCI}
\bCI := \bini + \hTh \dm(\bini),
\end{equation}
where $\bini$ is  defined in \eqref{def:bini}, $\hTh$ is an estimator of the ``asymptotic" precision matrix $\boldsymbol\Theta$ to be defined later.
The above construction of the one-step estimator is inspired by the first order Taylor expansion of   $\dm(\cdot)$,
\begin{align}\label{eq:approx idea}
\dm(\betao) &\approx \dm(\bini) - \ddm(\betao)(\bini-\betao) \nonumber
\\
&\approx  \ddm(\betao) \left[ \betao - \{\bini  + \hTh\dm(\bini)\} \right]
= \ddm(\betao)\{ \betao - \bCI\}.
\end{align}
The notation ``$\approx$" in the above indicates that the equivalence is approximate with the higher order error terms omitted. We aim to find a good candidate matrix $\hTh$, such that $-\ddm(\betao)\hTh \approx \mathbb{I}_p$, with $\mathbb{I}_p$ denoting the $p \times p$ identity matrix.
%Denoted $\bSig$ as the limit of the negative Hessian  $-\ddm(\betao) \in \mathbb{R}^{p \times p}$.
Note that when $p \leq n$ an inverse of the Hessian matrix above would naturally be a good candidate for $\hTh$, % and with it $\boldsymbol\Theta ^{-1} = \bSig$. % \hj{I feel limit is more precise}.
but when $p \geq n$ such an inverse does not necessarily exist.
We will elucidate the  construction of   $\hTh$ towards the end of this section.

\subsection{Confidence Intervals}\label{section:method CI}

To construct the confidence intervals for components of $\betao$, we need the asymptotic distribution of $\bCI$.
We will first establish the asymptotic distribution of the score $\dm(\betao)$.
With $p>n$, we have to restrict the space in which we want to establish the asymptotic distribution.
The asymptotic distribution for  $\dm(\betao)$ is established in the following sense
 --- for any $\bfc \in \R^p$ such that $\|\bfc\|_1 = 1$ we have
$$
\sqrt{n} \bfc^\top \dm(\betao) \stackrel{d}{\to} N(0, \bfc^\top \Scov \bfc),
$$
where $\Scov$ is the variance-covariance matrix for $\sqrt{n} \dm(\betao)$.
We construct the following estimator for $\Scov$:
\begin{equation}\label{def:hScov}
  \hScov =n^{-1} \sumin (\heta_i + \hpsi_i)^{\otimes 2},
\end{equation}
where $\hat \eta_i$ and $\hat \psi_i$ are defined as follows:
\begin{align}
 &\heta_i  = \inttao \{\bZ_i(t) - \bZbar(t, \bini)\} \omega_i(t)d\MIh_i(t), \label{def:heta}\\
 &\hpsi_i  = \inttao \frac{\hbfq(t)}{\hpi(t)} d\Mch_i(t), \label{def:hpsi} \\
 & \hbfq(t) =  n^{-1}\sumin I(t>X_i)\int_t^{t^*} \{\bZ_i(u)-\bZbar(u,\bini)\}\omega_i(u)d\MIh_i(u), \label{def:hbfq} \\
&   \hpi(t) =  n^{-1}\sumin I(X_i \ge t), \label{def:hpi} \\
& d\MIh_i(t) =  d\Nobs_i(t) - \frac{\omega_i(t)Y_i(t)\RRh{i}{t}}{\bSo(t,\bini)} n^{-1} \sumjn d\Nobs_j(t), \label{def:MIh} \\
& d\Mch_i(t) = I(X_i \ge t)d\Nc_i(t) - \frac{I(X_i \ge t)}{\hpi(t)} n^{-1} \sumjn I(X_j \ge t)d\Nc_j(t). \label{def:Mch}
\end{align}

As illustrated in \eqref{eq:approx idea}, we have
$\sqrt{n}\bfc^\top(\bCI-\betao)$ to be asymptotically equivalent to
$$
\sqrt{n} \bfc^\top \bTh \dm(\betao)
 \stackrel{d}{\to}  N(0, \bfc^\top \bTh \Scov \bTh^\top \bfc).
$$
We may now estimate the variance of $\sqrt{n}\bfc^\top(\bCI-\betao)$
 using a ``sandwich" estimator
$
\bfc^\top \hTh \hScov \hTh^\top \bfc
$.
Therefore a $(1-\alpha)100\%$ confidence interval for $\bfc^\top \betao$ is
\begin{equation}\label{eq:CI}
\left[\bfc^\top \bCI -\Zcal_{1-\alpha/2}\sqrt{\bfc^\top \hTh \hScov \hTh^\top \bfc/n}, \;
\bfc^\top \bCI +\Zcal_{1-\alpha/2}\sqrt{\bfc^\top \hTh \hScov \hTh^\top \bfc/n}\right]
\end{equation}
with standard normal quantile $\Zcal_{1- \alpha/2}$.

Our proposed approach addresses various practical questions as special cases.
First, we can construct confidence interval for a chosen coordinate $\beta^o_j$ in $\betao$. To that end, one needs to consider  $\bfc = \bfe_j$, the $j$-th natural basis for $\R^p$ and apply the result \eqref{eq:CI}.
Generally, we can construct a confidence interval for any linear contrasts $\bfc^\top \betao$, potentially of any dimension.
For example, we can have confidence intervals for the linear predictors $\bZ^\top \betao$ if the non-time-dependent covariate $\bZ$ is also sparse so that we may assume $\|\bZ\|_1$ to be bounded.
As the dual problem, we may use the Wald test statistic
\begin{equation}\label{def:test}
  Z = \sqrt{n}(\bfc^\top\bCI-\theta_0)/ \sqrt{\bfc^\top \hTh \hScov \hTh^\top \bfc}
\end{equation}
to test the hypothesis with $H_0: \bfc^\top \betao = \theta_0$.

%\hj{CI for CIF turns out to be not useful. The theory has a very restrictive assumption, potentially conflict with the invertibility of expected Hessian. Empirically, the CI for CIF is very wide. There is not much value.}

\subsection{Construction of  the inverse Hessian matrix}\label{section:method nodewise}

Although the early works under the linear model inspire the construction here,  the specifics, as well as the theoretical analysis, the latter remains a challenge.
We start by writing   the negative Hessian of the log pseudolikelihood \eqref{def:pl}:
\begin{equation}\label{def:Hess}
  -\ddm(\bbeta) = n^{-1}\sumin \inttao
  \left\{\frac{\bS{2}(t,\bbeta)}{\bSo(t,\bbeta)} - \bZbar(t,\bbeta)^{\otimes 2}\right\} d\Nobs_i(t).
\end{equation}
We define %a sequence of deterministic matrices index by sample size $n$
\begin{equation}\label{def:bSig}
  \bSig =  \E \left[ \inttao \left\{\bZ_i(t) - \bmu(t)\right\}^{\otimes 2}
  d\Nobs_i(t) \right] = \E \left[ \inttao \left\{\bZ_i(t) - \bmu(t)\right\}
  d\Nobs_i(t)\right]^{\otimes 2}.
\end{equation}
Under the regularity conditions, to be specified later, we have $\bSig$ as the ``asymptotic negative Hessian" in the sense that the element-wise maximal norm
$
\|\bSig + \ddm(\betao)\|_{\max}
$
converges to zero in probability.
Our goal is to estimate its inverse $\bTh = \bSig^{-1} = (\bth_1,\dots, \bth_p)^\top$, where $\bth_j$'s are the rows of $\bTh$.

By definition \eqref{def:bSig},
the positive semi-definite matrix $\bSig$ is also the second moment of the random vector
\begin{equation}\label{def:bfXi}
\bfXi_i = \inttao \left\{\bZ_i(t) - \bmu(t)\right\}
  d\Nobs_i(t)
\end{equation}
with $\bmu(t)$ defined in \eqref{def:Sk}.
The expectation of $\bfXi_i$ is zero,
$$
\E(\bfXi_i) = \E\left[ \inttao \left\{\bZ_i(t) - \bmu(t)\right\}
  Y_i(t)I(C_i \ge t) \RR{i}{o}{t}\lamT(t)dt\right] = \bzero.
$$
Hence, to estimate $\bTh$, we may draw inspiration from the early work on inverting the high-dimensional variance-covariance matrix \cite[]{ZhouEtal11}.
%A brief explanation of our method is presented here.
Consider the minimizers of the expected loss functions %\lily{why}
\begin{equation}\label{def:grotau}
\gro_j = \argmin_{\bfgr_j \in \R^p}\E (\Xi_j - \bfXi_{-j}^\top \bfgr_j)^2, \quad \tau_j^2 = \E (\Xi_j - \bfXi_{-j}^\top \gro_j)^2,
\end{equation}
where $\Xi_j$ is the $j$th element of $\bfXi$, and $\bfXi_{-j}$ is a $p-1$ dimensional vector created by dropping the $j$th element from $\bfXi$.
We show that the quantities $ \gro_j $ and $ \tau_j $ defined in \eqref{def:grotau} can be used to construct the inverse of $\Hess$.
This is because %Note that
$\tau_j^2$ can also be alternatively written as
\begin{equation}\label{eq:tau-alt}
\E \{(\Xi_j - \bfXi_{-j}^\top \gro_j)\Xi_j\} - \grot_j\E\{ (\Xi_j - \bfXi_{-j}^\top \gro_j)\bfXi_{-j}\}.
\end{equation}
By the convexity of the target function $\E (\Xi_j - \bfXi_{-j}^\top \bfgr_j)^2$,  $\gro_j$ must satisfy the first order Karush-Kuhn-Tucker  conditions (KKT)
\begin{equation}\label{eq:tau-KKT}
  - \grot_j\E \{(\Xi_j - \bfXi_{-j}^\top \gro_j)\bfXi_{-j}\} = 0.
\end{equation}
Applying \eqref{eq:tau-KKT} to \eqref{eq:tau-alt}, we have
$$
\tau_j^2 = \E \{(\Xi_j - \bfXi_{-j}^\top \gro_j)\Xi_j\}.
$$
We can then define a vector $\bth_1 = (1, -\grot_1)^\top / \tau_1^2$ that satisfies
$$
\bth_1^\top \Hess = \E \{(\Xi_1 - \bfXi_{-1}^\top \grot_1)\bfXi \}/\E \{(\Xi_1 - \bfXi_{-1}^\top \gro_1)\Xi_1\} = (1, \bzero_{p-1}) = \mathbf{e}_1.
$$
Without loss of generality, we may define $\bth_j$ accordingly for $j=2,\dots,p$, satisfying
$
\bth_j^\top \Hess = \mathbf{e}_j
$.
The matrix $\bTh = (\bth_1,\dots, \bth_p)^\top$ satisfies
$$
\bTh \Hess = (\mathbf{e}_1,\dots, \mathbf{e}_p) = \mathbb{I}_p,
$$
therefore $\bTh$ is the inverse of $\Hess$.
We now utilize the sample form of  $\boldsymbol \Sigma$,
  \eqref{def:bSig},
\begin{equation}\label{def:hHess}
\hHess = n^{-1}\sumin \inttao \{\bZ_i(t)-\bZbar(t,\bini)\}^{\otimes 2} d\Nobs_i(t).
\end{equation}
In particular we observe that
 $\hHess$ is that it can be written  as the sample second moment
$
\hHess  = n^{-1} \sumin \hXi_i ^{\otimes 2}
$
where
\begin{equation}\label{def:hXi}
  \hXi_i = \inttao \{\bZ_i(t)-\bZbar(t,\bini)\} d\Nobs_i(t).
\end{equation}
This form allows us to define the inverse of $\boldsymbol\Sigma$ as a regression between the vectors $  \hXi_i $. For that purpose we define the least squares loss function as

\begin{equation}\label{def:Gr}
  \Gr_j(\bfgr_j, \bini) = n^{-1}\sumin \left(\hat{\Xi}_{i,j} - \hXi_{i,-j}^\top \bfgr_j\right)^2, \ j=1,\dots,p,
\end{equation}
where $\hat{\Xi}_{i,j}$ is the $j$th element of $\hXi_i$, and $\hXi_{i,-j}$ is a $p-1$ dimensional vector obtained by dropping the $j$th element from $\hXi_i$.
We then define the nodewise LASSO in our context to be
\begin{equation}\label{def:hgrtau}
\hgr_j = \argmin_ {\bfgr_j \in \R^{p-1}} \left\{\Gr_j(\bfgr_j, \bini) + 2\lambda_j \|\bfgr_j\|_1 \right\}, \qquad
\hat{\tau}_j^2 =  \Gr_j(\hgr_j, \bini) + \lambda_j \|\hgr_j\|_1.
\end{equation}
Accordingly, we use $\hgr_j$ and $\hat{\tau}_j^2$ to construct
\begin{equation}\label{def:hTh}
  \hat{\boldsymbol\Theta}_{jk} = \left\{
  \begin{array}{cc}
    -\hat{\gamma}_{j,k}/(\hat{\tau}^2_j), &  k < j; \\
    1/(\hat{\tau}^2_j), &  k = j; \\
    \hat{\gamma}_{j,k-1}/(\hat{\tau}^2_j), &  k > j.
  \end{array}
  \right.
\end{equation}
By the first order KKT condition, we have $(\hat{\boldsymbol\Theta} \hHess)_{j,j}=1$ and  $|(\hat{\boldsymbol\Theta} \hHess)_{j,k}|\le \lambda_j$ for
$j\neq k$.
%|(\hat{\Theta} \hHess)_{j,k}| \left\{
%  \begin{array}{cc}
%   = 1, &  j=k; \\
%   \le \lambda_j, & j\neq k.
%  \end{array}
%  \right.
%$$
Choosing $\lambda_{\max} = \maxj \lambda_j =o_p(1)$, we achieve that $\|\hTh\hHess - \mathbb{I}_p\|_{\max}$ goes to zero. %\lily{loop back to the one-step estimator (6) please.}
The one-step estimator proposed in \eqref{def:bCI} with such $\hTh$ hence converges to the true coefficient $\betao$ approximately at the rate equivalent to $\dm(\betao)$, as
illustrated in \eqref{eq:approx idea}.

Our proposed nodewise LASSO estimator is innovative in several aspects.
Given the difficulty imposed by the model,
we cannot make high-dimensional inference by simply inverting the $XX^\top$ for a design matrix $X$ like in a linear or generalized linear model.
The log pseudo likelihood \eqref{def:pl} has dependent entries.
The covariates $\bZ_i(t)$ for $i=1,\dots,n$ are allowed to be time-dependent.
Nevertheless, we identify for our model that the key element for the high-dimensional inference is each observation's contribution to the score, the $\bfXi_i$'s.
Our solution generalizes high-dimensional matrix inversion in a non-trivial way to complex models with censoring, non-standard likelihoods and weighting.
%It may inspire further development in the high-dimensional inference for complex regression models that share some of the difficulties.

\section{Theoretical considerations}\label{section:theory}

%\lily{to be inserted or deleted if redundant }
 In this section,  we present the theory for the estimators $\bini$, $\bCI$ and the confidence intervals described in the previous section.
 We will quantify the non-asymptotic oracle risk bound for the estimator above while allowing $p \gg n$ with a minimal set of assumptions. Theoretical study of this kind is  novel, since in the context of competing risks, the  martingale structures typically utilized  are unavailable and new techniques need to be developed. In particular, we show that the  inverse probability weighting
has a finite-sample effect that separates this model from  the classical Cox model (see comments after Theorem \ref{thm:initial oracle}).  We will also establish that a certain tighter bound can be established whenever the hazard rate is bounded (Theorem \ref{thm*:initial rate}).

Throughout our work we assume that   $\{(T_i, C_i, \epsilon_i, \bZ_i(t)): t \in [0, \infty) \}$ are i.i.d. with
      $C_i$ independent of $(T_i, \epsilon_i, \bZ_i(\cdot))$. Moreover, for any $t \in [0,t^*]$, $G(t) = I(C_i \ge t)$ is differentiable,
       and its hazard function $\lamC(t) = -G'(t)/G(t) \le \Lc$.
We also assume that the baseline CIF $F_1(t; \bzero)$ is differentiable.
  The baseline subdistribution hazard $\lamT(t) = -d\log\{F_1(t; \bzero)\}/dt \in [\lamlow, \Llam]$ for all $t \in (0,t^*)$  and some  $\lamlow>0$ and $\Llam<\infty$.

%\lily{these 3 newly added paragraphs (where were they before?) are not best here, especially the 2nd and 3rd}

\subsection{Additional notation}\label{section:notation}
%
%Unlike the low-dimension situation in \cite{FineGray99} where various empirical process results are applicable, none of the general results can be directly applied in high-dimension.
%This is a big challenge for our theory where the convergence of the empirical average of the time-dependent processes to their common expectation is needed at various places.
%We generalize some empirical process results while
%relying heavily on the martingale theory elsewhere.

In the following, we  introduce some additional notations. The counting process martingales
\begin{equation}\label{def:MI}
  \MI_i(t)  = \NI_i(t) - \intt Y_i(u) \RR{i}{u}{o} \lamT(u) du
\end{equation}
are essentially helpful tools in high-dimensions for establishing theory with dependent partial likelihoods.
Unfortunately, the uncensored counting processes $\{\NI_i(t), i=1,\dots,n\}$ are not always observable.
The observable counterpart $\Nobs_i(t)$ has
no known martingale related to it under the observed filtration $\Ft=\sigma\{\Nobs_i(u), I(X_i \ge u), r_i(u): u \le t, i= 1,\dots, n\}$.
The Doob-Meyer compensator for the submartingale $\Nobs_i(t)$ under the observed filtration involves the nuisance distribution of $T_i|\epsilon_i>1$.
%\lily{there are several turns above: empirical process is no good, martingale good in high-d; this is already opposite of low-d, where empirical process is considered more powerful than martingale. Then, martingale is not immediately available here, etc. .... Can we spell everything out more clearly?}
%\hj{started a new paragraph. These are different levels of facts. Did look convoluted when put together.}
To utilize the martingale structure for our theory, we have to define the ``censoring complete" filtration
\begin{equation}\label{def:ccFt}
  \ccFt = \sigma \{\Nobs_i(u), I(C_i \ge u), \bZ_i(\cdot): u \le t, i = 1,\dots, n\},
\end{equation}
on which we have a martingale related to $ N_i^o(t)$,
\begin{equation}\label{def:ccFtM}
  \intt I(C_i \ge t)d \MI_i(u) = \Nobs_i(t)  - \intt I(C_i \ge u)Y_i(u) \RR{i}{u}{o} \lamT(u) du.
\end{equation}
To relate the martingale \eqref{def:ccFtM} with our log pseudo likelihood $m(\bbeta)$,
we define its proxy with $\ccFt$ measurable integrand
\begin{equation}\label{def:tm}
  \tilde{m}(\bbeta) =  n^{-1} \sumin \inttao \bbeta^\top \bZ_i(t) - \log\left(\sumjn I(C_j \ge t)Y_j(t)\RR{j}{t}{}\right)d\Nobs_i(t).
\end{equation}
We define processes related to $\tilde{m}(\bbeta)$ and its derivatives as
\begin{equation}\label{def:Stk}
\btS{l}(t, \bbeta) =  n^{-1}\sumin I(C_i \ge t)Y_i(t)\RR{i}{t}{}\bZ_i(t)^{\otimes l}, \quad
\bZtil(t, \bbeta) =  \btS{1}(t, \bbeta)/\btSo(t, \bbeta).
\end{equation}
They can also be seen as proxies to the processes in \eqref{def:Sk}.
To see that, observe that by conditioning,
%In fact, we can compute the following expectation by first conditioning on the observed filtration
\begin{align*}
\E\left\{\btS{l}(t,\bbeta)\right\}& =
\E\left[\E\{ I(C_i \ge t)Y_i(t)|\Ft\}\RR{i}{t}{}\bZ_i(t)^{\otimes l}\right]
\\
&= \E\left\{\tilde{\omega}_i(t)Y_i(t)\RR{i}{t}{} \bZ_i(t)^{\otimes 2}\right\},
\end{align*}
where
\begin{equation}\label{def:wtil}
\tilde{\omega}_i(t) = r_i(t)G(t)/G(t\wedge X_i)
\end{equation}
 is the weight with the true censoring distribution $G(\cdot)$.
We denote their expectations as
\begin{equation}\label{eq:ES}
  \bs{l}(t,\bbeta) = \E\left\{\btS{l}(t,\bbeta)\right\} = \E\left\{\tilde{\omega}_i(t)Y_i(t)\RR{i}{t}{} \bZ_i(t)^{\otimes 2}\right\}.
\end{equation}
Our proxies precisely target those weighted samples, as $\btS{l}(t, \bbeta)$ differs from $\bS{l}(t, \bbeta)$ only at those summands with observed type-2 events.

Note that the  Kaplan-Meier estimator for $G(t)$ can be written as
$$
\hat{G}(t) = \prod_{u\le t} \left(1 - \frac{d\Nc_i(u)}{I(X_i \ge u)}\right).
$$
%\lily{is this any different from what Fine-Gray or others studied before?}
%\hj{The notation is old, but the result is more detailed.}
To study the convergence of $\hat{G}(t)$ to $G(t)$,
we denote a martingale related to $\Nc_i(t)$, the counting process of observed censoring, $  \Mc_i(t) $.
Let the censoring hazard be defined as $\lamC(t) = -d\log(G(t))/dt$.
Under the ``censoring" filtration
\begin{equation}\label{def:cFt}
  \cFt = \sigma\{\Nc_i(u), T_i, \epsilon_i, \bZ_i(\cdot): u\le t, i = 1,\dots, n\},
\end{equation}
we have a martingale
\begin{equation}\label{def:Mc}
  \Mc_i(t) = \Nc_i(t)  - \intt I(C_i \ge u)\lamC(u) du.
\end{equation}

We use the integration-by-parts arguments \cite[the Helly-Bray argument on page 727]{Murphy94} with random martingale measures, e.g. $d\MI_i(t)$, in our proof.
%In a rigorous sense, we should specify the element $w$ in the probability space $\Omega$ for each $d\MI_i(t; w)$ when we apply integration-by-parts to the deterministic measure $d\MI_i(t; w)$ or solve integral equations involving $d\MI_i(t; w)$.
The total variation of $\MI_i(t; w)$ is defined as
\begin{equation}\label{def:totVar}
 \bigvee_0^{t^*}\MI_i(t; w) = \sup_{k=1,2,\dots}\sup_{0\le t_1<\dots<t_k \le t^*} \sum_{j=2}^n |\MI_i(t_j; w)-\MI_i(t_{j-1}; w)|.
\end{equation}
Since $\MI_i(t; w)$ can be decomposed into a nondecreasing counting process $\NI_i(t)$ minus another nondecreasing compensator $\intt Y_i(u)\RR{i}{u}{o}\lamT(u)du$,
we have a bound for its total variation
\begin{equation}\label{eq:totVarMI}
  \bigvee_0^{t^*}\MI_i(t; w) \le \NI_i(t^*) + \inttao Y_i(u)\RR{i}{u}{o}\lamT(u)du.
\end{equation}
Similar conclusion also applies to $\Mc_i(t)$, i.e.~we have a bound for its total variation
\begin{equation}\label{eq:totVarMc}
  \bigvee_0^{t^*}\Mc_i(t; w) \le \Nc_i(t^*) + \inttao I(C_i \ge t)\lamC(u)du.
\end{equation}
As a convention, from hereon we suppress the $w$ in the notation to keep it simple.

\subsection{Oracle inequality }\label{section:theory-init}

We first establish  oracle inequality for the initial estimation error $\|\bini-\betao\|_1$ based on the following  set of
%difficult-to-read but otherwise weak
conditions that are weaker than those in the next subsection.

%    We chose most subscripts to be the most relevant quantities.
%    For those constants without a relevant quantity,

\begin{enumerate}[label=(C\arabic*)]
  \item \label{Assum:design}({\bf Design})
      % There exists a finite $t^*$ such that $\Pr(C_i > t^*) = 0$.
      With probability equal to one, the covariates satisfy
      \begin{equation}\label{aseq:Zij}
      \supi\supt\|\bZ_i(t)\|_{\infty} \leq \KZ/2.
      \end{equation}
       The expected at-risk process is bounded away from zero, i.e.,
    for positive  $\M$ and $\rstar$
    \begin{equation}\label{aseq:denom}
    \inft \E \left[I(C_i \ge t^*)I(t^* < T^1_i < \infty)\min\{\M,\RR{i}{t}{o}\}\right] > \rstar.
    \end{equation}

    \item ({\bf Covariance}) \label{Assum:Hess} For    $\M$ in \eqref{aseq:denom}, the smallest eigenvalue of the matrix
  $$
  \Hess(\M) = \E \left\{ \inttao \left(\bZ(t) - \frac{\E \left[\bZ(t) \{1 - F_1(t;\bZ)\}\min\{\M,\RR{}{t}{o}\}\right]}{\E \left[\{1 - F_1(t;\bZ)\}\min\{\M,\RR{}{t}{o}\}\right]} \right)^{\otimes 2}\lamT(t)dt \right\}
  $$
  is at least $\rhosig > 0$.

  \item \label{Assum:CR}({\bf Continuity})
  $\bZ_i(t)$ may have $\Kzi$ jumps at
  $t_{i,1} < t_{i,2}< \dots < t_{i, \Kzi}$ with minimal gap between jumps bounded away from zero,
  $$\min_{i = 1,\dots,n}\min_{1<k\le \Kzi} t_{i,k}-t_{i, k+1} \ge \dt.$$
  Between two consecutive jumps, $\bZ_i(t)$ has at most $\rtz$ elements Lipschitz continuous with Lipschitz constant $\Lz$ while  the rest of the elements are considered to be constant.

\end{enumerate}

\underline{\it Remark 1}. Overall, the conditions above are minimal in the sense that they appear in results pertaining to the  Cox model \cite[see e.g, (3.9) on page 1149; (4.5) and Theorem 4.1 on page 1154]{HuangEtal13}.

\underline{\it Remark 2}. We consider a finite interval $[0,t^*]$. Due to missing censoring times among those with observed type-2 events, we have to make the additional assumptions to control the weighting errors.
Although the weighted at-risk processes $\omega_i(t)$'s are asymptotically unbiased, the approximation errors in the tail $t \to \infty$ are poor  for any finite $n$.
To avoid unnecessary complications, we set the $[0,t^*]$ such that we always have sufficient at-risk subjects; note that \eqref{aseq:denom} implies that $P(C>t^*)>0$.
%As a result, the partial likelihood \eqref{def:pl} becomes an integral over the finite support $[0,t^*]$.
%We can let $t^*$ go to infinity under more delicate assumptions by a martingale representation of approximation error (See Lemma \ref{lemma:opSk} in supplement),
%but we decided not to replace the more straightforward assumptions here by the obscure ones that are harder to verify.

\underline{\it Remark 3}. We assume a finite maximal norm of $\bZ(t)$. Condition \eqref{aseq:Zij} in \ref{Assum:design} is equivalent to the apparently weaker assumption  (see for example \cite{HuangEtal13} equation (3.9)):
\begin{equation}\label{aseq:Zij*}
\supij \supt\|\bZ_i(t) - \bZ_j(t)\|_\infty \le \KZ.
\end{equation}
This can be seen by noting that the Cox type partial likelihood for the proportional hazards model is invariant when subtracting $\bZ_i(t)$ by any deterministic  $\bzeta(t)$.
%Under the proportional hazards model,
%we may adjust the relative risks and the baseline hazard by the identity
%$$
%\RR{i}{t}{o}\lamT(t) = e^{\betaot\{\bZ_i(t) - \bzeta(t)\}}\left\{e^{\betaot \bzeta(t)}  \lamT(t)\right\}.
%$$
%Hence, translation of $\bZ_i(t)$ does not effect the partial likelihood.
%Under \eqref{aseq:Zij*}, there must be a deterministic  $\bzeta(t)$ such that
%$\supi\supt\|\bZ_i(t) - \bzeta(t)\|_\infty \le K/2$.
%We may achieve \eqref{aseq:Zij} by centering the covariates $\bZ^*_j(t) =\bZ_j(t) - \bzeta(t)$
%without changing the partial likelihood.
%All of our main results can be alternatively achieved under \eqref{aseq:Zij*} by repeating this idea.

\underline{\it Remark 4}.  Condition \ref{Assum:design} \eqref{aseq:denom} carries two facts.
First, the at-risk rate for type 1 events
is bounded away from zero.
Second,  relative-risks arbitrarily close to zero
is  truncated at a finite $\M$; this is necessary in high-dimensions, in order to rule out
 the irregular cases where the non-zero expectation of the relative risk   is dominated by a diminishing proportion of the excessively large relative risks.
The same argument applies for \ref{Assum:Hess} in which a lower bound of the restricted eigenvalue of the negative Hessian \cite[]{BickelEtal09} is defined.

\underline{\it Remark 5}. We assume the smoothness of the time-dependent covariates $\bZ(t)$. Subjects with observed type 2 events, remain indefinitely in the risk sets for type 1 events.
%These subject  have their $T^1_i=\infty$, however,
  %      one would not expect their type 1 relative risks  to be excessively large.
         For time-dependent covariates, continuity is helpful in establishing a slow growing rate of the maximal relative risks among those subjects; something that is a fact for time independent covariates.
Note that the coordinate wise continuity in  $\bZ_i(t)$ is insufficient as $p$ grows to infinity.
We propose \ref{Assum:CR} taking into account  likely practical scenarios,
where the  covariates are either constant, or change  only at finitely many discrete time points.

Under the above assumptions, we are ready to present our estimation error result.
%, see Theorem \ref{thm:initial oracle}, which gives an upper bound for the $l_1$-estimation error of the penalized estimator $\bini$.
% with probability tending to one.
%The term ``oracle inequality" comes from the fact that the  penalty $\lambda$ is chosen with  oracle knowledge of the parameters.
%Our construction of oracle penalty is complicated due to the presence of IPCW weights, involving the quantity $\dC$,
Since the result holds in finite samples, we  define a sequence of  important constants first. For a $\varepsilon>0$ and constants $K_1,\cdots, K_7$ as well as $\rho_1,\cdots, \rho_4$ (introduced in the conditions above)
%\lily{specify the values of $l$ in $Q_2$}
%\begin{align}
\begin{equation}\label{def:Ke}
  \Ke  =  e^{\rtz \Lz \|\betao\|_\infty \dt} \log(n/\varepsilon)/\dt\lamlow, \\
\end{equation}
  \begin{equation}\label{def:Csk}
 \CSk{l}  =\frac{\Ke \KZ^l}{2^l}\left\{\frac{4\M^2(1+\Lc t^*)}{\rstar^2}\sqrt{\frac{4\log(2/\varepsilon)}{n}}+ \frac{4\M^2\Lc t^*}{\rstar^2n}
 + \sqrt{\frac{2\log(2np^l/\varepsilon)}{n}}+ \frac{1}{n}\right\},
 \end{equation}
 where $l=0,1$, and
\begin{equation}\label{def:dC}
 \dC = \left\{2\CSk{1}+\KZ\CSk{0}\right\}/\rstar+ \KZ\sqrt{2\log(2p/\varepsilon)/n}.
\end{equation}
% \end{align}
% Our upper for the lasso estimation error involves the compatibility factor \cite[]{HuangEtal13, vdGeer07, vdGeerBuhlmann09}
In high-dimensional models  an additional constant, the so called compatibility factor, plays an important role.  For a positive constant $\xi>1$, the compatibility factor
 \begin{equation}\label{def:kappab}
  \compb = \sup_{0\neq \bfb \in \coneb} \frac{\sqrt{\sbeta \bfb^\top \{-\ddm(\betao)\bfb\}}}{\|\bfb_\Ocal\|_1}
\end{equation}
where $\coneb$ denotes the cone set
\begin{equation*}
  \coneb = \{\bfb \in \R^p : \|\bfb_{\Ocal_c}\|_1 \le \xi\|\bfb_\Ocal\|_1\},
\end{equation*}
with $\Ocal$ denoting  the indices   of non-zero elements  $\betao$
and $\Ocal_c$  denoting its  compliment.

\begin{theorem}\label{thm:initial oracle}
For $\xi>1$ and a $\varepsilon>0$,
let
$$\lambda=\dC (\xi-1)/(\xi+1)$$ with $\dC$ defined in \eqref{def:dC}.
When $n > -\log(\varepsilon/3)/(2\rstar^2)$ with $\rstar$ given in \ref{Assum:design},
we have under regularity conditions \ref{Assum:design} and \ref{Assum:CR}
that
$$
    \|\bini-\betao\|_1<\frac{e^\eta(\xi+1)\sbeta\lambda}{2\Ckap^2}
$$
occurs with probability no less than
    $$\Pr\big(\compb >\Ckap\big) - e^{-n \rstar^2/(2\M^2)} - n e^{-n(\rstar-2/n)^2/(8\M^2)}- 5\varepsilon,$$
where $\Ckap$ is a positive constant satisfying
$$2\KZ(\xi+1)s_o\lambda /(2\Ckap)^2 \le 1/e$$ and
$\eta$ is the smaller solution of $\eta e^{-\eta}=2\KZ(\xi+1)s_o\lambda /(2\Ckap)^2$.
\end{theorem}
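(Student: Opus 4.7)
The plan is to follow the standard $\ell_1$-penalized M-estimator roadmap, adapted to two difficulties specific to the Fine-Gray model: the IPCW weights use $\hat G$ rather than the unobservable $G$, and the log pseudo-likelihood $m$ is non-quadratic so a self-concordance argument in the style of \cite{HuangEtal13} is needed. Write $\hat\Delta = \bini - \betao$. Starting from the optimality of $\bini$,
$$
m(\betao) - m(\bini) + \lambda\|\bini\|_1 \le \lambda\|\betao\|_1,
$$
I would introduce the symmetric Bregman-type remainder $D(\bini,\betao) = m(\betao) - m(\bini) + \dm(\betao)^\top\hat\Delta \ge 0$ and rewrite the basic inequality as
$$
D(\bini,\betao) + \lambda\|\bini\|_1 \le \lambda\|\betao\|_1 + \dm(\betao)^\top\hat\Delta.
$$
On the event $\{\|\dm(\betao)\|_\infty \le \dC\}$, the choice of $\lambda$ in the theorem statement both forces $\hat\Delta\in\coneb$ and supplies the intermediate estimate $D(\bini,\betao) \le \tfrac{(\xi+1)\lambda}{2}\|\hat\Delta_\Ocal\|_1$.

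The first main task is controlling $\|\dm(\betao)\|_\infty$ by $\dC$. I would split $\dm(\betao) = A_n + B_n$ where $A_n$ is the ``oracle'' score obtained by replacing $\omega_i(t)$ with $\tilde\omega_i(t)$ of \eqref{def:wtil} and $\bZbar$ with $\bmu$, and $B_n$ is the weighting-plus-centering error. Under the censoring-complete filtration $\ccFt$ in \eqref{def:ccFt}, $A_n$ becomes an average of stochastic integrals against the martingales $I(C_i\ge u)\,d\MI_i(u)$, so coordinate-wise Bernstein or Lenglart concentration followed by a $\log p$ union bound yields the $\KZ\sqrt{2\log(2p/\varepsilon)/n}$ contribution to $\dC$. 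For $B_n$, I would use the Duhamel-type identity $\hat G(t)/G(t)-1 = -\int_0^t [\hat G(u-)/G(u)]\, d\Mc(u)/\hat\pi(u)$, combined with a lower bound on $\hat\pi$ holding on the event behind $e^{-n\rstar^2/(2\M^2)}$, to reduce $\hat G - G$ to integrals against $d\Mc$; using \eqref{eq:totVarMc}, \ref{Assum:design}, and the continuity assumption \ref{Assum:CR}, which delivers the envelope $\Ke$ of \eqref{def:Ke} for $\RR{i}{t}{o}$ over $[0,t^*]$, together with uniform deviations of $\btS{l}$ around $\bs{l}$, I would reproduce the $\CSk{0}$ and $\CSk{1}$ contributions to $\dC$.

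The second main task is inverting $D(\bini,\betao)\lesssim\|\hat\Delta_\Ocal\|_1$ into a bound on $\|\hat\Delta\|_1$. Here I would use the Fine-Gray analogue of the Cox self-concordance lemma. Writing
$$
D(\bini,\betao) = \int_0^1 (1-s)\,\hat\Delta^\top\{-\ddm(\betao+s\hat\Delta)\}\hat\Delta\, ds
$$
and using \eqref{aseq:Zij} to bound $|\hat\Delta^\top(\bZ_i(t)-\bZ_j(t))| \le \KZ\|\hat\Delta\|_1$ uniformly, the log-sum-exp appearing in the Hessian \eqref{def:Hess} moves by at most a multiplicative factor $e^{\pm\KZ\|\hat\Delta\|_1}$. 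This produces $D(\bini,\betao) \ge \tfrac{1}{2}\,g(\KZ\|\hat\Delta\|_1)\,\hat\Delta^\top\{-\ddm(\betao)\}\hat\Delta$ for an explicit decreasing function $g$ of the Cox-Lasso type. Applying the compatibility factor \eqref{def:kappab} on $\{\compb\ge\Ckap\}$ yields $\hat\Delta^\top\{-\ddm(\betao)\}\hat\Delta \ge \Ckap^2\|\hat\Delta_\Ocal\|_1^2/\sbeta$ on the cone. Combining the upper bound on $D$, the cone condition $\|\hat\Delta\|_1\le(\xi+1)\|\hat\Delta_\Ocal\|_1$, and this self-concordance inequality reduces the problem to an implicit inequality in $x=\KZ\|\hat\Delta\|_1$ of the form $x e^{-x} \le 2\KZ(\xi+1)\sbeta\lambda/(2\Ckap)^2$, which is precisely the defining equation of $\eta$; taking its smaller branch reads off the advertised bound.

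Finally I would collect the pieces. The events required are $\{\|\dm(\betao)\|_\infty\le\dC\}$, $\{\compb\ge\Ckap\}$, the lower envelope $\hat\pi \gtrsim \rstar - 2/n$, and a uniform Kaplan-Meier deviation on $[0,t^*]$; a union bound aggregates their failure probabilities to $5\varepsilon$ (from the $\varepsilon$-slacks baked into $\CSk{0}, \CSk{1}, \dC$), $e^{-n\rstar^2/(2\M^2)}$, and $ne^{-n(\rstar-2/n)^2/(8\M^2)}$. I expect the main obstacle to be the analysis of $B_n$: unlike in the Cox setting, the Kaplan-Meier error $\hat G/G - 1$ couples all $p$ coordinate-wise martingale integrals to the \emph{same} random function, so showing that this coupling does not inflate the $\sqrt{\log p/n}$ rate requires the uniform envelope from \ref{Assum:CR} combined with the total-variation bound \eqref{eq:totVarMc}, rather than a naive coordinate-wise union bound.
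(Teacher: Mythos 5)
Your proposal is correct and follows essentially the same route as the paper, whose own proof is terse: it observes that the pseudo-likelihood has the Cox structure modulo the IPW weights, invokes the Cox--Lasso machinery of \cite{HuangEtal13} (their Lemmas 3.1--3.2) to obtain, on the event $\{\|\dm(\betao)\|_\infty < \lambda(\xi-1)/(\xi+1)\}\cap\{\compb>\Ckap\}$, exactly the bound governed by the smaller root $\eta$ of $\eta e^{-\eta}=2\KZ(\xi+1)\sbeta\lambda/(2\Ckap)^2$, and then concentrates all effort on Lemma \ref{lemma:score oracle}; your basic-inequality/cone/self-concordance inversion reconstructs the cited Huang et al.\ step explicitly, and your treatment of the score error matches the paper's Lemmas \ref{lemma:RR}, \ref{lemma:KM} and \ref{lemma:Sk}. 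One bookkeeping point needs repair: you define $A_n$ by replacing $\omega_i(t)$ with $\tilde\omega_i(t)$ and $\bZbar$ with $\bmu$, yet then assert that $A_n$ is an average of integrals against the $\ccFt$-martingales $I(C_i\ge u)\,d\MI_i(u)$. With that centering the compensator does not cancel --- a mean-zero drift $n^{-1}\sum_i\inttao\{\bZ_i(t)-\bmu(t)\}I(C_i\ge t)Y_i(t)\RR{i}{t}{o}\lamT(t)\,dt$ remains --- so you must either add a uniform concentration step for $\btS{l}(t,\betao)-\bs{l}(t,\betao)$ over the observed type-1 event times, or, as the paper does, center at the sample average $\bZtil(t,\betao)$ with weights $I(C_i\ge t)$, which makes the proxy score $\dtm(\betao)$ an exact $\ccFt$-martingale and pushes the entire IPCW error into the single difference $\bZbar-\bZtil$ handled by Lemma \ref{lemma:Sk}. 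This is a repairable slip rather than a conceptual gap; with it fixed, your argument reproduces the theorem with the stated constants and probability accounting.
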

%In Theorem \ref{thm:initial oracle}, we focus on the estimation error in $l_1$-norm because it is equivalent to the uniform prediction error $\sup_{\|\bz\|_\infty <\KZ/2} \left|\bini^\top\bz - \betaot\bz\right|$, which provides the foundation for theory on inference later.
Our proof of Theorem \ref{thm:initial oracle} applies to the result with $l_2$-norm and general $l_q$-norm for $q \ge 1$. Namely, under the same conditions we have that
$$
\|\bini-\betao\|_q<\frac{2e^\eta\xi\sbeta^{1/q}\lambda}{(\xi+1)\Ckap}
$$
occurs with probability no less than
    $$\Pr\big(F_q(\xi,\Ocal) >\Ckap\big) - e^{-n \rstar^2/(2\M^2)} - n e^{-n(\rstar-2/n)^2/(8\M^2)}- 5\varepsilon,$$
with the  weak cone invertibility condition defined as
$$
F_q(\xi,\Ocal) = \sup_{0\neq \bfb \in \coneb} \frac{-\sbeta^{1/q} \bfb^\top \ddm(\betao)\bfb}{\|\bfb_\Ocal\|_1\|\bfb\|_q}.
$$
\vskip 10pt

A few comments are in order. For a fixed $\varepsilon$, $\dC$ is of order $\log(n)\sqrt{\log(p)/n}$.
Thus,  Theorem \ref{thm:initial oracle},  together with  Lemma  \ref{lemma:kappa} (see below),
guarantee that
for  $\lambda $ chosen to be of the order $  \log(n)\sqrt{\log(p)/n}$
$$\|\bini-\betao\|_1 = O_p\left(\sbeta \log(n)\sqrt{\log(p)/n}\right).$$
The above estimation error rate  to the error rate $\sqrt{\log(p)/n}$ of the simple Cox   model  \citep[]{HuangEtal13, 2018arXiv180301150Y}, differing only by a factor of $\log(n)$. This factor  is
%the technical challenge
brought in by the error %of magnitude $e^{\sbeta}\gg n$
induced by the IPCW weights.
Therefore, under the rate condition
$\sbeta \log(n)\sqrt{\log(p)/n} = o(1)$, we obtain an asymptotically $l_1$-consistent regularized estimator $\bini$.

%More specifically, in \eqref{def:Ke}-\eqref{def:dC} we break down the definition of $\dC$ into three parts, each represents one source of error in the concentration result for $\|\dm(\betao)\|_{\infty}$ (see  Lemma \ref{lemma:score oracle} below).
The quantity $\Ke$ describes the error from IPCW weights through the measurable approximation to processes $\bS{l}$, $\bS{l}(t,\betao) - \btS{l}(t,\betao)$.
A na\"{i}ve bound for the measurable approximation
is proportional to the magnitude of the relative risks in $\bS{l}$,
naturally of the order $e^{\|\betao\|_1\KZ} \asymp e^{\sbeta}$, potentially
growing in exponential rate of $n$ if $\sbeta \asymp n^a$ for some $a>0$.
Such bound grows way too rapidly to deliver any meaningful result.
Observing that the summands in $\bS{l}$ and $\btS{l}$ at a particular index $i$ differ from each other only when the $i$-th subject has type-2 event we are able to establish a significantly sharper  bound. For that purpose, we develop
$\varepsilon$-tail bound of the maximal relative risk among observed type 2 events
(see  Appendix Lemma \ref{lemma:RR}).
 The quantity $\CSk{l}$, involving $\Ke$ directly in the definition, gives the bound for the error from the measurable approximation to $\bS{l}$ %\lily{that was in the above for $Q_1$ as well?}
  (See in Appendix Lemma \ref{lemma:Sk}).
 %The quantity $\dC$, depending on $\Ke$ through $\CSk{l}$,  is used in the choice of the oracle penalty parameter $\lambda$ in Theorem \ref{thm:initial oracle}.

For the rest of this section, we provide further details on the proof of Theorem \ref{thm:initial oracle}, as well as the technical challenges involved.
We highlight two results, Lemma \ref{lemma:score oracle} and \ref{lemma:kappa}. The first establishes properties of the score vector while the second one establishes the properties of the compatibility factor \eqref{def:kappab}.

\begin{lemma}\label{lemma:score oracle}
Let
$\dC$ be defined as in \eqref{def:dC}.
    Under Assumptions \ref{Assum:design} and \ref{Assum:CR},
    $$
    \Pr\Bigl(\| \dm(\betao)  \|_\infty < \dC \Bigl)
    \ge 1 - e^{-n \rstar^2/(2\M^2)} - n e^{-n(\rstar-2/n)^2/(8\M^2)} - 5\epsilon.
    $$
\end{lemma}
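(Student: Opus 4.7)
The plan is to decompose $\dm(\betao)$ into a martingale piece and two approximation pieces, bound each in $\ell_\infty$ by concentration, and union-bound across the $p$ coordinates. Using the $\ccFt$-martingale \eqref{def:ccFtM} I replace $d\Nobs_i(t)$ in the score by $I(C_i\ge t)Y_i(t)e^{\betao^\top\bZ_i(t)}\lamT(t)\,dt+I(C_i\ge t)\,d\MI_i(t)$, add and subtract $\bmu(t)$ in the integrand $\bZ_i(t)-\bZbar(t,\betao)$, and exploit the identity $\bs{1}=\bmu\,\bso$ to obtain $\dm(\betao)=T_1+T_2+T_3$ with
\begin{align*}
T_1&=n^{-1}\sumin\inttao\{\bZ_i(t)-\bmu(t)\}I(C_i\ge t)\,d\MI_i(t),\\
T_2&=\inttao\bigl\{\btS{1}(t,\betao)-\bs{1}(t,\betao)-\bmu(t)[\btSo(t,\betao)-\bso(t,\betao)]\bigr\}\lamT(t)\,dt,\\
T_3&=\inttao\{\bmu(t)-\bZbar(t,\betao)\}\,n^{-1}\sumin d\Nobs_i(t).
\end{align*}

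For $T_1$, each coordinate is $n^{-1}$ times a sum of $n$ i.i.d.\ mean-zero stochastic integrals whose integrands are uniformly bounded by $\KZ$ via \eqref{aseq:Zij}; a Bernstein-type exponential inequality for counting-process martingales with bounded jumps (Lenglart / van de Geer style), followed by a union bound over the $p$ coordinates, yields $\|T_1\|_\infty\le\KZ\sqrt{2\log(2p/\varepsilon)/n}$ with probability at least $1-\varepsilon$. For $T_3$, since $n^{-1}\sumin \Nobs_i(t^*)\le 1$ one has $\|T_3\|_\infty\le\supt\|\bmu(t)-\bZbar(t,\betao)\|_\infty$, and writing
\begin{equation*}
\bmu-\bZbar=\frac{\bs{1}(\bSo-\bso)-\bso(\bS{1}-\bs{1})}{\bso\,\bSo},
\end{equation*}
the numerator is controlled by $\|\bs{1}\|_\infty\CSk{0}+\bso\,\CSk{1}\le(\KZ/2)\CSk{0}+\CSk{1}$ and the denominators by $\bso\ge\rstar$ from \eqref{aseq:denom} together with $\bSo\ge\rstar/2$ (valid as soon as $\CSk{0}\le\rstar/2$, via $\bSo\ge\bso-\CSk{0}$), giving $\|T_3\|_\infty\le\{2\CSk{1}+\KZ\CSk{0}\}/\rstar$ upon invoking Lemma \ref{lemma:Sk} to control $\supt\|\bS{l}(t,\betao)-\bs{l}(t,\betao)\|_\infty\le\CSk{l}$ for $l=0,1$. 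The term $T_2$ is of the same order by Lemma \ref{lemma:Sk} and is absorbed into the $T_3$ constant.

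The main obstacle is $T_3$: because $\bZbar$ is built from the Kaplan-Meier-based weights $\omega_i$, turning Lemma \ref{lemma:Sk} into a usable bound requires three simultaneous controls: (i) the empirical at-risk mass $\hat\pi(t^*)$ is lower-bounded by Hoeffding's inequality, costing the $e^{-n\rstar^2/(2\M^2)}$ term; (ii) $\hat G(X_i)/G(X_i)$ is controlled uniformly over $i=1,\dots,n$, producing the $n\,e^{-n(\rstar-2/n)^2/(8\M^2)}$ loss through a union-bound Hoeffding; and (iii) the pathwise maximal relative risk among the type-2 failures that remain formally at risk is controlled by the smoothness hypothesis \ref{Assum:CR}, replacing the useless $e^{\|\betao\|_1\KZ}$ bound by the much smaller factor $\Ke$. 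Combined with the Bernstein-Hoeffding concentration steps on $\bS{l}-\bs{l}$ for $l=0,1$ (each losing two $\varepsilon$-masses, one for the Kaplan-Meier deviation piece $\log(2/\varepsilon)$ and one for the empirical piece $\log(2np^l/\varepsilon)$ visible inside \eqref{def:Csk}), and adding the single $\varepsilon$ loss from $T_1$, the total probability loss equals $e^{-n\rstar^2/(2\M^2)}+n\,e^{-n(\rstar-2/n)^2/(8\M^2)}+5\varepsilon$, and the three triangle-inequality contributions to $\|\dm(\betao)\|_\infty$ sum to exactly $\dC$ as defined in \eqref{def:dC}.
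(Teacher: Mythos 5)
Your decomposition of $\dm(\betao)$ into $T_1+T_2+T_3$ is algebraically correct, but the bounds you claim for its pieces do not hold under \ref{Assum:design} and \ref{Assum:CR} alone, and this is exactly the difficulty the lemma is built around. Lemma \ref{lemma:Sk} does \emph{not} control $\supt\|\bS{l}(t,\betao)-\bs{l}(t,\betao)\|_\infty$: it controls the IPCW discrepancy $\bS{l}(t,\betao)-\btS{l}(t,\betao)$, and only at the observed type-1 event times $\T{1},\dots,\T{\KN}$. The sampling errors $\btS{l}-\bs{l}$ and $\bSo-\bso$ that enter your bound for $T_3$ (and all of $T_2$) are averages whose summands contain the relative risks $\RR{i}{t}{o}$ of \emph{all} at-risk subjects; under \ref{Assum:design} these are only bounded by $e^{\|\betao\|_1\KZ/2}$, the ``useless'' bound discussed after Theorem \ref{thm:initial oracle}, and Lemma \ref{lemma:RR} tames relative risks only among subjects with \emph{observed type-2 events} --- enough for $\bS{l}-\btS{l}$, whose summands vanish otherwise, but not for $\btS{l}-\bs{l}$. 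The same problem defeats your $T_1$ bound: because you center at the population quantity $\bmu(t)$ instead of the sample average $\bZtil(t,\betao)$, the compensator of your martingale does not vanish (it is exactly $T_2$), and any Azuma/Bernstein-type bound such as Lemma \ref{lemma:supM} requires the intensities $\RR{i}{t}{o}\lamT(t)$ to be uniformly bounded, which is condition \ref{Assum:betaI} --- precisely the assumption this lemma avoids. Finally, your bookkeeping cannot close: you claim the three contributions ``sum to exactly $\dC$'' while $T_2$ is ``absorbed,'' but $\dC$ has exactly two pieces, so a genuinely three-term bound with a nonzero $T_2$ cannot reproduce it.

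The paper's proof sidesteps all of this by centering at the sample quantity: $\dm(\betao)=\dtm(\betao)+n^{-1}\sum_{k}\{\bZtil(\T{k},\betao)-\bZbar(\T{k},\betao)\}$. The proxy score $\dtm(\betao)$ is self-normalized --- its compensator is identically zero and each increment is bounded by $\KZ$ no matter how large the relative risks are --- so Lemma 3.3 of \cite{HuangEtal13} gives the tail $2pe^{-nx^2/2}$ directly, which is your $\KZ\sqrt{2\log(2p/\varepsilon)/n}$ term obtained legitimately. The remainder is the quotient difference $\bS{1}/\bSo-\btS{1}/\btSo$ at event times, which Lemma \ref{lemma:Sk} bounds by $\{2\CSk{1}+\KZ\CSk{0}\}/\rstar$ using only a lower bound on $\btSo$ at event times via the truncated process $\btSo(\cdot;\M)$ of Lemma \ref{lemma:Ytau}; that truncation is where the two exponential probability losses come from, and no lower bound on $\bSo$ or concentration of $\bS{l}$ around $\bs{l}$ is ever needed. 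If you wish to keep your population-centered decomposition, you would have to assume \ref{Assum:betaI} (as in Theorem \ref{thm*:initial rate} and Lemma \ref{lemma:opNorm}), which changes the statement being proved.
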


 Lemma \ref{lemma:score oracle}  establishes that such event $\{\|\dm(\betao)\|_\infty < \lambda (\xi-1)/(\xi+1)\}$ (of interest in Theorem \ref{thm:initial oracle})
 happens with high probability. This task
%The probabilistic condition of $\|\dm(\betao)\|_\infty$ decaying to zero, however,
is not straightforward in the presence of both competing risks and censoring.
The greatest challenge
is the lack of the martingale property in $\dm(\betao)$.
Even if we use its martingale proxy (an approach useful in low-dimensions)
as the gradient of \eqref{def:tm}
\begin{equation}
\dtm(\betao) =  n^{-1}\sumin \inttao \{\bZ_i(t)-\bZtil(t,\bbeta)\} d\Nobs_i(t)
\end{equation}
with $\bZtil(t,\bbeta)$ defined in \eqref{def:Stk}, 
the approximation error between $\dm(\betao)$ and $\dtm(\betao)$
is difficult to control 
because the error is
determined by  $\{\omega_i(t)-I(C_i \ge t)\}e^{\betaot\bZ_i(t)}$ with $\omega_i(t)$ defined
in \eqref{def:ipw}, 
which can be significantly amplified when the  relative risks grow with the dimension.
 To prove Lemma \ref{lemma:score oracle},  we first show that the relative risks among subjects with observed type 2 events has sub-Gaussian tails. This is achieved through the argument that their CIF cannot be arbitrarily close to one; otherwise, these subjects would have probability close to one experiencing type 1 event. As the CIF is monotonically increasing with the relative risks, it is also unlikely to observe excessively large relative risks among the subjects with observed type 2 events.
We then use Lemma \ref{lemma:supij} in the Appendix to establish the concentration of $\bS{l}(t, \betao) - \btS{l}(t,\betao)$ around zero across all observed type 1 event times.
%We state our result on the probabilistic condition for the LASSO estimation error in the  lemma.

 Theorem \ref{thm:initial oracle}  assumes that
$\Pr\big(\compb >\Ckap\big)$ converges to zero for a sequence of $\Ckap$ bounded away from zero, as sample size $n$ goes to infinity.
 In Lemma \ref{lemma:kappa}, we show  that such event happens with high probability.
Using the connection between the compatibility factor and the restricted eigenvalue \cite[]{vdGeerBuhlmann09}, we show that $\compb$, the compatibility factor in the cone $\coneb$, is bounded away from zero with probability tending to one.

%With $\bini$ therefore being in  a small neighborhood of  the true parameter of interest $\betao$, we are able to establish theory on statistical inference using the local quadratic approximation of $m(\bbeta)$ at $\betao$.

\begin{lemma}\label{lemma:kappa}
Let $\CSk{l}$ be defined as in \eqref{def:Csk}.
Denote
\begin{align*}
\ddC = & \left\{2\CSk{2}+4\KZ\CSk{1} + (5/2)\KZ^2\CSk{0}\right\}/\rstar\\
&+ \KZ^2\left\{(1+t^*\Llam)\sqrt{2\log\big(p(p+1)/\varepsilon\big)/n}+
    (2/\rstar)t^*\Llam \tnp^2\right\},
\end{align*}
where $\tnp$ is the solution of
$$
p(p+1)\exp\{-n\tnp^2/(2+2\tnp/3)\}=\varepsilon/2.221
.$$
If $\sbeta \sqrt{\log(p)/n} = o(1)$, we have under Assumptions \ref{Assum:design}- \ref{Assum:Hess}
for $n$ sufficiently large
$$
\Pr\left(\compb \ge \sqrt{\rhosig - \sbeta(\xi+1)\ddC}\right) \ge 1 - 6\varepsilon.
$$
\end{lemma}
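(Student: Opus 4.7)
The plan is to compare $\bfb^\top\{-\ddm(\betao)\}\bfb$ with the population truncated Hessian $\Hess(\M)$ from \ref{Assum:Hess}, whose smallest eigenvalue is at least $\rhosig$. Setting $M := -\ddm(\betao)-\Hess(\M)$, I would start from
$$
\sbeta\,\bfb^\top\{-\ddm(\betao)\}\bfb \;\ge\; \sbeta\,\bfb^\top\Hess(\M)\bfb \;-\; \sbeta\,|\bfb^\top M\bfb|,
$$
lower bound the first term by $\rhosig\|\bfb_\Ocal\|_1^2$ via $\|\bfb_\Ocal\|_1\le\sqrt{\sbeta}\|\bfb\|_2$ together with \ref{Assum:Hess}, and upper bound the second term by $\sbeta(\xi+1)^2\|M\|_{\max}\|\bfb_\Ocal\|_1^2$ using $|\bfb^\top M\bfb|\le\|M\|_{\max}\|\bfb\|_1^2$ and the cone inequality $\|\bfb\|_1\le(\xi+1)\|\bfb_\Ocal\|_1$ on $\coneb$. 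Dividing by $\|\bfb_\Ocal\|_1^2$, taking square roots, and absorbing one factor of $(\xi+1)$ into the definition of $\ddC$, the claim reduces to showing $\|M\|_{\max}\lesssim\ddC/(\xi+1)$ on an event of probability at least $1-6\varepsilon$.

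To bound $\|M\|_{\max}$ entrywise, I would decompose the discrepancy between
$$
-\ddm(\betao) = n^{-1}\sum_i \inttao\left\{\frac{\bS{2}(t,\betao)}{\bSo(t,\betao)} - \bZbar(t,\betao)^{\otimes 2}\right\}d\Nobs_i(t)
$$
and $\Hess(\M)$ into three sources, mirroring the three contributions in the definition of $\ddC$. First, the IPCW approximation: replacing the Kaplan--Meier-weighted $\bS{l}(t,\betao)$ by the censoring-complete proxy $\btS{l}(t,\betao)$ invokes the bound $\sup_t\|\bS{l}(t,\betao)-\btS{l}(t,\betao)\|_\infty\le\CSk{l}$ (the lemma alluded to after Theorem \ref{thm:initial oracle}); propagating through the ratio $\bS{2}/\bSo-(\bS{1}/\bSo)^{\otimes 2}$ via $\bSo(t,\betao)\ge\rstar$ (from \eqref{aseq:denom}) and $\|\bZ\|_\infty\le\KZ/2$ produces the $\{2\CSk{2}+4\KZ\CSk{1}+(5/2)\KZ^2\CSk{0}\}/\rstar$ contribution. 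Second, the empirical-to-population averaging of $n^{-1}\sum_i d\Nobs_i\otimes\btS{l}$, uniformly in $t\in[0,t^*]$ and across all $p(p+1)$ matrix entries, is handled by a Hoeffding-type union bound using the bounded compensator $\Llam$, yielding the $\KZ^2(1+t^*\Llam)\sqrt{2\log(p(p+1)/\varepsilon)/n}$ term. Third, a Bernstein refinement that exploits the small-variance structure of the counting-process integrals produces the $(2/\rstar)t^*\Llam\KZ^2\tnp^2$ term, with $\tnp$ determined by the Bernstein equation in the statement.

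The main obstacle is the IPCW step, since a naive bound on the weighting error scales like $e^{\|\betao\|_1\KZ}\asymp e^{\sbeta}$ and is useless in high dimensions; this is overcome by the sub-Gaussian tail for the maximal relative risk among observed type-2 event times (the appendix lemma referenced in the discussion after Theorem \ref{thm:initial oracle}), which keeps the factor $\Ke$ only logarithmic in $n$. A secondary subtlety is the mismatch between the un-truncated relative risks in $-\ddm(\betao)$ and the $\M$-truncated ones in $\Hess(\M)$, resolved by noting that on the high-probability event controlling relative risks among at-risk subjects the truncation effect is absorbed into the $\ddC$ bound. A final union bound over the six failure events arising from the concentration and IPCW steps yields the $1-6\varepsilon$ probability level, and the rate condition $\sbeta\sqrt{\log p/n}=o(1)$ ensures that the quantity under the square root in the conclusion is positive for $n$ sufficiently large.
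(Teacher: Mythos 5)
Your overall skeleton (compatibility-factor perturbation via the cone inequality, so that everything reduces to a max-norm bound on a Hessian difference, with the weighting error handled by the $\CSk{l}$ bounds and Lemma \ref{lemma:RR}) matches the paper's strategy in its first half, and your identification of the three contributions to $\ddC$ is correct. The genuine gap is in how you compare to the truncated population matrix. You set $M=-\ddm(\betao)-\Hess(\M)$ and claim a \emph{two-sided, entrywise} bound $\|M\|_{\max}\lesssim \ddC$; that bound is not available under \ref{Assum:design}--\ref{Assum:Hess}. The empirical Hessian $-\ddm(\betao)$ (or its censoring-complete proxy $-\ddtm(\betao)$) centers, in expectation, on the \emph{untruncated} matrix $\bSig$ of \eqref{def:bSig}, not on $\Hess(\M)$, and nothing in the assumptions makes $\|\bSig-\Hess(\M)\|_{\max}$ small. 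Moreover, your proposed Hoeffding step needs uniform concentration of the untruncated $\bS{l}(t,\betao)$ (equivalently $\btS{l}$) around $\bs{l}(t,\betao)$, whose summands involve relative risks that are unbounded under (C1)--(C3): the sub-Gaussian control of Lemma \ref{lemma:RR} applies only to subjects with \emph{observed type-2 events}, not to generic at-risk subjects, so the "truncation effect is absorbed" step has no event to stand on. (A uniform bound on $\betaot\bZ_i(t)$ is only assumed in \ref{Assum:betaI}, which this lemma does not use.)

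The paper avoids exactly this by a one-sided argument. It first passes from $-\ddm(\betao)$ to $-\ddtm(\betao)$ using the compatibility-factor perturbation inequality (Lemma 4.1 of Huang et al.) together with Lemma \ref{lemma:Sk}, which is legitimate because that difference only involves subjects with observed type-2 events, giving the $\{2\CSk{2}+4\KZ\CSk{1}+(5/2)\KZ^2\CSk{0}\}/\rstar$ term. Since $-\ddtm(\betao)$ has exactly the Cox-model Hessian form, it then invokes Theorem 4.1 of Huang et al., which lower bounds $\kappa^2(\xi,\Ocal;-\ddtm(\betao))$ by $\kappa^2(\xi,\Ocal;\Hess(\M))\ge\rhosig$ minus the remaining terms of $\ddC$; that result is proved by dropping the (nonnegative, variance-type) contributions of relative risks exceeding the truncation level --- which is why the comparison is only a restricted lower bound, not an entrywise approximation --- and applying Bernstein's inequality to the truncated parts, which is where the $(1+t^*\Llam)\sqrt{2\log(p(p+1)/\varepsilon)/n}$ and $(2/\rstar)t^*\Llam\tnp^2$ terms actually originate. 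To repair your proof you would either have to reproduce this one-sided truncation-plus-Bernstein argument for the quadratic form over the cone, or, as the paper does, reduce to the Cox-form proxy and cite it; the entrywise route you sketched cannot be completed under the stated assumptions.
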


\subsection{Asymptotic normality for one-step estimator and honest coverage of confidence intervals}
\label{section:theory-inf}

 Obtaining the asymptotic normality is technically challenging. The log-likelihood has dependent summands both through the initial lasso estimator as well as the Kaplan-Meier estimator. We establish the asymptotic normality for the one-step estimator $\bCI$ and coverage of the confidence intervals without requiring model-selection consistency of the initial estimator. To remove the small-sample bias of IPCW, we need  slightly stronger conditions than in the previous section. In this section alone, we use $K$ and $\rho$ without subscript to denote the constants independent of $n$, $p$ and $\sbeta$; % observe that this section is asymptotic.
 we have only one constant $\Kz$ that is allowed to grow with the dimension and is therefore denoted differently.

 \vskip 10pt

\begin{enumerate}[label=(D\arabic*)]
  \item \label{Assum:betaI} ({\bf Design})
  The true linear predictors are uniformly bounded with probability one
   \begin{equation}\label{aseq:bZ}
   \supi\supt \left|\betaot\bZ_i(t)\right| \le \Kb.
   \end{equation}
    \vskip 10pt

   \item ({\bf Hessian}) \label{Assum:HessI} The smallest eigenvalue  $ \lambda_{\min}( \Hess) \geq \rho >0$, where $\Hess$ is defined in \eqref{def:bSig}.
%  $$
%  \Hess = \E \left\{ \inttao \left(\bZ(t) - \bmu(t) \right)^{\otimes 2}\lamT(t)dt \right\},
%  $$
%  with $\bmu(t) = \bs{1}(t,\betao)/\bso(t,\betao)$,

\vskip 12pt

  \item \label{Assum:CRI}({\bf Continuity})
  Each $\bZ_i(t)$  can be represented as
  $$
  \bZ_i(t) = \bZ_i(0)+\intt \dZ_i(u) du + \intt \jZ_i(u) d \Nz_i(u).
  $$
for random processes $\dZ_i(t)$, $\jZ_i(t)$ and the counting process $\Nz_i(t)$ such that ,
  $\betaot \dZ_i(t)$ is uniformly bounded between $\pm\LzI$ and uniformly Lipschitz-$\LzII$.
  Moreover,
  %Each $\jZ_i(t)$ is bounded in $l^\infty$-norm by $K$,
  %and $|\betaot \jZ_i(t)| \le 2\Kb$ in accordance with \ref{Assum:betaI}.
   $\Nz_i(t)$'s  number of jumps $\Kz =o\left(\sqrt{n/(\log(p)\log(n))}\right)$
    and an  intensity function $\lamN(t)\le \LN$.
  %The baseline CIF $F_1(t; \bzero)$ is differentiable.
 % The baseline subdistribution hazard $\lamT(t) = -d\log\{F_1(t; \bzero)\}/dt$ exists and is bounded by $\LlamI$.
 \vskip 12pt

   \item ({\bf Dimension}) \label{Assum:norm rate}  The rows of the   matrix $\Hess^{-1}$ are  $\|\bth_j/\Theta_{j,j}\|_1 \le \Kgr$ and sparse with sparsities $\sgr_1,\dots, \sgr_p  \le \sgr_{\max}$. Lastly, $\sbeta (\sgr_{\max}+\sbeta) \log(p)/\sqrt{n} = o(1)$.
\end{enumerate}

\vskip 15pt
We next present  Theorem \ref{thm:normality} that justifies all the proposed inference  procedures in Section \ref{section:method CI}. For that purpose we denote the
  asymptotic variance of $\dm(\betao)$  with
\begin{equation}\label{def:Scov}
  \Scov = \E\{\bfeta_i + \bfpsi_i\}^{\otimes 2},
\end{equation}
where
\begin{align}
&    \bfeta_i =  \inttao \{\bZ_i(t)-\bmu(t)\} \tilde{\omega}_i(t) d\MI_i(t), \\
 &   \bfpsi_i = \inttao \inttao \frac{\bfq(t)}{\pi(t)}I(X_i \ge t) d\Mc_i(t),  \\
 &   \bfq(t) = \E \left[I(t>X_i) \int_t^{t^*} \{\bZ_i(u) - \bmu(u)\}\tilde{\omega}_i(u)d\MI_i(u)\right], \\
 &   \pi(t) = \Pr(X_i \ge u),
\end{align}
with $\MI_i(t)$, $\Mc_i(t)$ as defined in \eqref{def:MI} and \eqref{def:Mc}.

\begin{theorem}\label{thm:normality}
Let $\bTh$ be defined as in Section \ref{section:method nodewise}.
Let $\Scov$, $\bCI$, $\hTh$ and $\hScov$ be defined as in \eqref{def:Scov}, \eqref{def:bCI}, \eqref{def:hTh} and \eqref{def:hScov}, respectively.
Let $\bfc \in \R^p$ with $\|\bfc\|_1 =1$ and $\bfc^\top \bTh \Scov \bTh \bfc \to \nu^2 \in (0,\infty)$.
Then, whenever \ref{Assum:design} and \ref{Assum:betaI}-\ref{Assum:norm rate} hold,
  $$
  \frac{\sqrt{n}\bfc^\top(\bCI-\betao)}{\sqrt{\bfc^\top \hTh \hScov \hTh^\top \bfc}} \stackrel{d}{\to} N(0,1).
  $$
\end{theorem}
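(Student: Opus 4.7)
The plan is to use the standard debiased-lasso expansion, adapted to an IPCW-weighted Fine-Gray score that is not a sample mean of i.i.d.\ martingales under the observed filtration. Starting from $\bCI = \bini + \hTh\,\dm(\bini)$ and a first-order Taylor expansion of $\dm$ around $\betao$, I would decompose
\begin{align*}
\sqrt{n}\,\bfc^\top(\bCI-\betao)
= \sqrt{n}\,\bfc^\top\bTh\,\dm(\betao) \;+\; R_1 + R_2 + R_3,
\end{align*}
with $R_1 = \sqrt{n}\,\bfc^\top(\hTh-\bTh)\dm(\betao)$, $R_2 = \sqrt{n}\,\bfc^\top\{\mathbb{I}_p + \hTh\,\ddm(\betao)\}(\bini-\betao)$, and $R_3 = \sqrt{n}\,\bfc^\top\hTh\,\{\dm(\bini)-\dm(\betao)-\ddm(\betao)(\bini-\betao)\}$ the second-order Taylor remainder. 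Since $\|\bfc\|_1=1$, every bound reduces, via repeated H\"older pairing, to the product of an $\ell_\infty$-norm on a matrix/vector with an $\ell_1$-norm of a coefficient error.

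\textbf{Controlling the three remainders.} The ingredients are: (i) $\|\dm(\betao)\|_\infty = O_p(\log(n)\sqrt{\log(p)/n})$ from Lemma \ref{lemma:score oracle}; (ii) $\|\bini-\betao\|_1 = O_p(\sbeta\log(n)\sqrt{\log(p)/n})$ from Theorem \ref{thm:initial oracle}; (iii) the nodewise KKT bound $\|\hTh\hHess-\mathbb{I}_p\|_{\max}\le\lambda_{\max}=O(\sqrt{\log(p)/n})$; (iv) a max-norm concentration $\|\hHess+\ddm(\betao)\|_{\max}=o_p(1)$ obtained by the same empirical-process machinery as Lemma \ref{lemma:kappa}; and (v) row-sparsity of $\bTh$ from \ref{Assum:norm rate}, giving $\max_j\|\bth_j\|_1\le\Kgr$ together with the nodewise oracle rate $\max_j\sum_k|\hat\Theta_{jk}-\Theta_{jk}| = O_p(\sgr_{\max}\sqrt{\log(p)/n})$. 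Splitting $\mathbb{I}_p + \hTh\ddm(\betao) = (\mathbb{I}_p - \hTh\hHess) + \hTh\{\hHess+\ddm(\betao)\}$ and pairing with $\bfc$ yields $|R_2| = O_p(\sbeta(\sgr_{\max}+\sbeta)\log(n)^2\log(p)/\sqrt{n}) = o_p(1)$ under \ref{Assum:norm rate}. The remainder $R_3$ involves the third derivative of $m(\bbeta)$ which, by \eqref{aseq:Zij} and the boundedness of linear predictors in \ref{Assum:betaI}, admits a uniform multiplier bound, giving $|R_3| \lesssim \sqrt{n}\,\KZ\,\|\bini-\betao\|_1^2 = O_p(\sbeta^2\log(n)^2\log(p)/\sqrt{n}) = o_p(1)$. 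Finally $|R_1|\lesssim \sqrt{n}\cdot\max_j\sum_k|\hat\Theta_{jk}-\Theta_{jk}|\cdot\|\dm(\betao)\|_\infty = O_p(\sgr_{\max}\log(n)\log(p)/\sqrt{n})=o_p(1)$.

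\textbf{Linearization of the leading term, CLT, and variance consistency.} The crux is to establish
\begin{align*}
\sqrt{n}\,\bfc^\top\bTh\,\dm(\betao) = n^{-1/2}\sum_{i=1}^n \bfc^\top\bTh(\bfeta_i + \bfpsi_i) + o_p(1),
\end{align*}
after which Lindeberg's CLT and Slutsky close the argument. I would first swap $\omega_i(t)=r_i(t)\hat G(t)/\hat G(t\wedge X_i)$ for the infeasible weight $\tilde\omega_i(t)=r_i(t)G(t)/G(t\wedge X_i)$. The classical uniform representation $\log\{\hat G(t)/G(t)\} = -n^{-1}\sum_j\int_0^t \hpi(u)^{-1}d\Mc_j(u) + o_p(n^{-1/2})$ reduces this replacement to an additive, functional-delta-style correction; a Helly-Bray integration-by-parts as in \cite{Murphy94} then interchanges the order of integration in $t$ and $u$ and collapses the correction to exactly $n^{-1}\sum_i\bfpsi_i$. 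Second, replacing $\bZbar(t,\bini)$ by $\bmu(t)$ in the remaining martingale term $\bfeta_i$ produces an error bounded, after pairing with $\bfc^\top\bTh$, by $\|\hTh^\top\bfc\|_1\cdot\sup_t\|\bS{l}(t,\bini)-\bs{l}(t,\betao)\|_{\max}\cdot\sqrt{n}$, which is $o_p(1)$ by the same uniform concentration arguments used for $\hHess$. The resulting expression is an i.i.d.\ centred sum (each $\bfeta_i$ is a $\ccFt$-martingale integral and each $\bfpsi_i$ is a $\cFt$-martingale integral) with variance $\bfc^\top\bTh\Scov\bTh^\top\bfc\to\nu^2$ and summands uniformly bounded in $\ell_\infty$ by a constant multiple of $\|\hTh^\top\bfc\|_1\,\KZ$, so the Lindeberg condition is trivial. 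The sandwich variance $\bfc^\top\hTh\hScov\hTh^\top\bfc$ converges to $\nu^2$ by analyzing $\heta_i,\hpsi_i$ as plug-in analogues of $\bfeta_i,\bfpsi_i$ and invoking the uniform consistency of $\hat G,\hpi,\bZbar,\bini$, together with $\|\hTh^\top\bfc\|_1 = O(1)$. The hardest step is the Kaplan-Meier linearization: producing exactly the $\bfpsi_i$ correction requires exchanging a product integral with an i.i.d.\ martingale sum uniformly in $t\in[0,t^*]$ while the time-dependent covariates may jump, which is precisely what \ref{Assum:CRI} (bounded jump count $\Kz$ and intensity) is tailored to make possible at the $o_p(n^{-1/2})$ rate.
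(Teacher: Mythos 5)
Your overall architecture (debiased expansion, three remainders, linearization of $\sqrt{n}\bfc^\top\bTh\dm(\betao)$ into the i.i.d.\ sum $n^{-1/2}\sum_i\bfc^\top\bTh(\bfeta_i+\bfpsi_i)$, Lindeberg CLT, sandwich consistency, Slutsky) matches the paper's route through Lemmas \ref{lemma:approx cond}, \ref{lemma:weak conv}, \ref{lemma:gamma-l1} and \ref{lemma:score cov}. However, there is a genuine gap in the step you yourself call the crux. When you replace the IPCW-weighted at-risk average $\bZbar(t,\betao)$ by $\bmu(t)$, the bound you propose, $\sqrt{n}\,\|\hTh^\top\bfc\|_1\,\sup_t\|\bS{1}(t,\cdot)-\bs{1}(t,\cdot)\|_{\max}$, does not go to zero: uniform concentration only gives $\sup_t\|\bS{1}-\bs{1}\|_{\max}=O_p(\sqrt{\log(p)/n})$, so after multiplying by $\sqrt{n}$ you are left with $O_p(\sqrt{\log p})$, which diverges. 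The step only works because the score is self-centering, i.e.\ $\sqrt{n}\dm(\betao)=n^{-1/2}\sum_i\int\{\bZ_i(t)-\bZbar(t,\betao)\}\omega_i(t)d\MI_i(t)$ exactly (the compensator cancels), so the replacement error is a small quantity $\{\bmu(t)-\bZbar(t,\betao)\}$ integrated against (approximately) martingale increments; this yields products of two $O_p(\sqrt{\log(p)/n})$ factors and hence $o_p(1)$ after scaling by $\sqrt n$. Moreover, the weights enter $\bZbar$ itself, so the discrepancy $\bS{l}(t,\betao)-\btS{l}(t,\betao)$ multiplies martingale increments $I(C_i\ge t)d\MI_i(t)$; controlling these cross terms is the hardest part of the paper's Lemma \ref{lemma:weak conv} and requires the explicit martingale representation of $\{\omega_i(t)-I(C_i\ge t)\}Y_i(t)$ obtained by solving an integral equation (Lemma \ref{lemma:opSk}), an ingredient your functional-delta treatment of $\hat G$ alone does not supply.

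Two secondary inaccuracies in the rate bookkeeping: (a) you carry the $\log(n)$-inflated rates from Theorem \ref{thm:initial oracle}, but condition \ref{Assum:norm rate} only assumes $\sbeta(\sgr_{\max}+\sbeta)\log(p)/\sqrt{n}=o(1)$, so your claimed $o_p(1)$ conclusions with extra $\log(n)^2$ factors do not follow as stated; the paper removes the $\log(n)$ via \ref{Assum:betaI} (Theorem \ref{thm*:initial rate}), and you need that observation. (b) Because the nodewise residuals $\hXi_i$ are built from the plug-in $\bini$, the correct tuning and error rates are $\lambda_j\asymp\sbeta\sqrt{\log(p)/n}$ and $\|\hTh-\bTh\|_1=O_p(\sbeta\sgr_{\max}\sqrt{\log(p)/n})$ (Lemmas \ref{lemma:gamma-score} and \ref{lemma:gamma-l1}), not the i.i.d.-regression rates $O(\sqrt{\log(p)/n})$ and $O_p(\sgr_{\max}\sqrt{\log(p)/n})$ you quote; this does not break the final conclusion under \ref{Assum:norm rate}, but your stated constants are not what this model delivers. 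Your alternative handling of $R_2$ via the nodewise KKT inequality and of $R_3$ via a third-derivative bound is fine and is a legitimate variant of the paper's mean-value-theorem argument in Lemma \ref{lemma:approx cond}, provided you pair each term with $\|\bfc^\top\hTh\|_1=O_p(1)$ and use the corrected rates above.
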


As a result of the stronger conditions required for Theorem \ref{thm:normality}, which we will explain in more details below, we are able to achieve an improved estimation error for the initial estimator
%with \ref{Assum:betaI}
as stated in the next theorem.
%The extra $\log(n)$ in the rate of Theorem \ref{thm:initial oracle} from bounding $\RR{i}{t}{o}$ with large probability is no longer necessary.
%\begin{theorem*}{thm:initial oracle}
\begin{theorem}\label{thm*:initial rate}
Under \ref{Assum:design} and \ref{Assum:betaI}-\ref{Assum:norm rate},
 we can choose $\lambda \asymp \sqrt{\log(p)/n}$ and $\Ckap = \sqrt{\rhosig/2}$,
 such that
$$
\|\bini-\betao\|_1 = O_p\left(\sbeta \sqrt{\log(p)/n}\right)= o_p(1).
$$
\end{theorem}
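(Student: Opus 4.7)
\textbf{Proof proposal for Theorem~\ref{thm*:initial rate}.} The plan is to re-run the machinery of Theorem~\ref{thm:initial oracle}, but to sharpen two ingredients by invoking the stronger assumptions \ref{Assum:betaI}--\ref{Assum:norm rate}. Inspection of the slow rate $\sbeta\log(n)\sqrt{\log(p)/n}$ shows that the extra $\log(n)$ factor enters only through the constant $\Ke$ in \eqref{def:Ke}, which originates from an $\varepsilon$-tail bound on the maximal relative risk among subjects with observed type-2 events (Lemma~\ref{lemma:RR} in the appendix). Under \ref{Assum:betaI} we instead have the deterministic bound $\sup_{i,t}\RR{i}{t}{o}\le e^{\Kb}$, so the $\log(n)$-dependent $\Ke$ may be replaced by a pure constant. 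Consequently $\CSk{l}\asymp\sqrt{\log(p)/n}$ and, by the definition \eqref{def:dC}, $\dC\asymp\sqrt{\log(p)/n}$. Setting $\lambda = \dC(\xi-1)/(\xi+1)\asymp\sqrt{\log(p)/n}$ is therefore the natural choice.

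The first step is to verify the refined version of Lemma~\ref{lemma:score oracle} under \ref{Assum:design},\ref{Assum:betaI},\ref{Assum:CRI}, which yields $\|\dm(\betao)\|_\infty = O_p(\sqrt{\log(p)/n})$. Two subpoints arise: (i)~the concentration of $\bS{l}(t,\betao)-\btS{l}(t,\betao)$ across observed event times must be redone; here \ref{Assum:CRI} lets us decompose $\bZ_i(t)$ into an absolutely continuous part and a jump part driven by $\Nz_i$ with at most $\Kz=o(\sqrt{n/(\log(p)\log(n))})$ jumps and bounded intensity $\lamN\le \LN$, so that a union bound over the discretization covers all relevant times at cost $\log(\Kz n)\lesssim \log(n)$, which is absorbed into $\sqrt{\log(p)/n}$ under \ref{Assum:norm rate}. (ii)~The Kaplan--Meier approximation error $\hat G/G-1$ is handled exactly as before via the martingale representation $\Mc_i$ from \eqref{def:Mc}, since $G$ is unaffected by the change of assumptions; multiplying by the now-bounded $\RR{i}{t}{o}$ no longer inflates the rate.

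The second step upgrades Lemma~\ref{lemma:kappa}: with bounded relative risks, the quantity $\ddC$ becomes $O(\sqrt{\log(p)/n})$ (after dropping the $\Ke$-laden pieces), so
\[
\Pr\Bigl(\compb \ge \sqrt{\rhosig - \sbeta(\xi+1)\ddC}\Bigr) \to 1.
\]
By \ref{Assum:norm rate}, $\sbeta\sqrt{\log(p)/n}=o(1)$, so taking $\Ckap=\sqrt{\rhosig/2}$ is permissible for $n$ large, and the side condition $2\KZ(\xi+1)\sbeta\lambda/(2\Ckap)^2\le 1/e$ in Theorem~\ref{thm:initial oracle} holds eventually (with $\eta\to 0$, hence $e^\eta\to 1$).

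The third step simply plugs these two improvements into Theorem~\ref{thm:initial oracle}: the error bound becomes
\[
\|\bini-\betao\|_1 \le \frac{e^\eta(\xi+1)\sbeta\lambda}{2\Ckap^2} = O_p\!\left(\sbeta\sqrt{\log(p)/n}\right),
\]
and the probability guarantees all tend to one because the union of failure events has probability $e^{-n\rho^2/(2\M^2)}+ne^{-n(\rho-2/n)^2/(8\M^2)}+O(\varepsilon)\to 0$ as $\varepsilon\to 0$ after $n\to\infty$. The $o_p(1)$ conclusion then follows from \ref{Assum:norm rate}, which in particular forces $\sbeta\sqrt{\log(p)/n}\to 0$.

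I expect the main technical obstacle to be the refined concentration in step~one: specifically, showing that the jump component in \ref{Assum:CRI} with a $\Kz$ that may grow polynomially in $n$ does not reintroduce a $\log(n)$ factor in $\CSk{l}$. This is where the rate condition $\Kz=o(\sqrt{n/(\log(p)\log(n))})$ in \ref{Assum:CRI} must be used delicately, combined with Bernstein-type inequalities on the compensated jump measures associated with $\Nz_i$, so that the resulting bound is absorbed cleanly into the $\sqrt{\log(p)/n}$ target rate rather than dominating it.
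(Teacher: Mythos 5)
Your proposal matches the paper's own proof in its essentials: the paper likewise observes that under \ref{Assum:betaI} the relative risks are bounded between $e^{-\Kb}$ and $e^{\Kb}$, so one may set $\M = e^{\Kb}$ (obtaining \ref{Assum:Hess} from \ref{Assum:HessI}) and, crucially, the quantity $\Ke$ in Lemma \ref{lemma:Sk} no longer needs to grow with $n$, which removes the $\log(n)$ factor and permits $\lambda \asymp \sqrt{\log(p)/n}$ in Theorem \ref{thm:initial oracle}. The technical obstacle you anticipate concerning the jump component in \ref{Assum:CRI} does not arise in the paper's argument, since Lemma \ref{lemma:Sk} is simply reused with the constant bound in place of $\Ke$ rather than re-derived.
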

%\end{theorem*}

For the rest of this section, we explain  the   assumptions and theoretical results needed for Theorem \ref{thm:normality} summarized in Lemmas  \ref{lemma:bTh}-\ref{lemma:score cov}.
% in the following discussion about Theorem \ref{thm*:initial rate} and Lemma \ref{lemma:bTh} in the proof
% scheme.
Condition \ref{Assum:betaI} is needed whenever the model departs significantly from the linear case \citep{vdGeerEtal14, fang2016testing}.
In our case, the asymptotic normality of  $\sqrt{n}\dm(\betao)$ depends fundamentally on the asymptotic tightness of  $\sqrt{n}\dtm(\betao)$.
As a necessary condition, the predictable quadratic variation under filtration $\ccFt$ of the martingale $\sqrt{n}\dtm(\betao)$
\begin{equation}\label{def:QV}
\langle\sqrt{n}\dtm(\betao)\rangle_{t^*} = \inttao n^{-1} \sumin I(C_i \ge t) Y_i(t) \RR{i}{t}{o} \{\bZ_i(t)-\bZtil(t,\betao)\}^{\otimes 2} \lamT(t) dt,
\end{equation}
must have a finite bound independent of the dimension of the covariates.
This requires that the magnitude of the summands in \eqref{def:QV} either be bounded or have light tails.
Hence, we cannot allow the relative risk $\RR{i}{t}{o}$ to grow arbitrarily large.
We next  observe that
\ref{Assum:HessI} is a standard assumption for the validity of the nodewise penalized regressions \eqref{def:hgrtau}.
  %note that good prediction error
Finally, note that Theorem \ref{thm:normality} utilizes
  Condition \ref{Assum:CRI}; a condition stronger than \ref{Assum:CR} needed for $\sqrt{n}$-  approximation error between $\dm(\betao)$ and $\dtm(\betao)$.

If we define the population versions of the nodewise   components defined in \eqref{def:hXi}-\eqref{def:hgrtau},
\begin{align}
&\bfXi = \inttao \{\bZ(t)-\bmu(t)\} d\Nobs(t), \;
  \bGr_j(\bfgr) = \E\{\Xi_j - \bfXi_{i,-j}^\top \bfgr \}^2, \notag\\
 & \gro_j = \argmin_{\bfgr \in \R^{p-1}} \bGr_j(\bfgr), \;
  \tau^2_j = \bGr_j(\gro_j),\label{def:asymp nodewise}
\end{align}
then the true   parameters $\{\gro_j,\tau^2_j: j=1, \dots, p\}$ uniquely define the inverse negative Hessian $\bTh$ as described in Section \ref{section:method nodewise}.
We prove this statement in the following Lemma.
\begin{lemma}\label{lemma:bTh}
Under \ref{Assum:HessI}, $\Theta_{j,j} = 1/\tau^2_j$ and $\bth_{j,-j}\tau^2_j = \gro_j$.
Moreover, $\|\gro_j\|_1 \le \Kgr$, $\tau^2_j \ge \rhosigI$ and
$\|\bTh\|_1 \le \Kgr/ \rhosigI$.
\end{lemma}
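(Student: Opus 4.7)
The plan is to recognize that under \ref{Assum:HessI} the matrix $\Hess$ is invertible, so $\bTh=\Hess^{-1}$ is well-defined and each row $\bth_j^\top$ is uniquely determined by the linear equation $\bth_j^\top\Hess = \mathbf{e}_j^\top$. I then construct a candidate row from $(\gro_j,\tau_j^2)$ and verify this linear equation directly using the first-order optimality (KKT) conditions for the population regression \eqref{def:grotau}, already exploited in the derivation at \eqref{eq:tau-alt}--\eqref{eq:tau-KKT}.

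First I would partition $\Hess$ around the $j$th coordinate: write $\Sigma_{jj}=\E\Xi_j^2$, $\bSig_{-j,j}=\E(\Xi_j\bfXi_{-j})$, and $\bSig_{-j,-j}=\E(\bfXi_{-j}\bfXi_{-j}^\top)$. Convexity of $\bfgr_j\mapsto \E(\Xi_j-\bfXi_{-j}^\top\bfgr_j)^2$ gives the normal equations $\bSig_{-j,-j}\gro_j = \bSig_{-j,j}$, and the paper's own calculation gives $\tau_j^2 = \Sigma_{jj}-\bSig_{j,-j}\gro_j$. Define a candidate vector $\tilde\bth_j$ by setting its $j$th coordinate to $1/\tau_j^2$ and its $-j$ subvector to $\gro_j/\tau_j^2$ (in the sign convention of the lemma). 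Block multiplication together with the two displays immediately yields $\tilde\bth_j^\top\Hess = \mathbf{e}_j^\top$. Uniqueness of the inverse under \ref{Assum:HessI} then forces $\bth_j = \tilde\bth_j$, which is the claim $\Theta_{j,j}=1/\tau_j^2$ and $\bth_{j,-j}\tau_j^2=\gro_j$.

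For the bounds, the central observation is the quadratic-form representation
$$
\tau_j^2 \;=\; \bfu_j^\top\,\Hess\,\bfu_j,
$$
where $\bfu_j$ is the $p$-vector with $1$ in coordinate $j$ and $-\gro_j$ on its complement. Since $\|\bfu_j\|_2^2\ge 1$, \ref{Assum:HessI} gives $\tau_j^2\ge \lambda_{\min}(\Hess)\,\|\bfu_j\|_2^2 \ge \rhosigI$. Next, the identity above yields $\bth_j/\Theta_{j,j} = \bfu_j$ up to sign, so $\|\bth_j/\Theta_{j,j}\|_1 = 1+\|\gro_j\|_1$; combined with the hypothesis $\|\bth_j/\Theta_{j,j}\|_1\le \Kgr$ from \ref{Assum:norm rate}, this gives $\|\gro_j\|_1\le \Kgr-1\le\Kgr$. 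Finally,
$$
\|\bth_j\|_1 \;=\; \Theta_{j,j}\,\|\bth_j/\Theta_{j,j}\|_1 \;\le\; \Kgr/\tau_j^2 \;\le\; \Kgr/\rhosigI,
$$
and taking the maximum over $j$ (matching the $\|\cdot\|_1$ operator-norm convention used in the paper) delivers $\|\bTh\|_1\le \Kgr/\rhosigI$.

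No step is genuinely hard; the only subtle point is the lower bound $\tau_j^2 \ge \rhosigI$, where one must use the unit-component structure of $\bfu_j$ to transfer the minimum eigenvalue of $\Hess$ into a bound on the unnormalized scalar $\tau_j^2$, rather than on a ratio $\bfu_j^\top\Hess\,\bfu_j/\|\bfu_j\|_2^2$.
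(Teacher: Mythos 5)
Your proof is correct and follows essentially the same route as the paper: derive the population KKT/normal equations, construct the candidate row $(\pm 1,\mp\gro_j)/\tau_j^2$, verify $\tilde\bth_j^\top\Hess=\mathbf{e}_j^\top$, and then read off all the bounds. The only cosmetic difference is in establishing $\tau_j^2\ge\rhosigI$: the paper goes through the spectral decomposition of $\bTh=\Hess^{-1}$ and bounds $\Theta_{j,j}\le\lambda_{\max}(\bTh)\le 1/\rhosigI$, while you use the quadratic-form identity $\tau_j^2=\bfu_j^\top\Hess\,\bfu_j\ge\lambda_{\min}(\Hess)\|\bfu_j\|_2^2\ge\rhosigI$ directly; both are one-line linear-algebra facts and either is fine.
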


%Lemma \ref{lemma:bTh} also shows that the population parameters of interest, $\|\gro_j\|_1$ and $\|\bTh\|_1$, do not grow with $p$.
Next, we  discuss the properties of  estimands $\hat {\boldsymbol \gamma}_j$, $\hat \tau_j$ and $\hat {\boldsymbol \Theta}$ -- defining components of the variance estimate.
%We first discuss $\hat \gamma_j$.

%We eventually
%establish the error rates of
%$\|\hgr_j-\gro_j\|_1$ and $|\htau_j^2 - \tau_j^2|$
%in the following Lemma,
%which then leads to the error rate for $\|\hTh-\bTh\|_1$.
\begin{lemma}\label{lemma:gamma-l1}
Under  \ref{Assum:design} and \ref{Assum:betaI}-\ref{Assum:norm rate},
for  $\lambda_j \asymp \sbeta\sqrt{\log(p)/n}$, we obtain
  $$\sup_j\|\hgr_j - \gro_j\|_1 = O_p\left(\sbeta\sgr_j\sqrt{\log(p)/n}\right)$$ and
  $\sup_j|\htau_j^2 - \tau_j^2| = O_p(\sbeta\sgr_j\sqrt{\log(p)/n})$,
leading to $\|\hTh-\bTh\|_1 = O_p\left(\sbeta\sgr_{\max}\sqrt{\log(p)/n}\right)$.
\end{lemma}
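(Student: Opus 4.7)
The plan is to treat Lemma \ref{lemma:gamma-l1} as a simultaneous oracle-inequality analysis of $p$ LASSO problems, applying standard basic-inequality machinery while carefully controlling the two sources of noise that distinguish this nodewise regression from an i.i.d.\ LASSO: the ``response'' $\hat{\Xi}_{i,j}$ and the ``design'' $\hXi_{i,-j}$ are both random integrals of $d\Nobs_i$ against centered covariates $\bZ_i(t)-\bZbar(t,\bini)$, and the initial estimator $\bini$ enters them nonlinearly.

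The first step is plug-in control. I would decompose $\hXi_i = \bfXi_i + \bfDelta_i$ with
$$
\bfDelta_i = \inttao\{\bmu(t)-\bZbar(t,\bini)\}\,d\Nobs_i(t),
$$
and show that $\sup_t\|\bZbar(t,\bini)-\bmu(t)\|_\infty = O_p(\sbeta\sqrt{\log(p)/n})$. This follows from a first-order Taylor expansion of $\bS{1}(t,\cdot)/\bSo(t,\cdot)$ about $\betao$, whose remainder is bounded via the uniform bound on linear predictors from \ref{Assum:betaI}, combined with the concentration $\|\bS{l}(t,\betao)-\bs{l}(t,\betao)\|_\infty = O_p(\sqrt{\log(p)/n})$ of the Lemma \ref{lemma:Sk} type and the initial rate of Theorem \ref{thm*:initial rate}. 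Consequently $\|\hHess-\bSig\|_{\max} = O_p(\sbeta\sqrt{\log(p)/n})$, and any cross-product bilinear in $\hXi_i$ differs from its $\bfXi_i$-version by at most this order in $\|\cdot\|_{\max}$.

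The second step is the per-$j$ LASSO analysis. The standard basic inequality applied to \eqref{def:hgrtau} yields
$$
(\hgr_j-\gro_j)^\top \hHess_{-j,-j}(\hgr_j-\gro_j) + 2\lambda_j\|\hgr_j\|_1 \le 2\lambda_j\|\gro_j\|_1 + 2(\hgr_j-\gro_j)^\top \hat g_j,
$$
where $\hat g_j = n^{-1}\sum_i(\hat{\Xi}_{i,j}-\hXi_{i,-j}^\top\gro_j)\hXi_{i,-j}$. Splitting $\hat g_j = g_j^o + (\hat g_j-g_j^o)$ against the population analogue $g_j^o = n^{-1}\sum_i(\Xi_{i,j}-\bfXi_{i,-j}^\top\gro_j)\bfXi_{i,-j}$, the KKT identity \eqref{eq:tau-KKT} gives $\E[g_j^o]=\bzero$ and boundedness of the summands (by \ref{Assum:design}, \ref{Assum:betaI} and Lemma \ref{lemma:bTh}) yields $\max_j\|g_j^o\|_\infty = O_p(\sqrt{\log(p)/n})$ via Hoeffding and a union bound, while $\max_j\|\hat g_j-g_j^o\|_\infty = O_p(\sbeta\sqrt{\log(p)/n})$ by step one. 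Choosing $\lambda_j \asymp \sbeta\sqrt{\log(p)/n}$ so that $\max_j\|\hat g_j\|_\infty \le \lambda_j/2$ forces $\hgr_j-\gro_j$ into the standard cone. A compatibility condition for $\hHess_{-j,-j}$ then comes from \ref{Assum:HessI}, the bound $\|\hHess-\bSig\|_{\max} = O_p(\sbeta\sqrt{\log(p)/n})$, and the inequality $|\bfb^\top(\hHess-\bSig)\bfb|\le \|\hHess-\bSig\|_{\max}\|\bfb\|_1^2$ restricted to the cone; the sparsity condition $\sbeta(\sgr_{\max}+\sbeta)\log(p)/\sqrt{n}=o(1)$ in \ref{Assum:norm rate} is exactly what absorbs this perturbation. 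Substituting back produces the uniform bound $\sup_j\|\hgr_j-\gro_j\|_1 = O_p(\sbeta\sgr_j\sqrt{\log(p)/n})$.

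For $\htau_j^2$, I expand
$$
\htau_j^2-\tau_j^2 = \{\Gr_j(\hgr_j,\bini)-\Gr_j(\gro_j,\bini)\} + \{\Gr_j(\gro_j,\bini)-\bGr_j(\gro_j)\} + \lambda_j\|\hgr_j\|_1,
$$
where the first bracket is of order $\lambda_j\|\hgr_j-\gro_j\|_1$ from the basic inequality plus the quadratic term, the second by the plug-in argument of step one and standard concentration of $\Gr_j(\gro_j,\betao)-\bGr_j(\gro_j)$, and the last by $\lambda_j(\|\gro_j\|_1+\|\hgr_j-\gro_j\|_1)=O_p(\lambda_j)$. The $\hTh$ bound follows from \eqref{def:hTh} by a first-order expansion of the map $(\bfgr,\tau^2)\mapsto \bfgr/\tau^2$: since $\tau_j^2\ge\rhosigI>0$ by Lemma \ref{lemma:bTh}, each row of $\hTh-\bTh$ has $\ell_1$-error controlled by $\tau_j^{-2}\|\hgr_j-\gro_j\|_1 + \tau_j^{-4}|\htau_j^2-\tau_j^2|\,\|\gro_j\|_1$ up to smaller-order cross terms, giving $\|\hTh-\bTh\|_1 = O_p(\sbeta\sgr_{\max}\sqrt{\log(p)/n})$. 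The main obstacle will be step one: establishing $\sup_t\|\bZbar(t,\bini)-\bmu(t)\|_\infty = O_p(\sbeta\sqrt{\log(p)/n})$ uniformly in $t$ despite time-dependent covariates, IPCW weighting, and the nonlinearity in $\bbeta$, which requires chaining a uniform-in-$t$ Taylor expansion whose second-order remainder stays bounded under \ref{Assum:betaI} and \ref{Assum:CRI} with the IPCW concentration tools of Section \ref{section:theory-init} and the initial rate of Theorem \ref{thm*:initial rate}; once this plug-in error is absorbed into the effective noise level $\sbeta\sqrt{\log(p)/n}$, the remainder of the argument is standard LASSO bookkeeping plus restricted-eigenvalue perturbation.
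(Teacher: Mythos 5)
Your proposal is correct and follows essentially the same route as the paper: the paper's Lemmas \ref{lemma:gamma-score}, \ref{lemma:gamma-oracle} and \ref{lemma:gamma-compatibility} carry out exactly your three ingredients (plug-in/proxy control of the score at $\gro_j$ via $\supt\|\bZbar(t,\bini)-\bmu(t)\|_\infty$ and Theorem \ref{thm*:initial rate}, giving the effective noise level $\sbeta\sqrt{\log(p)/n}$; a cone/oracle inequality for each nodewise LASSO; and compatibility of $\hHess_{-j,-j}$ obtained by perturbing $\Hess_{-j,-j}$ with $\|\hHess-\Hess\|_{\max}$ under \ref{Assum:HessI} and \ref{Assum:norm rate}), and the $\htau_j^2$ and $\|\hTh-\bTh\|_1$ bounds are assembled as you describe using Lemma \ref{lemma:bTh}. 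The only cosmetic difference is that you phrase the oracle step via the basic inequality whereas the paper argues directly from the KKT conditions, which is an equivalent bookkeeping choice.
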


The nodewise LASSO  in \eqref{def:hgrtau}, unlike \cite{vdGeerBuhlmann09} that has i.i.d. entries, has dependent $\hXi_i$'s through the common $\bZbar(t,\bini)$; see \eqref{def:hXi}.
Thus, our error rate takes the multiplicative form $\sbeta\sgr_{\max}$, instead of the summation $\sbeta+\sgr_{\max}$ that may be expected under the generalized linear models.
In general, we consider our rate to be optimal under our model.

%  The  error of $\|\hTh-\bTh\|_1$ is a product of the error from the initial estimator of the order $\sbeta$, the error from the nodewise LASSO of order $\sgr_{\max}$ and the dimensions factor $\sqrt{\log(p)/n}$.
%Compared to the linear regression case \cite[]{ZhangZhang14,vdGeerEtal14},
%our $\hXi$'s are affected by the estimation error of initial estimator $\bini$.
%It therefore makes sense to have the extra $\sbeta$ in our rate.
%Compared to the generalized linear model (GLM) case \cite[]{vdGeerEtal14},
%our $\hXi$'s are dependent with each other through $\bZbar(t,\bini)$ involving the initial estimator.

Using Lemma \ref{lemma:gamma-l1}, we can establish the approximation condition for $\bCI$ proposed in \eqref{eq:approx idea}.
\begin{lemma}\label{lemma:approx cond}
Under  \ref{Assum:design} and \ref{Assum:betaI}-\ref{Assum:norm rate},
the one-step estimator $\bCI$ satisfies the approximation condition
$$
\sqrt{n} \bfc^\top \left\{\bTh\dm(\betao) + \betao - \bCI\right\} = O_p\left(\sbeta (\sgr_{\max}+\sbeta) \log(p)/\sqrt{n}\right) = o_p(1)
$$
for any $\bfc$ such that $\|\bfc\|_1=1$.
\end{lemma}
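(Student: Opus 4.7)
The plan is to start from $\bCI = \bini + \hTh\,\dm(\bini)$ and apply the integral mean-value theorem to write $\dm(\bini) = \dm(\betao) - \bar H(\bini-\betao)$ with $\bar H := -\int_0^1 \ddm\bigl(\betao + s(\bini-\betao)\bigr)\,ds$. Substituting and rearranging gives the decomposition
\[
\bTh\dm(\betao) + \betao - \bCI
= (\bTh-\hTh)\,\dm(\betao)
\;-\;(\mathbb{I}_p - \hTh\hHess)(\bini-\betao)
\;-\;\hTh\,(\hHess-\bar H)(\bini-\betao).
\]
After contracting by $\bfc^\top$ with $\|\bfc\|_1=1$ and applying H\"older in the dual pair $(\ell_1,\ell_\infty)$, it suffices to show that each of the three pieces has magnitude $O_p\bigl(\sbeta(\sgr_{\max}+\sbeta)\log(p)/n\bigr)$; the $\sqrt{n}$ scaling then yields the announced rate.

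For the first term, $|\bfc^\top(\bTh-\hTh)\dm(\betao)|\le \|\bTh-\hTh\|_1\,\|\dm(\betao)\|_\infty$, with $\|\cdot\|_1$ understood as the matrix operator norm induced by $\ell_\infty$. Lemma \ref{lemma:gamma-l1} gives $\|\hTh-\bTh\|_1 = O_p(\sbeta\sgr_{\max}\sqrt{\log p/n})$, while a refinement of Lemma \ref{lemma:score oracle} under the stronger conditions \ref{Assum:betaI}--\ref{Assum:CRI} (which eliminate the $\log n$ factor that came from controlling unbounded relative risks) yields $\|\dm(\betao)\|_\infty = O_p(\sqrt{\log p/n})$, so the product is $O_p(\sbeta\sgr_{\max}\log p/n)$. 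For the second term, the KKT characterisation of the nodewise LASSO recorded at the end of Section \ref{section:method nodewise} bounds $\|\mathbb I_p-\hTh\hHess\|_{\max}$ by $\lambda_{\max}\asymp\sbeta\sqrt{\log p/n}$, while Theorem \ref{thm*:initial rate} supplies $\|\bini-\betao\|_1=O_p(\sbeta\sqrt{\log p/n})$, producing the $O_p(\sbeta^2\log p/n)$ bound.

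The main obstacle is the third term, which calls for an entrywise bound on the Hessian fluctuation $\|\hHess-\bar H\|_{\max}$. The strategy is to split this into $\|\hHess-\bSig\|_{\max} + \|\bSig-\bar H\|_{\max}$. For the first summand, replacing the estimated centre $\bZbar(t,\bini)$ by its population analogue $\bmu(t)$ produces an oracle term that concentrates at rate $\sqrt{\log p/n}$ by the Bernstein-type inequality in Appendix Lemma \ref{lemma:supij} combined with the continuity assumption \ref{Assum:CRI}, while the plug-in remainder is bounded by the uniform Lipschitz dependence of $\bZbar(t,\bini)$ on $\bini$, which is of order $\|\bini-\betao\|_1$ by a standard calculation using \ref{Assum:betaI} and the bounded-relative-risk consequence of \ref{aseq:bZ}. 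The second summand exploits a Lipschitz-in-$\bbeta$ property of $\ddm$: under \ref{Assum:betaI}, the third-order derivative of the log pseudo-likelihood has entries uniformly $O(\KZ^3)$ in max-norm, giving both $\|\bSig+\ddm(\betao)\|_{\max}=O_p(\sqrt{\log p/n})$ (another application of Appendix Lemma \ref{lemma:supij} with the IPCW approximation of Appendix Lemma \ref{lemma:Sk}) and $\|\ddm(\betao)-\int_0^1\ddm(\betao+s(\bini-\betao))ds\|_{\max}\lesssim \|\bini-\betao\|_1$. Combining these with $\|\hTh\|_1=O_p(1)$, which follows from $\|\bTh\|_1\le \Kgr/\rhosigI$ in Lemma \ref{lemma:bTh} plus the $o_p(1)$ control in Lemma \ref{lemma:gamma-l1}, shows that the third term is $O_p(\sbeta^2\log p/n)$.

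Adding the three contributions yields an approximation error of order $O_p(\sbeta(\sgr_{\max}+\sbeta)\log(p)/n)$; multiplying by $\sqrt{n}$ gives the claimed $O_p(\sbeta(\sgr_{\max}+\sbeta)\log(p)/\sqrt n)$ rate, and the sparsity-dimension condition $\sbeta(\sgr_{\max}+\sbeta)\log(p)/\sqrt{n} = o(1)$ in \ref{Assum:norm rate} then delivers the $o_p(1)$ conclusion.
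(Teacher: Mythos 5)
Your decomposition is algebraically correct and every bound you invoke is available in the paper, so the argument goes through; but it is a genuinely different route from the paper's own proof. You expand $\dm(\bini)=\dm(\betao)-\bar H(\bini-\betao)$ and split the bias into three pieces: the precision-matrix error $(\bTh-\hTh)\dm(\betao)$, the nodewise-KKT residual $(\mathbb{I}_p-\hTh\hHess)(\bini-\betao)$, and the Hessian fluctuation $\hTh(\hHess-\bar H)(\bini-\betao)$; this forces you to control $\|\dm(\betao)\|_\infty=O_p(\sqrt{\log p/n})$ (which does hold under \ref{Assum:betaI}, since $\Ke$ no longer grows with $n$, as noted in the proof of Theorem \ref{thm*:initial rate}), to use $\|\mathbb{I}_p-\hTh\hHess\|_{\max}=O_p(\lambda_{\max})$ from the nodewise KKT (strictly, the off-diagonal bound is $\lambda_j/\hat\tau_j^2$, harmless since $\hat\tau_j^2$ is bounded away from zero by Lemmas \ref{lemma:bTh} and \ref{lemma:gamma-l1}), and to use $\|\hTh\|_1=O_p(1)$ together with Lemma \ref{lemma:hHess} and a Lipschitz-in-$\bbeta$ bound for $\ddm$. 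The paper instead writes the error as $(\bTh-\hTh)\dm(\bini)+\cTh\{\dm(\betao)-\dm(\bini)\}+\bfc^\top(\betao-\bini)$, bounds $\|\dm(\bini)\|_\infty\le\lambda$ directly by the LASSO KKT (no refined score bound at $\betao$ needed), and merges the last two terms via a scalar mean-value theorem along $r\mapsto\cTh\dm(\bbeta_r)$, exploiting the exact identity $\bfc^\top=\cTh\Hess$ so that only the population $\|\bTh\|_1$ (Lemma \ref{lemma:bTh}) and Lemma \ref{lemma:ddm} at a single intermediate point are needed—neither $\|\hTh\|_1$ nor the nodewise KKT enters this lemma. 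Your version is the more standard van de Geer--type decomposition and makes the sources of bias (precision-matrix estimation, nodewise-regression residual, Hessian curvature) individually visible; the paper's version is more economical, recycling already-proved lemmas and the estimator's own KKT conditions. Both yield the same $O_p\left(\sbeta(\sgr_{\max}+\sbeta)\log(p)/\sqrt{n}\right)$ rate, which is $o_p(1)$ by \ref{Assum:norm rate}.
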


Next, we show the asymptotic normality of $\dm(\betao)$.
\begin{lemma}\label{lemma:weak conv}
Under  conditions \ref{Assum:design} and \ref{Assum:betaI}-\ref{Assum:norm rate}, for directional vector $\bfc \in \R^p$ with $\|\bfc\|_1 =1$ and $\bfc^\top \bTh \Scov \bTh^\top \bfc \to \nu^2 \in (0,\infty)$,
$$\sqrt{n} \bfc^\top \bTh \dm(\betao)/\sqrt{\bfc^\top \bTh \Scov \bTh^\top \bfc} \stackrel{d}{\to} N(0, 1).$$
\end{lemma}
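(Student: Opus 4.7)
The plan is to reduce $\sqrt{n}\bfc^\top\bTh\dm(\betao)$ to a sum of i.i.d.\ mean-zero scalars,
$$\sqrt{n}\bfc^\top\bTh\dm(\betao) \;=\; n^{-1/2}\sum_{i=1}^n\bfc^\top\bTh(\bfeta_i+\bfpsi_i) + o_p(1),$$
and then apply the Lindeberg--Feller central limit theorem together with Slutsky's theorem, since $\bfc^\top\bTh\Scov\bTh^\top\bfc\to\nu^2\in(0,\infty)$ by hypothesis.

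First I would decompose $\dm(\betao)=\dtm(\betao)+\{\dm(\betao)-\dtm(\betao)\}$. For the martingale proxy $\dtm(\betao)$, the definition of $\bZtil$ in \eqref{def:Stk} annihilates the $\ccFt$-compensator of $\Nobs_i$, leaving $\dtm(\betao)=n^{-1}\sum_i\int_0^{t^*}\{\bZ_i(t)-\bZtil(t,\betao)\}I(C_i\ge t)\,d\MI_i(t)$. I would then replace $\bZtil(t,\betao)$ by $\bmu(t)$ via a uniform convergence argument analogous to Lemma \ref{lemma:Sk}, and observe that on the jumps of $\MI_i$ and in conditional expectation the indicator $I(C_i\ge t)$ is interchangeable with the IPCW weight $\tilde{\omega}_i(t)$; the residual $\bfc^\top\bTh$-projected empirical average is $o_p(n^{-1/2})$ thanks to the operator bound $\|\bTh^\top\bfc\|_1=O(1)$ discussed below. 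This identifies $n^{-1}\sum_i\bfeta_i$ as the leading piece. For the weighting correction $\dm(\betao)-\dtm(\betao)$, the standard Kaplan--Meier martingale expansion $\hat G(t)/G(t)-1=-n^{-1}\sum_j\int_0^t d\Mc_j(u)/\pi(u)+o_p(n^{-1/2})$ uniformly in $t\in[0,t^*]$ is substituted into $\dm$, and after an interchange of the order of integration that moves the outer integral inside (producing the population analogue of $\hbfq(t)/\hpi(t)$ as in \eqref{def:hbfq}--\eqref{def:hpi}), the leading term is recognized as $n^{-1}\sum_i\bfpsi_i$ with $\bfq$ and $\pi$ defined in \eqref{def:Scov}.

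The second step is the triangular-array CLT itself. Let $X_{n,i}=\bfc^\top\bTh(\bfeta_i+\bfpsi_i)$; these are i.i.d.\ across $i$, mean zero by the martingale property of $\MI_i$ under the full filtration and of $\Mc_i$ under $\cFt$, with variance $\bfc^\top\bTh\Scov\bTh^\top\bfc\to\nu^2>0$. For Lyapunov I bound $|X_{n,i}|\le\|\bTh^\top\bfc\|_1\,\|\bfeta_i+\bfpsi_i\|_\infty$; by Lemma \ref{lemma:bTh}, $\|\bTh^\top\bfc\|_1\le\|\bfc\|_1\max_j\|\bth_j\|_1\le\Kgr/\rhosigI$, and each $\|\bfeta_i+\bfpsi_i\|_\infty$ is almost surely bounded by a deterministic constant under \ref{Assum:design}, \ref{Assum:betaI}, positivity of $G(t^*)$, and $\lamT\le\Llam$ (using the total-variation bounds \eqref{eq:totVarMI} on $\MI_i$ and \eqref{eq:totVarMc} on $\Mc_i$ together with $\pi(t)$ bounded away from zero on $[0,t^*]$). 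Hence $|X_{n,i}|$ is uniformly bounded in $n$, Lyapunov's condition is trivial, and $n^{-1/2}\sum_iX_{n,i}\to N(0,\nu^2)$.

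The principal obstacle is the first step: after projection by $\bfc^\top\bTh$, both the Kaplan--Meier linearization remainder and the $\bZtil\to\bmu$ substitution remainder must vanish at rate $o_p(1)$. The key leverage is Lemma \ref{lemma:bTh}, which ensures $\|\bTh^\top\bfc\|_1=O(1)$ so that one only needs the single scalar projection to be negligible, rather than per-coordinate control on $p$ coordinates. The smoothness assumption \ref{Assum:CRI} and the sparsity rate $\sbeta(\sgr_{\max}+\sbeta)\log(p)/\sqrt{n}=o(1)$ from \ref{Assum:norm rate} provide the quantitative inputs that make all cross-terms asymptotically negligible.
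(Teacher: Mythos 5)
Your overall architecture matches the paper's: reduce $\sqrt{n}\bfc^\top\bTh\dm(\betao)$ to $n^{-1/2}\sum_i\bfc^\top\bTh(\bfeta_i+\bfpsi_i)+o_p(1)$ using the $\ell_1$-bound $\|\bTh^\top\bfc\|_1=O(1)$ from Lemma \ref{lemma:bTh}, then conclude by Lindeberg--Feller (which is trivial here since the summands are i.i.d.\ and uniformly bounded). Your Lyapunov step and the operator-norm gymnastics are fine.

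The gap is in your treatment of $\dtm$. After replacing $\bZtil$ by $\bmu$ you have the leading term
$n^{-1}\sum_i\int\{\bZ_i(t)-\bmu(t)\}I(C_i\ge t)\,d\MI_i(t)$,
with the plain indicator $I(C_i\ge t)$. You then assert that $I(C_i\ge t)$ is interchangeable with $\tilde\omega_i(t)$ up to a residual that is $o_p(n^{-1/2})$ after $\bfc^\top\bTh$-projection. That residual is
$n^{-1}\sum_i\int\{\bZ_i(t)-\bmu(t)\}\{\tilde\omega_i(t)-I(C_i\ge t)\}\,d\MI_i(t),$
a centered average of i.i.d.\ bounded vectors, hence $O_p(n^{-1/2})$ but not $o_p(n^{-1/2})$ --- its terms are nonzero on the whole event $\{t>X_i,\,\delta_i\epsilon_i>1\}$ and carry genuine variance from the fluctuation of $C_i$ around its conditional law. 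The $\ell_1$ bound on $\bTh^\top\bfc$ converts entrywise rates into scalar rates but cannot turn $O_p$ into $o_p$; your appeal to it does no work here. Because both $\tilde\omega_i$ and $I(C_i\ge t)$ are correct conditional-mean-preserving weights, you have two distinct mean-zero i.i.d.\ representations, and they are \emph{not} asymptotically equivalent in distribution --- picking the wrong one changes $\Scov$. The paper avoids this entirely: it keeps $\omega_i(t)$ (not $I(C_i\ge t)$) in the leading piece $I_4=n^{-1/2}\sum_i\int\{\bZ_i-\bmu\}\omega_i\,d\MI_i$, so that $\bfeta_i$ with the $\tilde\omega_i$-weight falls out directly of the single substitution $\omega_i\to\tilde\omega_i$, and the only remaining first-order correction is the Kaplan--Meier linearization producing $\bfpsi_i$. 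If you insist on starting from $\dtm$, you must carry the $\tilde\omega_i-I(C_i\ge t)$ difference forward as a first-order term; when you also expand $\dm-\dtm$ more carefully (it involves both $\omega_i-I(C_i\ge t)$ and $\bZbar-\bZtil$), the two $O_p(n^{-1/2})$ pieces recombine to give the advertised $\bfeta_i+\bfpsi_i$, but your present bookkeeping drops one of them.

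A smaller remark: your stated Kaplan--Meier expansion $\hat G(t)/G(t)-1=-n^{-1}\sum_j\int_0^t d\Mc_j(u)/\pi(u)+o_p(n^{-1/2})$ has the correct sign and is fine as a uniform linearization; the paper instead expands via the Nelson--Aalen martingale $\MNA$ acting on $\omega_i-\tilde\omega_i$ and controls the discrepancy between Kaplan--Meier and $\exp(-\hLamc)$ at rate $O_p(1/n)$. Either route works; just be aware that the paper's choice keeps the remainder explicitly $O_p(1/n)$ rather than an unquantified $o_p(n^{-1/2})$.
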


The proof uses the same approach as the initial low-dimensional result in \cite{FineGray99}.
We approximate $\dm(\betao)$ by the sample average of i.i.d.~terms $\bfeta_i+\bfpsi_i$ plus an $o_p\left(n^{-1/2}\right)$ term.
We note that the same approach involves nontrivial techniques in order to  be valid in high-dimensions.
In particular, we discover and exploit the martingale property of the term $\{\omega_i(t)- I(C_i \ge t)\}/G(t)$.

The last piece of our proof for Theorem \ref{thm:normality} is the element-wise convergence of the ``meat" matrix \eqref{def:hScov} in the ``sandwich" variance estimator.
\begin{lemma}\label{lemma:score cov}
Under conditions  \ref{Assum:design} and \ref{Assum:betaI}-\ref{Assum:norm rate},
$$\supi \|\hat{\bfeta}_i(\hat{\bbeta}) + \hat{\bfpsi}_i(\hat{\bbeta}) - \bfeta_i -\bfpsi_i\|_\infty = O_p\left(\|\bini-\betao\|_1 + \sqrt{\log(p)/n}\right)=o_p(1).$$
Hence, $\|\hScov -\Scov \|_{\max} = o_p(1)$.
\end{lemma}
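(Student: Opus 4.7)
The plan is to first establish the uniform coordinate-wise rate
$\sup_i\|\heta_i + \hpsi_i - \bfeta_i - \bfpsi_i\|_\infty = O_p(\|\bini-\betao\|_1+\sqrt{\log(p)/n})$
and then pass from this bound to $\|\hScov-\Scov\|_{\max}$ via the identity $a a^\top - b b^\top = (a-b)a^\top + b(a-b)^\top$ together with a standard element-wise concentration for the i.i.d.\ sum $n^{-1}\sum_i(\bfeta_i+\bfpsi_i)^{\otimes 2}$. By Theorem \ref{thm*:initial rate} the first summand equals $O_p(\sbeta\sqrt{\log(p)/n})$, which is $o_p(1)$ under \ref{Assum:norm rate}.

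For the $\eta$ contribution I would decompose $\heta_i - \bfeta_i$ into four error sources: (i) the plug-in error $\bZbar(t,\bini)-\bZbar(t,\betao)$; (ii) the empirical-process error $\bZbar(t,\betao)-\bmu(t)$; (iii) the weight error $\omega_i(t)-\tilde{\omega}_i(t)$ due to plugging $\hat G$ for $G$; and (iv) the compensator difference $d\MIh_i(t)- I(C_i\ge t)\,d\MI_i(t)$ coming from the empirical baseline intensity implicit in $d\MIh_i$. Piece (i) is Lipschitz in $\bbeta$ with constant of order $\KZ$ under \ref{Assum:design}, hence contributes $O_p(\|\bini-\betao\|_1)$ uniformly in $(i,t)$. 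Piece (ii) is a uniform-in-$t$, union-bound concentration for the bounded processes $\bS{l}(t,\betao)$; under the strengthened \ref{Assum:betaI} the extra $\Ke$ factor appearing in Lemma \ref{lemma:score oracle} disappears, leaving the $\sqrt{\log(p)/n}$ rate over the $p$ coordinates. Piece (iii) reduces by a Duhamel expansion of $\hat G/G - 1$ to the classical $\sup_{t\le t^*}|\hat G(t)-G(t)|=O_p(n^{-1/2})$ from Kaplan--Meier theory, multiplied by a bounded integrand. Piece (iv) is a propagation of (i)--(iii) through the baseline-hazard estimator and is bounded by the same quantities.

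The $\psi$ contribution is treated with the same template: split $\hpsi_i-\bfpsi_i$ via the differences $\hbfq(t)-\bfq(t)$, $\hpi(t)-\pi(t)$, and $d\Mch_i - I(X_i\ge t)\,dM^c_i$. The second is a uniform empirical-process bound with rate $n^{-1/2}$, and the last is handled through the $\cFt$-martingale structure of \eqref{def:cFt}. The most delicate term is $\hbfq(t)-\bfq(t)$, since $\hbfq(t)$ is itself an integral whose integrand contains both $\bZbar(u,\bini)$ and the plug-in weights $\omega_i(u)$; I would bound it coordinate-wise, uniformly in $t$, by recycling the $\eta$-side bounds and applying a maximal inequality for bounded integrals against $d\MIh_i$. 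This yields the joint rate $O_p(\|\bini-\betao\|_1+\sqrt{\log(p)/n})$.

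Having established the uniform rate, I would deduce the sandwich bound by noting that under \ref{Assum:design}, \ref{Assum:betaI} and \ref{Assum:CRI} each $\bfeta_i+\bfpsi_i$ has $\|\cdot\|_\infty$ of at most polynomial order in $\log p$, so $n^{-1}\sum_i\|\bfeta_i+\bfpsi_i\|_\infty$ is bounded in probability; combined with the cross-product identity above this gives $\|\hScov-n^{-1}\sum_i(\bfeta_i+\bfpsi_i)^{\otimes 2}\|_{\max}=o_p(1)$, and a union-bound Bernstein inequality across the $p^2$ entries yields $\|n^{-1}\sum_i(\bfeta_i+\bfpsi_i)^{\otimes 2}-\Scov\|_{\max}=O_p(\sqrt{\log(p^2)/n})$. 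The principal obstacle is piece (i) together with $\hbfq-\bfq$: the initial-estimator error propagates non-linearly through $\bZbar(t,\bini)$ inside the integrand defining $\hbfq$, and obtaining rates uniform in both $t$ and $i$ requires simultaneously invoking the Lipschitz structure in $\bbeta$, the sparsity of $\bini-\betao$, and a tail bound on the score process on $[0,t^*]$. It is precisely at this step that \ref{Assum:betaI}--\ref{Assum:CRI}, stronger than the conditions used for Theorem \ref{thm:initial oracle}, are indispensable.
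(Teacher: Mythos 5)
Your overall architecture coincides with the paper's: introduce intermediate versions of $\heta_i$ and $\hpsi_i$, obtain the coordinate-wise error rate uniformly in $i$ and $t$, and then pass to $\|\hScov-\Scov\|_{\max}$ via the cross-product decomposition together with a Hoeffding-type bound on the i.i.d.\ matrix $n^{-1}\sumin(\bfeta_i+\bfpsi_i)^{\otimes 2}$. Your pieces (i)--(iii) and the treatment of $\hpi-\pi$ are exactly the ingredients the paper uses (Lemmas \ref{lemma:KM}, \ref{lemma:opbmu}, \ref{lemma:opbeta}, \ref{lemma:supt}), and the final sandwich step is the same.

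The gap is in your piece (iv) and its analogues on the $\psi$ side. Replacing the true compensator $Y_i(t)\RR{i}{t}{o}\lamT(t)dt$ by the Breslow-type empirical one inside $\MIh_i$ is \emph{not} ``a propagation of (i)--(iii)'': after the plug-in and weighting errors are stripped off, what remains is, up to $O_p(1/n)$, a term of the form $n^{-1}\sumjn\inttao\{\bZ_i(u)-\bmu(u)\}\tilde{\omega}_i(u)I(C_j\ge u)\,d\MI_j(u)$, i.e.\ a genuinely new stochastic error: a normalized sum over $j$ of martingale integrals whose integrand depends on subject $i$'s own data. It is not dominated by quantities already controlled in (i)--(iii), and it is not a martingale under $\ccFt$, because $\tilde{\omega}_i(u)$ (through $r_i$ and $X_i$) is not adapted. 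The paper resolves this by introducing, for each $i$, an enlarged filtration $\ccFit$ containing subject $i$'s full data at time zero, so that the sum over $j\neq i$ becomes a martingale to which Lemma \ref{lemma:supM} applies, and then taking a union bound over the $n$ subjects (tail $\varepsilon/n$), which is what yields the rate $O_p\bigl(\sqrt{\log(np)/n}\bigr)=O_p\bigl(\sqrt{\log(p)/n}\bigr)$ \emph{uniformly in} $i$. Exactly the same device is needed inside $\hbfq-\bfq$ (for the cross-subject martingale hidden in the empirical compensator) and for $\hpsi_i-\bfpsi_i$, where the paper stacks the $n$ $\cFt$-martingales into an $\R^{np}$ vector before applying Lemma \ref{lemma:supM}. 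Your appeal to ``a maximal inequality for bounded integrals against $d\MIh_i$'' and to ``the $\cFt$-martingale structure'' does not address the adaptedness of the $i$-dependent integrand nor the uniformity over $i$, so the claimed uniform rate $O_p(\|\bini-\betao\|_1+\sqrt{\log(p)/n})$ is not established at precisely the delicate step. (A minor point: under \ref{Assum:design}, \ref{Assum:betaI} and Lemma \ref{lemma:YtauI} the vectors $\bfeta_i+\bfpsi_i$ are uniformly bounded by a constant, so no ``polynomial in $\log p$'' allowance is needed in the final concentration step.)
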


Putting Lemmas \ref{lemma:weak conv} and \ref{lemma:score cov} together, we obtain the main result stated in the Theorem \ref{thm:normality}.

The details of the proofs are presented in the Appendix. Throughout the proof, we rely heavily on our concentration results for time-dependent processes, which we state in Section \ref{section:concentration} and prove in Section \ref{section:aux proof}.

\section{Numerical Experiments}\label{section:simulation}

To assess the finite sample  properties of our proposed methods,
we conduct extensive simulation experiments with various dimensions and dependence structure among covariates.

\subsection{Setup 1}
Our first simulation setup follows closely the one of  \cite{FineGray99} but considers high-dimensional covariates. In particular, each $\bZ_i$ is a  vectors consisting of %non-time-dependent following
i.i.d.~standard normal random variables. For cause 1, only $\beta_{1,1}=\beta_{1,2}=0.5$ are non-zero.
The cumulative incidence function is:
$$
\Pr(T_i\le t,\epsilon_i=1|\mathbf{Z}_i)=1-[1-p\{1-\exp(-t)\}]^{\exp(\boldsymbol{\beta}_1^\top\mathbf{Z}_i)}.
$$
For  cause 2, $\beta_{2,1}=\beta_{2,3}=\dots=\beta_{2,p-1} =-0.5$ and $\beta_{2,2} =\beta_{2,4}=\dots=\beta_{2,p} =0.5$, with
$$
\Pr(T_i\le t | \varepsilon_i=2, \mathbf{Z}_i)=1-\exp \left(t e^{\boldsymbol{\beta}_2^\top\mathbf{Z}_i} \right).
$$
We consider four different combinations: $n=200$, $p=300$; $n=200, \, p=500$; $n=200, \, p=1000$; and $n=500, \, p=1000$.
Note that this setup considers sparsity for cause 1 but non-sparsity for cause 2 effects.
As the Fine-Gray model does not require modeling cause 2 to make inference on cause 1, we expect that the non-sparsity in cause 2 effects should not affect the inference on cause 1.

\begin{center}
\begin{longtable}{lccccccc}
  \caption{Simulation results with independent covariates }\label{table:simFG}\\
\toprule
& True & Mean Est & SD & SE & SE corrected & Coverage & Level/Power\\
\midrule
\hline
\multicolumn{8}{c}{n=200, p=300}\\
\hline

% latex table generated in R 3.3.3 by xtable 1.8-2 package
% Thu Jun 22 14:22:39 2017
 $\beta_{1,1}$ & 0.5 & 0.51 & 0.16 & 0.13 & 0.25 & 0.94 & 0.92 \\
  $\beta_{1,2}$ & 0.5 & 0.47 & 0.15 & 0.14 & 0.22 & 0.94 & 0.93 \\
  $\beta_{1,10}$ & 0 & 0.03 & 0.12 & 0.15 & 0.18 & 0.98 & 0.04 \\

\hline
\multicolumn{8}{c}{n=200, p=500}\\
\hline

% latex table generated in R 3.3.3 by xtable 1.8-2 package
% Thu Jun 22 14:22:39 2017
 $\beta_{1,1}$ & 0.5 & 0.51 & 0.16 & 0.14 & 0.19 & 0.93 & 0.95 \\
  $\beta_{1,2}$ & 0.5 & 0.48 & 0.15 & 0.13 & 0.19 & 0.93 & 0.88 \\
  $\beta_{1,10}$ & 0 & -0.01 & 0.10 & 0.14 & 0.16 & 1.00 & 0.01 \\

\hline
\multicolumn{8}{c}{n=200, p=1000}\\
\hline

% latex table generated in R 3.3.3 by xtable 1.8-2 package
% Thu Jun 22 14:22:39 2017
 $\beta_{1,1}$ & 0.5 & 0.46 & 0.17 & 0.13 & 0.18 & 0.94 & 0.86 \\
  $\beta_{1,2}$ & 0.5 & 0.48 & 0.14 & 0.13 & 0.18 & 0.93 & 0.92 \\
  $\beta_{1,10}$ & 0 & -0.00 & 0.11 & 0.14 & 0.17 & 0.99 & 0.06 \\

\hline
\multicolumn{8}{c}{n=500, p=1000}\\
\hline

% latex table generated in R 3.3.3 by xtable 1.8-2 package
% Thu Jun 22 14:22:39 2017
 $\beta_{1,1}$ & 0.5 & 0.51 & 0.10 & 0.08 & 0.14 & 0.99 & 1.00 \\
  $\beta_{1,2}$ & 0.5 & 0.50 & 0.10 & 0.08 & 0.15 & 0.99 & 0.99 \\
  $\beta_{1,10}$ & 0 & -0.00 & 0.07 & 0.08 & 0.14 & 1.00 & 0.03 \\

\bottomrule
\end{longtable}
\end{center}

The results are presented in Table \ref{table:simFG}.
We focus on inference for the two non-zero coefficients $\beta_{1,1}$ and $\beta_{1,2}$,
as well as one arbitrarily chosen zero coefficient $\beta_{1,10}$.
The mean estimates are the average of the one-step $\bCI$ over the 100 repetitions, reported together with
other quantities described below.
We can see from the average estimates column that the one-step $\bCI$ is bias-corrected and that the presence of many non-zero coefficients for causet 2 does not affect our inference on cause 1.

In practice the choice of the tuning parameters is particularly challenging; the optimal value is determined up to a constant. Moreover, the theoretical results are asymptotic.  These together with the finite sample effects of $n \ll p$, lead to suboptimal performance of many proposed one-step correction estimators \cite[]{vdGeerEtal14,fang2016testing}. Suboptimality is amplified for survival models, due to the nonlinearity of the loss function and the presence of censoring --  both require more significant sample size (to observe asymptotic statements in the finite samples).
In the following, we propose
 a finite-sample correction to the construction of confidence intervals and in particular the estimated standard error (SE).

Let $se(\hat{b}_j;\bini)$ denote the asymptotic standard error as given in
  Section \ref{section:method CI}.
As a finite-sample correction we propose to consider    $se(\hat{b}_j;\bCI)$ in place of $se(\hat{b}_j;\bini)$, where
  the variance estimation  based on  the initial LASSO estimate $\bini$ is  replaced by the one-step $\bCI$.
This  can be viewed as another iteration of the bias-correction formula.
The resulting SE is therefore a ``two-step" SE estimator. We report the coverage rate of the confidence intervals constructed with this finite-sample correction in Table \ref{table:simFG} and we observe good coverage close to the nominal level of  $95\%$. We note that with 100 simulation runs the margin of error for the simulated coverage probability is about 2.18\%, if the true coverage is 95\%; that is, the observed coverage can range between 95$+/-$4.36\%.
We note that the coverage is good for all three coefficients, where non-zero or zero. In contrast, results in the existing literature suffer under-coverage of the non-zero coefficients.

  %the standard deviations (SD) over the 100 runs, the average of the standard errors (SE) given by the asymptotic variance of Section 2.4, \lily{what is SE corrected? is coverage based on this latter one?}

The last column `level/power' in Table \ref{table:simFG} refers to the empirical rejection rate of the null hypothesis that the coefficient is zero, by the two-sided
 Wald test
 $
Z = (\hat{b}_j - \beta_{1,j})/se(\hat{b}_j;\bini)
$ at a nominal $0.05$ significance level.
We see that although $ se(\hat{b}_j;\bini)$ is used, %it is much less sensitive to the finite-sample effect.
 the nominal level is
well preserved for the zero coefficient $\beta_{1,10}$, and the power is high for the non-zero coefficients $\beta_{1,1}$ and $\beta_{1,2}$ for the given sample sizes and signal strength.

We repeat the above simulations with different values  for $\beta_{1,1}$
to investigate the power of the Wald test.
The results are illustrated in Figure \ref{fig:power}, where we see that the power increases with $n$ and decreases with $p$ as expected.
\begin{figure}[h]
\centering
\caption{Power curve for testing $\beta_{1,1}=0$ at nominal level $0.05$}\label{fig:power}
\includegraphics[scale=0.63]{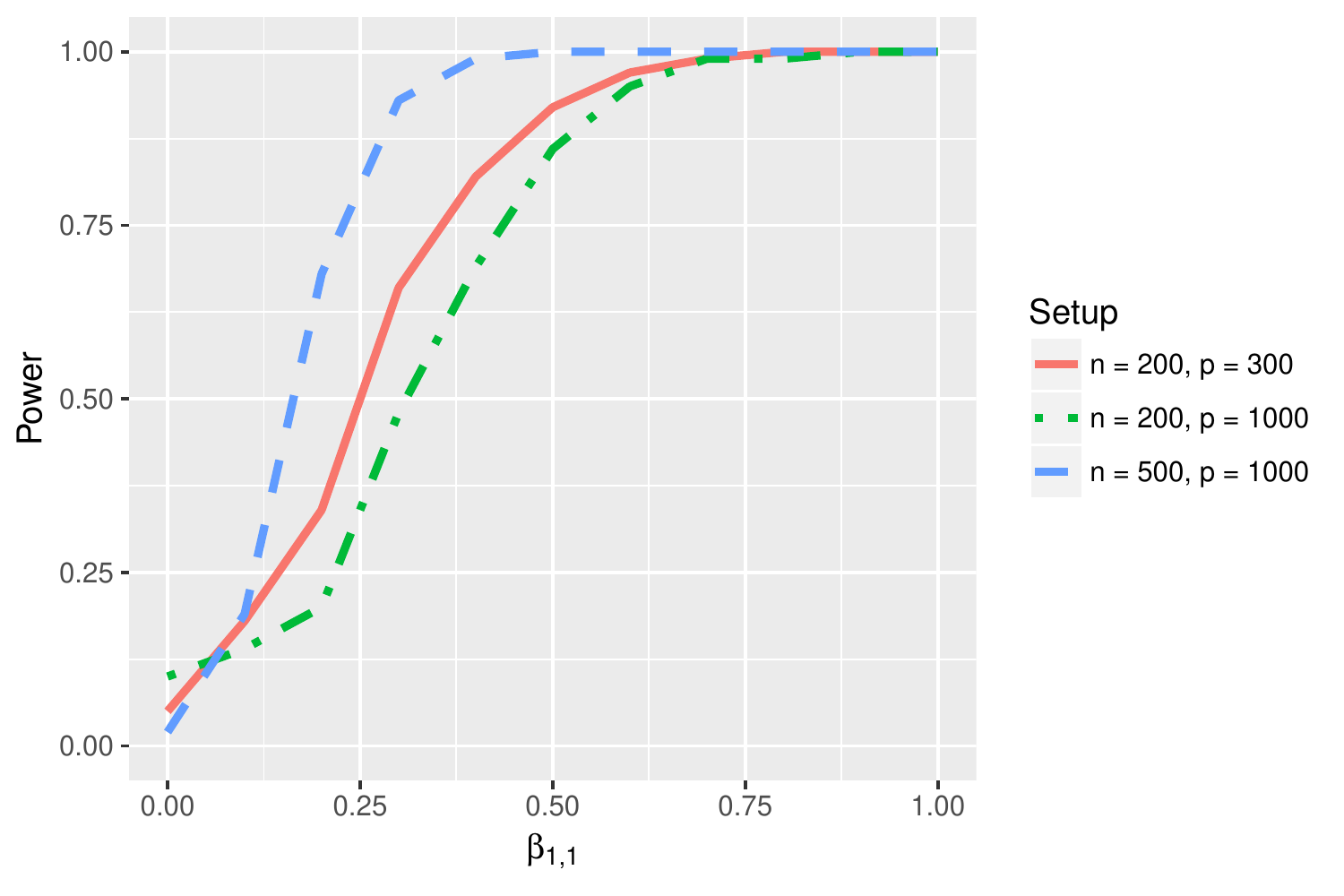}
\end{figure}

\subsection{Setup 2}

In the second setup we consider the case where the covariates are not all independent, which is more likely the case in practice for high dimensional data. We consider the block dependence structure also used  in \cite{BinderEtal09}.
%In their study,  the simulated data are generated from cause specific hazard model but analysed with Fine-Gray model.
%To avoid over-complication, we choose to focus on Fine-Gray model without investigating the robustness under miss-specification in high-dimensions.
%Hence, we keep the basic setup from \cite{FineGray99} but adopt the coefficients $\bbeta_1$, $\bbeta_2$ and the block correlation structure among the elements in $\bZ_i$'s from \cite{BinderEtal09}.
We consider $n=500$, $p=1000$;
$\beta_{1,1\sim 8}=0.5$, $\beta_{1,9\sim 12}=-0.5$ and the rest are all zero.
$\beta_{2,1\sim 4}=\beta_{2,13\sim16}=0.5$, $\beta_{2,5\sim 8}=-0.5$ and the rest of $\bbeta_1$ are all zero.
The covariates are grouped into four blocks of size $4$, $4$, $8$ plus the rest, with
the within-block correlations equal to $0.5$, $0.35$, $0.05$ and $0$. The four blocks are separated by the horizontal lines in Table \ref{table:simBinder}.
%\lily{frankly this is a bit strange, we can leave it for now; but what we did with Jiayi was something like 100 blocks of 10 variables each, the variables not having predicting power (think genes) are still necessarily correlated. Also setup 2 should have paralleled setup 1?}
 % but they are still within   reasonable ranges.

\begin{center}
\begin{longtable}{lccccccc}

  \caption{Simulation results with block correlated covariates }\label{table:simBinder}\\
\toprule
& True & Mean Est & SD & SE & SE corrected & Coverage & Level/Power\\
\midrule
\hline
\multicolumn{8}{c}{n=500, p=1000}\\
\hline

% latex table generated in R 3.3.3 by xtable 1.8-2 package
% Thu Jun 22 14:55:28 2017
 $\beta_{1,1}$ & 0.5 & 0.47 & 0.10 & 0.07 & 0.12 & 0.97 & 1.00 \\
  $\beta_{1,2}$ & 0.5 & 0.48 & 0.10 & 0.07 & 0.12 & 0.94 & 0.98 \\
  $\beta_{1,3}$ & 0.5 & 0.47 & 0.10 & 0.07 & 0.12 & 0.98 & 1.00 \\
  $\beta_{1,4}$ & 0.5 & 0.47 & 0.10 & 0.07 & 0.12 & 0.94 & 1.00 \\
 %  \addlinespace
   \hline$\beta_{1,5}$ & 0.5 & 0.48 & 0.10 & 0.06 & 0.11 & 0.93 & 1.00 \\
  $\beta_{1,6}$ & 0.5 & 0.46 & 0.10 & 0.06 & 0.11 & 0.94 & 1.00 \\
  $\beta_{1,7}$ & 0.5 & 0.47 & 0.09 & 0.06 & 0.11 & 0.95 & 1.00 \\
  $\beta_{1,8}$ & 0.5 & 0.47 & 0.08 & 0.06 & 0.11 & 0.98 & 1.00 \\
  % \addlinespace
   \hline$\beta_{1,9}$ & -0.5 & -0.44 & 0.08 & 0.06 & 0.11 & 0.93 & 1.00 \\
  $\beta_{1,10}$ & -0.5 & -0.42 & 0.08 & 0.06 & 0.11 & 0.92 & 1.00 \\
  $\beta_{1,11}$ & -0.5 & -0.41 & 0.08 & 0.06 & 0.11 & 0.91 & 1.00 \\
  $\beta_{1,12}$ & -0.5 & -0.43 & 0.07 & 0.05 & 0.11 & 0.94 & 1.00 \\
  $\beta_{1,13}$ & 0 & -0.01 & 0.06 & 0.05 & 0.11 & 0.98 & 0.11 \\
  $\beta_{1,14}$ & 0 & -0.02 & 0.05 & 0.05 & 0.11 & 1.00 & 0.06 \\
  $\beta_{1,15}$ & 0 & -0.02 & 0.06 & 0.06 & 0.11 & 0.99 & 0.08 \\
  $\beta_{1,16}$ & 0 & -0.02 & 0.06 & 0.05 & 0.11 & 1.00 & 0.05 \\
%   \addlinespace
   \hline$\beta_{1,30}$ & 0 & -0.00 & 0.05 & 0.06 & 0.11 & 1.00 & 0.01 \\

\bottomrule
\end{longtable}
\end{center}

Table \ref{table:simBinder} shows
the inferential results  for the non-zero coefficients $\beta_{1,1} \sim \beta_{1,12}$,
as well as the zero coefficients $\beta_{1,13} \sim \beta_{1,16}$ from the third correlated block that also contains some of the non-zero coefficients, and plus arbitrarily chosen zero coefficient $\beta_{1,30}$.
The initial LASSO estimator tended to select only one of every four non-zero coefficients of the correlated covariates (data not shown), as it is known that block dependence structure is particularly challenging for the Lasso type estimators. On the other hand,
 the one-step estimator performed remarkably well, capturing all of the non-zero coefficients.
 %Both the coverage and level/power of a test are within nominal levels.

Compared to the results in the last part of Table \ref{table:simFG} with the same $n$ and $p$,
the block correlated covariates led to slightly more bias in $\bCI$,
% (due to gross errors of admitted by the initial estimator),
although the CI coverage remained high.
The power also remained high, although in the third block with the mixed
signal  and noise variables
 the type I error rates appeared  slightly high.

\section{SEER-Medicare data example}\label{section:data analysis}

The SEER-Medicare linked database contains clinical
information  and claims codes  for $57011$
patients diagnosed between  $2004$ and $2009$.
%Among them there were \(1247\) cancer related deaths and \(5221\)
%deaths  that were not related to cancer. The rest of observations were censored.
The clinical and demographic information were collected at  diagnosis, and the insurance claim data were from the year prior to diagnosis. The clinical information
contained PSA, Gleason Score, AJCC stage and year of diagnosis.
Demographic information included age, race, and
marital status.  The same data set was considered in \cite{2017arXiv170407989H}, where the emphasis was on variable selection and  prediction error. Our focus is  on testing and construction of confidence intervals.

% \begin{threeparttable}[h]
%   \end{threeparttable}

In the following, we consider 2000 patients diagnosed during the year of
2004.
The only cause for loss to follow-up was the administrative censoring
at the end of the study which was year 2011.
Consequently, the year of enrollment was the only factor affecting the censoring distribution.
In our sample, all the subjects share the same year of enrollment 2004,
so we may reasonably make the independent censoring assumption.
% and perform estimation and inference as proposed in Section 3.
 Among them \(76\) died from the cancer and
\(337\)  had deaths unrelated to cancer.
The process of identifying of the causes
is detailed in \cite{RiviereEtal19}.
There were $9326$ binary claims
codes in the data. Here we would like to identify the risk factors for non-cancer
mortality using the Fine-Gray model. We kept only  the claims codes with at least $10$
and at most $1990$
occurrences.
The resulting  dataset had  \(1197\)
covariates.
 We center and
standardize all the covariates before performing the analysis. To
determine the penalty parameters \(\lambda\)  and \(\lambda_j\)  we used 10-fold cross-validation.

In Table \ref{table:SMdata}, we present the result for  21 coefficients. Here, we focused on potential risk factors for non-cancer mortality, such as heart disease and colon cancer (different than prostate cancer); the coefficients to be tested were chosen ahead of time following recommendations from the doctors. We also include the clinical markers associated with the prostate cancer in comparison.
%\lily{I don't know how we selected the ones to present? Normally you'd sort by p-values, and then maybe present the top few, plus some that were set to 0 by lasso but significant by our one-step estimator. Previously running univariate analyses the top 3 were pulmonary and heart related, and AFib; it would have been interesting to compare..}
%We include all the clinical information.
%We also include a list of $21$ variables reported by \hj{Jiayi's paper} working on the same data set.
%Another six variables are include for their statistical significance under
%Bonferroni correction for multiple testing.
A descriptions of the variables is given in Table \ref{table:code_description}.
For each coefficient,
we report the initial estimate $\bini$, one-step estimate $\bCI$,
corrected SE,
the $95\%$ CI constructed with the corrected SE
and  the Wald test p-value (2-sided) calculated using the uncorrected SE.

\vskip 30pt

\begin{longtable}{lccccccc}
  \caption{Inference for the SEER-Medicare linked data on non-cancer mortality among prostate cancer patients}\label{table:SMdata}
 % \endfirsthead
\endhead
\toprule
Variables & Initial estimate && \multicolumn{4}{c}{One-step  estimate and Inference} \\
 \cline{4-8}
& $\hat{\beta}$ & & $\hat{b}$ &   se($\hat{b}$) & $95\%$ CI & p-value\\
\midrule
% latex table generated in R 3.3.3 by xtable 1.8-2 package
% Thu Jun 22 16:50:00 2017
 Age & 0.075 &  & 0.096 &   0.009 & [ 0.078,  0.114] & \ 2e-24$^*$  \\
   Marital & 0 &  & 0.218 &   0.147 & [-0.071,  0.507] & 0.042 \\
   Race.OvW & 0 &  & -0.213 &  0.224 & [-0.652,  0.225] & 0.317 \\
   Race.BvW & 0.244 &  & 0.528  & 0.122 & [ 0.288,  0.767] &  \ 1e-04$^*$ \\
    \hline
    PSA & 0 &  & 0.005 &   0.003 & [-0.000,  0.010] & 0.041 \\
    GleasonScore & 0 &  & 0.084 &  0.050 & [-0.014,  0.182] & 0.085 \\
     AJCC-T2 & 0 &  & -0.130 &  0.146  & [-0.418,  0.157] & 0.218 \\
     \hline
   ICD-9 51881 & 0.866 &  & 1.357 &   0.361 & [ 0.650,  2.064] &  \ 4e-07$^*$ \\
    ICD-9 4280 & 0.404 &  & 0.697 &   0.062 & [ 0.576,  0.818] &  \ 2e-06$^*$ \\
   CPT 93015 & -0.061 &  & -1.042 &   0.327 & [-1.683, -0.401] &  \ 4e-05$^*$ \\
  ICD-9 42731 & 0.135 &  & 0.459 &   0.191 & [ 0.086,  0.833] &  \ 0.001$^*$ \\
  CPT 72050 & 0 &  & 3.718 &  0.208 &   [ 3.310,  4.125] &  \ 4e-05$^*$ \\
  ICD-9 6001 & 0 &  & -2.454 & 0.577  & [-3.585, -1.322] &  \ 0.000$^*$  \\
  CPT 74170 & 0 &  & -1.689 &  0.288  &[-2.255, -1.124] &  \ 0.001$^*$  \\
     \hline
 ICD-9 2948 & 0.539 &  & 0.746 &   0.205 & [ 0.343,  1.148] & 0.009 \\
  ICD-9 49121 & 0.150 &  & 0.476 &  0.215 & [ 0.055,  0.896] & 0.015 \\
   ICD-9 2989 & 0.079 &  & 0.450 & 0.135  &[ 0.184,  0.715] & 0.062 \\
   ICD-9 79093 & -0.056 &  & -0.348 & 0.176 &[-0.693, -0.002] & 0.088 \\
   ICD-9 41189 & 0 &  & 1.332 & 0.434  & [ 0.480,  2.184] & \ \ \ 0.003$^{**}$  \\
  CPT 45380 & 0 &  & -2.250 &  0.544  &[-3.318, -1.182] &  \ \ \ 0.003$^{**}$  \\
 ICD-9 3320 & 0 &  & 0.378 &   0.373 & [-0.353,  1.110] & 0.327  \\
 \bottomrule
% \footnotetext[1]{}
\end{longtable}
 \begin{tablenotes}
 \small
     \item[] $*$ denotes $5$\% significance after Bonferroni correction for these 21 variables,
    whereas $**$ denotes $10$\% significance after Bonferroni correction for these 21 variables
   \end{tablenotes}

\vskip 20pt

%The star
%marks significant at level 0.05 under Bonferroni correction.
%Note that we use different Bonferroni correction factor for variables in and out of the pre-determined list. The p-value require for significance is $0.05/28 = 0.0018$ for the variables in the pre-determined list and $0.05/1197 = 0.00004$.

 %We studied known non-cancer mortality risks including hearth problems, other non-prostate type of cancers, fall at an elderly patients and similar.
%For example indicators of a potential heart failure  belong in such group.
In Table \ref{table:SMdata}, we see that the claims codes ICD-9 4280,   CPT 93015,    ICD-9 42731 are all related to the heart disease, and are all  significant at 5\% level Bonferonni correction for the 21 variables included in the table.
%therefore,  the proposed method showcases a stability of variable detection.
However, a heart attack indicator variable,   ICD-9 41189, shows up significant at 10\% level although the naive regularized estimator was not able to select this variable as important; this indicates  that our inference procedure is much more delicate (stable) at discovering significant variables.
% whereas all heart attack and deadly heart disease  variables (claims) showed univariate significance at 5\% level.
In support of that, an indicator of a possible cancer in the abdomen,
CPT 74170,  is reported as significant at 5\% although the initial Lasso regularized method failed to include such variable.
%\lily{in all tables in this paper, if lasso sets a coefficient to zero, it should appear as 0 not 0.000; the same with all the true values in simulation}.
Similar result is seen for the indicator of a fall (CPT 72050) which for an elderly person can be fatal.
 An indicator of a colon cancer (CPT 45380) turns out to be significant at 10\% although the  Lasso method set it to zero initially.
 Therefore,
% methods that form p-values purely based on refitting the initial Lasso may suffer from failing to discover statistically important variables; in contrast,
 our one-step method is able to recover important risk factors that would have been missed by the initial regularized estimator.
% showcases strong resilience to mis-selection of the initial estimate.

\begin{longtable}{l p{340pt}}
\caption{ Description of the variables in Table \ref{table:SMdata}}\label{table:code_description}
  \endfirsthead
\endhead
\toprule
Code & Description \\
\midrule
% latex table generated in R 3.3.3 by xtable 1.8-2 package
% Thu Jun 22 22:10:48 2017
 Age & Age at diagnosis \\
 Marital & marSt1: married vs other \\
  Race.OvW & Race: Other vs White \\
  Race.BvW & Race: Black with White \\
   PSA & PSA \\
   GleasonScore & Gleason Score \\
     AJCC-T2 & AJCC stage-T: T2 vs T1 \\
     ICD-9 51881 & Acute respiratry failure (Acute respiratory failure) \\
       ICD-9 4280 & Congestive heart failure; nonhypertensive [108.] \\
       CPT 93015 & Global Cardiovascular Stress Test \\
       ICD-9 42731 & Cardiac dysrhythmias [106.] \\
     CPT 72050 & Diagnostic Radiology (Diagnostic Imaging) Procedures of the Spine and Pelvis \\
         ICD-9 6001 & Nodular prostate \\
     CPT 74170 & Diagnostic Radiology (Diagnostic Imaging) Procedures of the Abdomen \\
          ICD-9 2948 & Delirium dementia and amnestic and other cognitive disorders [653] \\
           ICD-9 49121 & Obstructive chronic bronchitis \\
             ICD-9 2989 & Unspecified psychosis \\
         %    ICD-9 79093 & Elevated prostate specific antigen [PSA] \\
              ICD-9 41189 & acute and subacute forms of ischemic heart disease, other \\
               CPT 45380 & Under Endoscopy Procedures on the Rectum \\
            ICD-9 3320 & Parkinsons disease [79.] \\
              \bottomrule
\end{longtable}

In contrast,
 non-life-threatening diseases,  were not selected as significant predictors for the non-cancer mortality. These include
 % and indeed we retrieve those results; in Table 5 we see that
Parkinson's (ICD-9 3320),  Psychosis (ICD-9 2989), Bronchitis (ICD-9 49121) and Dementia (ICD-9 2948) in the table. It is worth noting that some of these were selected by the initial estimate but were then corrected by our test.
% for example have high p-values indicating failure to reject the null hypothesis.
We also note that the prostate cancer related variables, PSA, Gleason Score ahd AJCC  all have large $p$-values for non-cancer mortality. This is consistent with the results in \cite{2017arXiv170407989H}, where under the competing risk models the predictors for a second cause only has secondary importance in predicting the events due to the first cause.

\section{Discussion}

 This article focuses on estimation and inference under the Fine-Gray model with many more covariates than the number of events, which is well-known to be the effective sample size for survival data.  The article studies the rate of convergence of a Lasso estimator and develops a new one-step estimator that can be utilized for asymptotically optimal inference: confidence intervals and testing. These results can be generalized to any sparsity-inducing and convex penalty functions including but not limited to one-step SCAD, adaptive LASSO, elastic net, to name a few.   Moreover, it is worth noting that the variance estimation is novel in that it regresses a re-weighted score vector onto the score vector; in this way, the usual difficulty with asymptotic Hessian is avoided; it is worth pointing that the sandwich estimator or bootstrap carry biases in high-dimensions.
 
An often overlooked restriction on the time-dependent covariates $Z_i(t)$, $i=1,\dots,n$, under  the Fine-Gray model is that  $Z_i(t)$ must be observable even after the $i$-th subject experiences a type 2 event.
In practice, $Z_i(t)$ should be either time independent or external \cite[]{KalbfleischPrentice02}.
In our case the continuity conditions \ref{Assum:CR}
and \ref{Assum:CRI} are  easily satisfied if the majority of the elements in $Z_i(t)$  are time independent, which is most likely to be the case in practice. Our theory does not apply in studies involving longitudinal variables that are supposed to be truly measured continuously over time.

We have illustrated that the method based on regularization only (without bias correction) might have severe disadvantages in many complex data situations -- for example, it may potentially fail to identify relevant variables that are associated with the response.  From the analysis of the SEER-medicare data, we see that variables like CPT 72050  (related to fall) or, CPT 74170 (related to diagnostic imaging of the abdomen, often for suspected malignancies) would not have been discovered as important risk factors for non-cancer mortality by regularization alone. In reality, both can be life-threatening events for an elderly patient.
 The one-step estimate, on the other hand, was able to detect these,  therefore providing a valuable tool for practical applications. The one-step estimator is applicable as long as the model is sparse, and no minimum signal strength is required; this is another important aspect which makes the estimator more desirable for practical use than the LASSO type estimators.

%\lily{I initially thought of  conditionally independent censoring as a common discussion point if we have assumed independent censoring, but unless we can say something in the high dimension setting there is probably nothing really to say.}

\begin{center}
\textbf{Acknowledgement}
\end{center}
We would like to acknowledge our collaboration with Dr.~James Murphy of the UC San Diego Department of Radiation Medicine and Applied Sciences on the linked Medicare-SEER data analysis project that motivated this work. We would also like to thank his group for help in preparing the data set.

\appendix

\vskip 20pt 

 \centerline{\bf\Large Appendix}
 
 \vskip 10pt 

\renewcommand\thefigure{\thesection.\arabic{figure}}
\renewcommand\theequation{\thesection.\arabic{equation}}
\renewcommand\thetable{\thesection.\arabic{table}}
\renewcommand\thelemma{\thesection.\arabic{lemma}}

\setcounter{equation}{0}
\setcounter{lemma}{0}

In the appendix, we denote global quantities as $Q$ and event sets as $\Omega$ with subscripts
labelled by their order of appearance.
Other quantities are all local, i.e. only defined for the current Lemma.
We denote the ordered observed type-1 event times as
$\T{1}, \dots, \T{\KN}$.
\section{Concentration Inequalities}
Here we give the statements of the inequalities frequently used in our proofs.
The notations in this section are all generic.

\subsection{Classical Concentration Inequalities}\label{section:classical}
\begin{lemma}\label{lemma:Hoeffding}
\textbf{Hoeffding's Inequality} (Theorem 2 of \cite{Hoeffding63} p.4)
If $X_1,\dots,X_n$ are independent and $a_i \le X_i \le b_i$ $(i=1,2,\dots,n)$,
then for $t>0$
$$
\Pr(\bar{X}-\mu \ge t) \le \exp\left(-\frac{2n^2t^2}{\sum_{i=1}^n (b_i-a_i)^2}\right).
$$
\end{lemma}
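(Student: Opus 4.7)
The plan is to follow the classical Chernoff-bound route, which is the cleanest way to obtain the sharp constant $2$ in the exponent. The starting point is Markov's inequality applied to the moment generating function: for any $s>0$,
\begin{equation*}
\Pr(\bar X - \mu \ge t) = \Pr\bigl(e^{s n(\bar X-\mu)} \ge e^{snt}\bigr) \le e^{-snt}\, \E\bigl[e^{s\sum_{i=1}^n (X_i - \E X_i)}\bigr].
\end{equation*}
By independence the expectation factors as $\prod_{i=1}^n \E\bigl[e^{s(X_i-\E X_i)}\bigr]$, reducing the problem to a uniform bound on the MGF of a single centered bounded variable.

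The key auxiliary step is the standard one-variable lemma: if $Y$ is a random variable with $\E Y = 0$ and $Y \in [a,b]$ almost surely, then $\E e^{sY} \le \exp\!\bigl(s^2(b-a)^2/8\bigr)$ for every $s\in\R$. I would prove this by writing $Y$ as a convex combination $Y = \tfrac{b-Y}{b-a}\,a + \tfrac{Y-a}{b-a}\,b$ so that by convexity of $y\mapsto e^{sy}$,
\begin{equation*}
e^{sY} \le \tfrac{b-Y}{b-a}\, e^{sa} + \tfrac{Y-a}{b-a}\, e^{sb};
\end{equation*}
taking expectations and using $\E Y = 0$ gives $\E e^{sY} \le e^{\varphi(u)}$ with $u = s(b-a)$ and $\varphi(u) = -pu + \log(1-p+pe^u)$ where $p = -a/(b-a)\in[0,1]$. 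A direct computation shows $\varphi(0)=\varphi'(0)=0$ and $\varphi''(u)\le 1/4$, so Taylor's theorem yields $\varphi(u)\le u^2/8$, giving the stated bound.

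Applying this lemma to each $Y_i = X_i - \E X_i$, which lies in $[a_i - \E X_i,\, b_i - \E X_i]$ — an interval of the same length $b_i - a_i$ — we get
\begin{equation*}
\Pr(\bar X - \mu \ge t) \le \exp\!\left(-snt + \frac{s^2}{8}\sum_{i=1}^n (b_i-a_i)^2\right).
\end{equation*}
Finally, I optimize the right-hand side over $s>0$: the minimizer is $s^* = 4nt/\sum_i (b_i-a_i)^2$, which produces the stated exponent $-2n^2 t^2/\sum_i(b_i-a_i)^2$.

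The only subtle step is the one-variable MGF bound, specifically verifying $\varphi''(u) \le 1/4$; everything else is algebra. There is no real obstacle here — this is textbook material — so I would simply make sure that the convexity argument is stated carefully and that the optimization over $s$ is done explicitly to recover the constant $2$ in the exponent.
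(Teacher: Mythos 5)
Your proof is correct: the Chernoff bound, the one-variable MGF lemma via convexity with $\varphi''(u)=\theta(1-\theta)\le 1/4$, and the optimization at $s^*=4nt/\sum_i(b_i-a_i)^2$ all check out and recover the stated exponent. The paper does not prove this lemma at all --- it simply cites Hoeffding (1963) as a classical result --- and your argument is precisely the standard proof from that reference, so there is nothing to reconcile.
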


\begin{lemma}\label{lemma:Azuma}
\textbf{A version of Azuma's Inequality} (Theorem 1 and Remark 7 of \cite{Sason13} p.3 and p.5)
Let $\{X_k,\mathcal{F}_k\}_k=0^\infty$ be a discrete-parameter real-valued
martingale sequence such that for every $k$, the condition $|X_k-X_{k-1}|\le a_k$ holds almost surely
for some non-negative constants $\{a_k\}_{k=1}^\infty$. Then
$$
\Pr\left(\max_{k\in 1,\dots, n} |X_k - X_0| \ge t\right) \le 2 \exp\left(-\frac{t^2}{2\sum_{k=1}^n a_k^2}\right)
$$

\end{lemma}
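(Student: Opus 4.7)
The plan is to follow the standard Chernoff-exponential-martingale argument, with the maximum handled by Doob's submartingale inequality. Without loss of generality I would first reduce to the case $X_0 = 0$ by subtracting $X_0$, and work with the martingale differences $Y_k = X_k - X_{k-1}$, which satisfy $|Y_k| \le a_k$ almost surely and $\E[Y_k \mid \mathcal{F}_{k-1}] = 0$.

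The first step is Hoeffding's lemma for conditional expectations: if $Y$ is a random variable with $\E[Y \mid \mathcal{G}] = 0$ and $|Y| \le a$ a.s., then $\E[e^{sY} \mid \mathcal{G}] \le \exp(s^2 a^2 / 2)$ for all $s \in \mathbb{R}$. This follows from convexity of $y \mapsto e^{sy}$ on $[-a,a]$: write $e^{sy} \le \tfrac{a-y}{2a}e^{-sa} + \tfrac{a+y}{2a}e^{sa}$, take conditional expectation to kill the linear term, and bound the resulting function of $s$ via a Taylor expansion in $sa$.

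Next I would combine this with Chernoff's inequality and the tower property. For fixed $s > 0$, iteratively conditioning gives
\begin{equation*}
\E[e^{s X_n}] = \E\bigl[e^{s X_{n-1}}\, \E[e^{s Y_n} \mid \mathcal{F}_{n-1}]\bigr] \le e^{s^2 a_n^2/2}\,\E[e^{s X_{n-1}}],
\end{equation*}
and induction yields $\E[e^{s X_n}] \le \exp\bigl(\tfrac{s^2}{2}\sum_{k=1}^n a_k^2\bigr)$. Since $e^{s X_k}$ is a nonnegative submartingale (by Jensen and the martingale property of $X_k$), Doob's maximal inequality gives
\begin{equation*}
\Pr\!\Bigl(\max_{1 \le k \le n} X_k \ge t\Bigr) \le e^{-st}\,\E[e^{s X_n}] \le \exp\!\Bigl(-st + \tfrac{s^2}{2}\sum_{k=1}^n a_k^2\Bigr).
\end{equation*}
Optimizing over $s > 0$ by taking $s = t / \sum_{k=1}^n a_k^2$ produces the one-sided bound $\exp\bigl(-t^2/(2\sum_k a_k^2)\bigr)$. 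Applying the same argument to the martingale $-X_k$ (whose differences also satisfy the same bound) and taking a union bound over the two events gives the factor of $2$ in the stated inequality for $\max_k |X_k - X_0|$.

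There is no real obstacle here since the result is classical; the only modest care is in the reduction to the maximum, which requires Doob's inequality rather than a plain union bound over $k$ (a union bound would lose a factor of $n$). Everything else is bookkeeping of constants, and the final statement matches the form in the lemma exactly.
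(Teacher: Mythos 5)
Your proof is correct: the conditional Hoeffding lemma, the Chernoff/tower-property bound on $\E[e^{sX_n}]$, Doob's maximal inequality applied to the nonnegative submartingale $e^{sX_k}$, optimization at $s=t/\sum_{k=1}^n a_k^2$, and the two-sided union bound together give exactly the stated maximal form of Azuma's inequality. The paper does not prove this lemma at all—it is quoted from \cite{Sason13}—and your argument is essentially the standard one underlying that cited result, so there is nothing in the paper's treatment that differs from or improves on what you wrote.
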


\subsection{Concentration Inequalities for Time-dependent Processes}\label{section:concentration}

\begin{lemma}\label{lemma:supijt}
Let $\{(\bfS_i(t), N_i(t)) \in \R^q \times \N: i=1,\dots,n, t \in [0,t^*]\}$ be i.i.d. pairs of random processes.
Each $N_i(t)$ is a counting process bounded by $\KNi$.
Denote its jumps as $0 \le t_{i1} < \dots < t_{iK_i} \le t^*$.
Let $\bar{\bfS}(t) = n^{-1} \sumin \bfS_i(t)$ and $\E\{\bfS_i(t)\} = \bfs(t)$.
Suppose $\supij\supt\|\bfS_i(t)-\bfS_j(t)\|_{\max} \le K_S$ almost surely.
Then,
\begin{enumerate}[label = (\roman*), ref = \ref{lemma:supijt}(\roman*)]
  \item \label{lemma:supij}
  $\Pr\left(\supi\sup_{j=1,\dots, K_i} \left\|\bar{\bfS}(t_{ij}) - \bfs(t_{ij})\right\|_{\max} > K_S x+(K_S)/n\right) < 2 n \KNi q e^{-nx^2/2}.$
  \item \label{lemma:supt}
  Assume in addition that each $\bfS_i(t)$ is c\`{a}gl\`{a}d generated by
$$
\bfS_i(t) = \bfS_i(0) + \intt \bfd_s(u)du + \intt \bfJ_s(u)dN_i(u)
$$
for some $\bfd_s(t)$ and $\bfJ_s(t)$ satisfying $\|\bfd_s(t)\|_{\max} < \LS$ and
$\|\bfJ_s(u)\|_{\max}<K_S$,
and $\E\{N_i(t)\} = \intt \lamN_i(u) du$ for some $\lamN_i(t) \le \LN$.
We have $$\supi\supt \left\|\bar{\bfS}(t) - \bfs(t)\right\|_{\max} =  O_p(\sqrt{\log(n\KNi q)/n}).$$
\end{enumerate}
\end{lemma}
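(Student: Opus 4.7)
The plan is to exploit the i.i.d.\ pairing of $(\bfS_i, N_i)$ via a leave-one-out decomposition. For fixed indices $(i,j)$ I would write
\[
\bar{\bfS}(t_{ij}) - \bfs(t_{ij}) = \frac{1}{n}\bigl\{\bfS_i(t_{ij}) - \bfs(t_{ij})\bigr\} + \frac{1}{n}\sum_{k\neq i}\bigl\{\bfS_k(t_{ij}) - \bfs(t_{ij})\bigr\}.
\]
The $k=i$ summand contributes at most $K_S/n$ in max norm, since the hypothesis $\sup_{i,j,t}\|\bfS_i(t)-\bfS_j(t)\|_{\max}\le K_S$ combined with Jensen applied to an independent copy $\bfS_j$ gives $\|\bfS_i(t)-\bfs(t)\|_{\max}\le K_S$ almost surely. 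Because the $n$ pairs are independent, $\{\bfS_k:k\neq i\}$ is independent of $(\bfS_i,N_i)$, hence of the random time $t_{ij}$. Conditioning on $t_{ij}$, the remaining $n-1$ summands become i.i.d.\ bounded vectors at a deterministic evaluation point, so coordinate-wise Hoeffding (Lemma \ref{lemma:Hoeffding}) together with a union bound over the $q$ coordinates yields the tail $2q\exp\!\bigl(-(n-1)y^2/(2K_S^2)\bigr)$. A final union bound over the at most $n\KNi$ pairs $(i,j)$, followed by the reparametrization $y=nK_Sx/(n-1)$, reproduces the stated bound with deviation $K_Sx+K_S/n$ and rate $e^{-nx^2/2}$.

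\textbf{Plan for (ii).} I would pass from the jump-time sup to the full-time sup by a Lipschitz/discretization argument. Between any two consecutive jumps of any $N_i$, the integral representation of $\bfS_i$ forces $\bar{\bfS}(\cdot)$ to vary at rate at most $\LS$ in max norm, while the deterministic $\bfs(t)$ is globally Lipschitz with constant $\LS+K_S\LN$, since $\E\!\int_{t'}^{t}\bfJ_s(u)dN_i(u)\le K_S\LN(t-t')$. Hence on each inter-jump interval $\bar{\bfS}(t)-\bfs(t)$ is $L_0$-Lipschitz with $L_0=2\LS+K_S\LN$. I would then form $\mathcal{G}$ as the union of the random set $\{t_{ij}\}_{i,j}$ with a deterministic grid of spacing $\delta=1/n$ on $[0,t^*]$, so $|\mathcal{G}|=O(n\KNi+n)$, and obtain
\[
\sup_{t\in[0,t^*]}\|\bar{\bfS}(t)-\bfs(t)\|_{\max}\le \max_{u\in\mathcal{G}}\|\bar{\bfS}(u)-\bfs(u)\|_{\max}+L_0\delta.
\]
Part (i) already controls the maximum over the jump times; at the deterministic grid points a direct Hoeffding on the i.i.d.\ average gives the coordinate-wise tail $2\exp(-nx^2/(2K_S^2))$, and a union bound over $|\mathcal{G}|q=O(n\KNi q)$ events with $x\asymp K_S\sqrt{\log(n\KNi q)/n}$ delivers the claimed rate. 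The discretization remainder $L_0\delta=O(1/n)$ is absorbed into the $\sqrt{\log(n\KNi q)/n}$ term.

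\textbf{Main obstacle.} The subtle difficulty throughout is the dependence between each $\bfS_i$ and its own jump times $t_{ij}$: the evaluation points are random and correlated with the very summands being averaged, so naive Hoeffding at a deterministic $t$ is unavailable. The leave-one-out split is exactly what restores independence of the remaining $n-1$ summands from the chosen time and makes the conditional Hoeffding step legitimate at the cost of an additive $K_S/n$ term. A secondary concern in (ii) is to keep the Lipschitz constant $L_0$ dimension-free so that $L_0\delta$ does not dominate the target rate; the structural hypotheses $\|\bfd_s\|_{\max}\le\LS$, $\|\bfJ_s\|_{\max}\le K_S$, and $\lamN_i\le\LN$ are precisely what guarantee this.
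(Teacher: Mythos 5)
Your plan is correct and follows essentially the same route as the paper's proof: for (i) the paper also drops the self term (costing $K_S/n$), exploits independence of the remaining $n-1$ summands from the random jump time — phrased there as Azuma's inequality applied to the Doob martingale $n^{-1}\sum_{i=2}^{l}\{\bfS_i(t_{11})-\bfs(t_{11})\}$, which is equivalent to your conditional-Hoeffding step — and then takes a union bound over the at most $n\KNi$ jump times and $q$ coordinates. For (ii) the paper likewise augments the jump times with a deterministic grid of spacing $t^*/n$, bounds the within-bin variation of $\bar{\bfS}(t)$ and $\bfs(t)$ by $(\LS+K_S\LN)t^*/n$ using the integral representation and the c\`{a}gl\`{a}d property, and combines part (i) with Hoeffding at the deterministic grid points, exactly as you propose.
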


\begin{lemma}\label{lemma:supopM}
Let $\{M_i(t): t \in [0,t^*], \; i = 1, \dots, n\}$ be a $\Ft$-adapted counting process martingales $M_i(t) = N_i(t)-\intt Y_i(t)h_i(u)du$ satisfying $\supi\supt h_i(t) \le \KhA$.
Let  $\{\bfH_i(t): t \in [0,t^*], \; i = 1, \dots, n\}$ be the $q$ dimensional $\mathcal{F}_{t-}$-measurable processes
such that $$\supi\supt \|\bfH_i(t)\|_{\max} \le \KH.$$
For $\MH(t) = n^{-1}\sumin \intt \bfH_i(u) dM_i(u)$ ,
we have
\begin{enumerate}[label = (\roman*), ref = \ref{lemma:supopM}(\roman*)]
\item \label{lemma:supM}
$
\Pr\left( \supt \|\MH(t)\|_{\max} \ge \KH(1+\KhA t^*)x + \KH\KhA t^*/n \right) \le 2q e^{-nx^2/4}.
$
\item \label{lemma:opM} Assume in addition $\supi\supt\|\bfH_i(t)\|_{\max} = O_p(a_n)$ and
$\KhA t^* \asymp O(1)$.
Then,
$\supt \|\MH(t)\|_{\max} = O_p(a_n\sqrt{\log(q)/n}).$
\end{enumerate}
\end{lemma}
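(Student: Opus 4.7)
The strategy is to reduce both parts to a coordinate-wise martingale concentration argument, finishing via a union bound over $k\in\{1,\dots,q\}$. For fixed $k$, the scalar process $V_k(t):=n^{-1}\sumin\intt H_{i,k}(u)\,dM_i(u)$ is a mean-zero c\`adl\`ag $\mathcal{F}_t$-martingale, since each $H_{i,k}$ is $\mathcal{F}_{t-}$-predictable and each $M_i$ is an $\mathcal{F}_t$-martingale. Two quantities control its fluctuation: its jumps satisfy $|\Delta V_k(t)|\le \KH/n$ (each $N_i$ increments by one at isolated times), and its predictable quadratic variation satisfies
\[
\langle V_k\rangle_{t^*}=n^{-2}\sumin\inttao H_{i,k}(u)^2\,Y_i(u)\,h_i(u)\,du\le (\KH)^2\KhA t^*/n.
\]

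For part (i), I would combine Doob's maximal inequality applied to the exponential submartingale $\exp(\lambda V_k(\cdot))$ with a Hoeffding-type MGF estimate on $V_k(t^*)$. Splitting $\xi_{i,k}:=\inttao H_{i,k}(u)\,dM_i(u)$ into its counting part and its compensator part yields the deterministic bound $|\xi_{i,k}|\le \KH(1+\KhA t^*)$, using $N_i(t^*)\le 1$ in the survival setting. Under independence across subjects (as in the intended applications), Hoeffding's lemma gives $\mathbb{E}[\exp(\lambda V_k(t^*))]\le\exp(\lambda^2(\KH)^2(1+\KhA t^*)^2/(2n))$, and optimizing $\lambda$ after Doob produces a sub-Gaussian tail of the form $\Pr(\sup_{t\le t^*}|V_k(t)|\ge y)\le 2\exp\bigl(-ny^2/[c(\KH)^2(1+\KhA t^*)^2]\bigr)$ for a numeric constant $c$. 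Setting $y=\KH(1+\KhA t^*)x$, absorbing the additive $\KH\KhA t^*/n$ slack into the small-sample gap between the continuous supremum and the value at the last jump time of $\sum_i N_i$, and union-bounding over $k$ produces the stated inequality. If independence across $i$ is unavailable, the Hoeffding step can be replaced by a Freedman-type exponential-supermartingale argument using the two bounds above directly.

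For part (ii), the plan is to localize on the event $\Omega_n=\{\supi\supt\|\bfH_i(t)\|_{\max}\le \widetilde{C}a_n\}$, which has probability tending to one by hypothesis. On $\Omega_n$, apply part (i) with $\KH$ replaced by $\widetilde{C}a_n$ and set $x=c\sqrt{\log(q)/n}$ for $c$ sufficiently large. Because $\KhA t^*\asymp 1$, the leading term is of order $a_n\sqrt{\log(q)/n}$, while the additive $a_n/n$ slack is $o(a_n\sqrt{\log(q)/n})$ for any $q\ge 1$. The failure probability $2q^{1-c^2/4}\to 0$ completes $\supt\|\bar{M}^H(t)\|_{\max}=O_p(a_n\sqrt{\log(q)/n})$.

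The main technical challenge is lifting the pointwise exponential-moment estimate to a uniform statement on $[0,t^*]$, which is precisely where Doob's maximal inequality for the exponential submartingale — or, more robustly, a Freedman-type martingale inequality using the jump bound $\KH/n$ and the quadratic-variation bound $(\KH)^2\KhA t^*/n$ directly — is indispensable. Honestly tracking constants is the other delicate point: the additive $\KH\KhA t^*/n$ in the stated bound encodes the slack between the continuous supremum and the jump times of $\sum_i N_i$, and verifying it requires careful bookkeeping between the compensator's continuous drift rate and the discrete increments of the counting processes.
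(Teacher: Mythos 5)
Your overall strategy (bound each coordinate $V_k(t)=n^{-1}\sumin\intt H_{i,k}(u)\,dM_i(u)$ and then union-bound over $k$) matches the paper's, but the mechanism you choose for the coordinate-wise uniform bound is genuinely different. The paper does \emph{not} use an exponential-moment bound at the terminal time $t^*$ plus Doob on $\exp(\lambda V_k)$. Instead it discretizes the time axis at the $2n$ stopping times obtained by merging the jump times of $\sum_i N_i$ with the deterministic grid $\{kt^*/n\}$, invokes optional stopping to obtain a discrete-time martingale $\bar{M}^H(R_k)$ with almost-surely bounded increments (jump part $\le \KH/n$, compensator part $\le \KH\KhA t^*/n$ because each bin has length $\le t^*/n$), and applies the Doob-maximal version of Azuma's inequality to that discrete martingale. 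The deterministic grid is precisely what caps the compensator's drift on each bin, which is where the additive $\KH\KhA t^*/n$ in the statement comes from: it is the drift accrued on the partial bin $(R_k,t)$ containing a generic $t$.

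Your primary route (Hoeffding MGF at $t^*$ + Doob on the exponential submartingale) has a real gap: the Hoeffding MGF step requires the $\xi_{i,k}=\inttao H_{i,k}\,dM_i$ to be independent across $i$, and the lemma does not grant this. In the paper's applications the integrands $\bfH_i$ contain cross-subject quantities such as $\bZbar(t,\bbeta)$ and $\bmu(t)$, so independence in fact fails; only the $\mathcal{F}_t$-martingale structure survives. Your fallback --- a Freedman-type martingale Bernstein inequality using the jump bound $\KH/n$ and the deterministic predictable-QV bound $\KH^2\KhA t^*/n$ --- is the correct move and is the continuous-time analogue of what the paper does via Azuma on the discretized martingale; it would deliver the stated rate (and, for $x\lesssim 1$, a constant at least as good as the paper's $e^{-nx^2/4}$). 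Two smaller corrections: the additive $\KH\KhA t^*/n$ term is not ``slack between the continuous supremum and the last jump time'' in your route, since both Doob-on-$\exp(\lambda V_k)$ and Freedman bound the continuous supremum directly and require no such term; it is an artifact of the paper's stopping-time discretization. And your bound $|\xi_{i,k}|\le\KH(1+\KhA t^*)$ uses $N_i(t^*)\le 1$, which the lemma as stated does not impose, though the paper's own proof implicitly makes the same one-jump assumption. Part (ii) in your write-up matches the paper's localization argument, modulo the usual caveat that applying part (i) with a random bound $\widetilde{C}a_n$ in place of the deterministic $\KH$ needs a brief truncation argument (e.g.\ replace $\bfH_i(t)$ by $\bfH_i(t)\,I(\sup_{u<t}\|\bfH_i(u)\|_{\max}\le\widetilde{C}a_n)$, which keeps $\mathcal{F}_{t-}$-measurability) that both you and the paper leave implicit.
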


\setcounter{equation}{0}
\setcounter{lemma}{0}
\section{Proofs of Main Results}
We shall present our proofs in the following order.
First, we give the proofs to our theorems using the main Lemmas stated in Section \ref{section:theory}.
Second, we present the auxiliary lemmas necessary for the proofs of
main Lemmas.
Third, we present the proofs to the main Lemmas.
Lastly, we present the proofs to the our concentration inequalities and auxiliary lemmas.

\subsection{Proofs of Theorems}
\begin{proof}[Proof of Theorem \ref{thm:initial oracle}]
	
	Observe that the same techniques as those of \cite{HuangEtal13} apply (see for example Lemmas 3.1 and 3.2 therein). The structure of the partial likelihood is the same as that of the Cox model modular the IPW weight functions $w_j(t)$.
	 Following the same line of proof we can easily obtain on the event $\{\|\dm(\betao)\|_\infty < \lambda (\xi-1)/(\xi+1) \}$,
the estimation error of LASSO estimator $\bini$ defined in \eqref{def:bini} has the bound
\begin{equation}\label{eq:cone oracle}
  \|\bini-\betao\|_1 \le \frac{e^\varsigma (\xi+1)\sbeta\lambda}{2\compb^2},
\end{equation}
where $\varsigma$ is the smaller solution to
$$\varsigma e^{-\varsigma} = \KZ(\xi+1)\sbeta \lambda/\{2\compb^2\}.$$

	\begin{equation}\label{eq:HuangTh1}
	\|\bini-\betao\|_1 \le \frac{e^\varsigma(\xi+1)\sbeta\lambda}{2\compb^2}
	\end{equation}
	with $\varsigma_\bfb = \supt \supij |\bfb^\top \{\bZ_i(t)-\bZ_j(t)\}|$
	in the event $\|\dm(\betao)\|_1 \le \lambda(\xi-1)/(\xi+1)$. The proof is then completed by applying the
conclusion of  Lemma  \ref{lemma:score oracle}.
\end{proof}

\begin{proof}[Proof of Theorem \ref{thm:normality}]
Be Lemmas \ref{lemma:approx cond} and \ref{lemma:weak conv},
we have
$$
\sqrt{n}\frac{\bfc^\top (\bCI-\betao)}{\cTh \Scov \bTh^\top \bfc} = \sqrt{n}\frac{\bTh \dm(\betao)}{\cTh \Scov \bTh^\top \bfc} +o_p(1) \stackrel{d}{\to} N(0,1).
$$

In Lemma \ref{lemma:score cov}, we have shown that $\|\Scov\|_{\max}$ is bounded by $K^2(1+\LlamI e^{\Kb} t^*)^2\{1+2(1+\LcI)e^{\Kb}/\rstar\}^2$ with probability tending to one.
In Lemma \ref{lemma:bTh}, we have shown that $\|\bTh\|_1$ is bounded by $\Kgr /\rhosigI$.
Then, we can apply Lemmas \ref{lemma:gamma-l1} and \ref{lemma:score cov} to get
\begin{align*}
& |\cTh \Scov \bTh^\top \bfc - \bfc^\top \hTh \hScov \hTh^\top \bfc|
\le  \|\bfc\|_1 \|\bTh-\hTh\|_1 \|\Scov\|_{\max} \|\bTh\|_1\|\bfc\|_1 \\
& \qquad + \|\bfc\|_1 \{ \|\bTh\|_1 + \|\hTh-\bTh\|_1 \} \|\Scov-\hScov\|_{\max} \|\bTh\|_1\|\bfc\|_1 \\
&\qquad  + \|\bfc\|_1 \{ \|\bTh\|_1 + \|\hTh-\bTh\|_1 \}\{ \|\hScov-\Scov\|_{\max} +\|\Scov\|_{\max} \}\|\bTh-\hTh\|_1\|\bfc\|_1 \\
& \qquad =  2  O_p(\|\bTh-\hTh\|_1) + O_p(\|\Scov-\hScov\|_{\max}) = o_p(1).
\end{align*}
Note that we use the following fact
$$
\|\bfc^\top\bTh\|_1 = \sumjp |\sumip c_i \Theta_{i,j}|
\le \sumip |c_i| \sumjp |\Theta_{i,j}| \le \|\bfc\|_1\|\bTh\|_1.
$$
\end{proof}

\begin{proof}[Proof of Theorem \ref{thm*:initial rate}]

Since we assume \ref{Assum:betaI} now,
the relative risks are bounded almost surely from above and below by constants
$
0 < e^{-\Kb} \le \RR{i}{t}{o} \le e^{\Kb} <\infty.
$
We may set $\M = e^{\Kb}$ to directly obtain \ref{Assum:Hess} from \ref{Assum:HessI}.
We can also improve the rate of estimation error in Theorem \ref{thm:initial oracle} by $\log(n)$ because we need not let $\Ke$ in Lemma \ref{lemma:Sk} to  grow with $n$.
\end{proof}

\subsection{Auxiliary Lemmas}

\begin{lemma}\label{lemma:KZbar}
Let $\{a_i(t): t \in [0,t^*], i = 1,\dots, n\}$ be a set of nonnegative processes.
Under \eqref{aseq:Zij}, where $\KZ$ is defined,
$$
\left\|\frac{\sumin a_i(t)\bZ_i(t)^{\otimes l}}{\sumin a_i(t)}\right\|_{\max} \le (\KZ/2)^l,
\text{ and }
\left\|\frac{\E\{ a_i(t)\bZ_i(t)^{\otimes l}\}}{\E\{a_i(t)\}}\right\|_{\max} \le (\KZ/2)^l.
$$
As a result, the maximal norms of $\bS{l}(t,\bbeta)/\bSo(t,\bbeta)$ and $\btS{l}(t,\bbeta)/\btSo(t,\bbeta)$ and
$\bs{l}(t,\bbeta)/\bso(t,\bbeta)$, defined in \eqref{def:Sk} and \eqref{def:Stk},  are all uniformly bounded by $(\KZ/2)^l$.
\end{lemma}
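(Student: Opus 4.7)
The plan is to reduce everything to the elementary observation that, under \eqref{aseq:Zij}, each coordinate of $\bZ_i(t)$ is bounded in absolute value by $\KZ/2$ uniformly in $i$ and $t$. Since any entry of the tensor $\bZ_i(t)^{\otimes l}$ is a product of $l$ coordinates of $\bZ_i(t)$, it follows immediately that $\|\bZ_i(t)^{\otimes l}\|_{\max}\le (\KZ/2)^l$ almost surely, uniformly in $i$ and $t$.

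For the first ratio, I would fix an entry index $(j_1,\dots,j_l)$ of the tensor and use the fact that a convex combination of real numbers (with nonnegative weights $a_i(t)$, provided the denominator is positive) is bounded by the maximum of their absolute values. Coordinatewise this reads
$$
\left|\frac{\sum_i a_i(t)\,(\bZ_i(t)^{\otimes l})_{j_1,\dots,j_l}}{\sum_i a_i(t)}\right|
\le \max_i \bigl|(\bZ_i(t)^{\otimes l})_{j_1,\dots,j_l}\bigr|
\le (\KZ/2)^l,
$$
and taking the maximum over $(j_1,\dots,j_l)$ yields the first claim. The edge case in which all weights vanish is handled by the usual convention $0/0=0$.

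For the second ratio, I would apply the same argument at the population level: since $-(\KZ/2)^l\, a_i(t)\le a_i(t)(\bZ_i(t)^{\otimes l})_{j_1,\dots,j_l}\le (\KZ/2)^l\, a_i(t)$ almost surely and $a_i(t)\ge 0$, monotonicity of expectation gives $|\E\{a_i(t)(\bZ_i(t)^{\otimes l})_{j_1,\dots,j_l}\}|\le (\KZ/2)^l\,\E\{a_i(t)\}$, and dividing by $\E\{a_i(t)\}>0$ establishes the bound.

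Finally, to obtain the stated consequence for $\bS{l}/\bSo$, $\btS{l}/\btSo$, and $\bs{l}/\bso$, I would read off the implicit nonnegative weights from their definitions in \eqref{def:Sk}, \eqref{def:Stk}, and \eqref{eq:ES}: namely $a_i(t)=\omega_i(t)Y_i(t)e^{\bbeta^\top\bZ_i(t)}$ for $\bS{l}/\bSo$, $a_i(t)=I(C_i\ge t)Y_i(t)e^{\bbeta^\top\bZ_i(t)}$ for $\btS{l}/\btSo$, and the corresponding $a_i(t)=\tilde\omega_i(t)Y_i(t)e^{\bbeta^\top\bZ_i(t)}$ at the population level. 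Nonnegativity of each such $a_i(t)$ is clear from the definition of the IPW weights in \eqref{def:ipw} and \eqref{def:wtil}, so the two displayed bounds apply directly. There is no real obstacle here; this lemma is essentially a sanity check, and its only role is to supply a deterministic $L^\infty$ control on these ratios that will be invoked repeatedly in the later concentration and approximation arguments.
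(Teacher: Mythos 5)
Your proof is correct and follows essentially the same route as the paper's: the paper invokes H\"older's inequality in the $\ell^1\times\ell^\infty$ form to bound each coordinate of the weighted average by $\supi |(\bZ_i(t)^{\otimes l})_{j_1,\dots,j_l}|\le (\KZ/2)^l$, which is exactly your convex-combination argument, and then applies the same bound with $\E$ in place of the sum. Both arguments close by reading off the nonnegative weights $a_i(t)=\omega_i(t)Y_i(t)e^{\bbeta^\top\bZ_i(t)}$, $I(C_i\ge t)Y_i(t)e^{\bbeta^\top\bZ_i(t)}$, and $\tilde\omega_i(t)Y_i(t)e^{\bbeta^\top\bZ_i(t)}$ from \eqref{def:Sk}, \eqref{def:Stk}, and \eqref{eq:ES}.
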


\begin{lemma}\label{lemma:Ytau}
Let $\M$ and $\rstar$ be defined as in \eqref{aseq:denom}.
Define
\begin{equation}\label{def:btsoM}
\btSo(t;\M) = n^{-1}\sumin I(C_i \ge t^*) Y_i(t^*)  \min\{\M,\RR{i}{t}{o}\}.
\end{equation}
Let $\T{1}, \dots, \T{\KN}$ be the observed type-1 events.
Under \ref{Assum:design}, the event
\begin{equation}
\eventrM = \left\{n^{-1}\sum_{i=1}I(X_i \ge t^*) \ge \rstar/(2\M), \supk \btSo\left(\T{k};\M\right) \ge \rstar/2 \right\}
\label{def:eventrM}
\end{equation}
occurs with probability at least
$1 - e^{-n \rstar^2/(2\M^2)} - n e^{-n(\rstar-2/n)^2/(8\M^2)}$.

On $\eventrM$, we have $\supk \btSo(\T{k}) \ge \rstar/2$.
\end{lemma}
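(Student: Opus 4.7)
The proof splits along the two inequalities defining $\eventrM$, both stemming from the uniform lower bound \eqref{aseq:denom} in Condition~\ref{Assum:design}.

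For the first inequality $n^{-1}\sum_{i} I(X_i \ge t^*) \ge \rstar/(2\M)$, I would observe that $I(C_i \ge t^*)\,I(t^* < T_i^1 < \infty)$ forces $X_i \ge t^*$: it requires $C_i \ge t^*$ and, via $T_i^1 = T_i\,I(\epsilon_i = 1) + \infty\,I(\epsilon_i > 1)$, forces $\epsilon_i = 1$ with $T_i \ge t^*$. Bounding $\min\{\M,\RR{i}{t}{o}\} \le \M$ inside \eqref{aseq:denom} then yields $\Pr(X_i \ge t^*) > \rstar/\M$, and a one-sided Hoeffding inequality (Lemma~\ref{lemma:Hoeffding}) applied to the i.i.d.\ indicators $I(X_i \ge t^*)$ with deviation $\rstar/(2\M)$ produces the first tail contribution $e^{-n\rstar^2/(2\M^2)}$.

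For the second inequality involving $\btSo(\T{k};\M)$, the same lower bound gives $\bso(t;\M) := \E\btSo(t;\M) \ge \rstar$ uniformly in $t \in [0,t^*]$, since $Y_i(t^*) = I(T_i^1 \ge t^*) \ge I(t^* < T_i^1 < \infty)$. The plan is to invoke Lemma~\ref{lemma:supij} with $\bfS_i(t) = I(C_i \ge t^*)Y_i(t^*)\min\{\M,\RR{i}{t}{o}\}$ (bounded by $\M$) and $N_i(t) = \Nobs_i(t)$. Since each $\Nobs_i$ has at most one jump---namely $X_i$ when $\delta_i\epsilon_i = 1$---the supremum over jump points coincides with the supremum over $\{\T{1},\dots,\T{\KN}\}$ and $\KNi = 1$. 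Tuning the free parameter in Lemma~\ref{lemma:supij} so that the deviation threshold equals $\rstar/2$ converts the pointwise concentration bound into the tail $n\,e^{-n(\rstar - 2/n)^2/(8\M^2)}$; on its complement, $\btSo(\T{k};\M) \ge \bso(\T{k};\M) - \rstar/2 \ge \rstar/2$ for every $k$.

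The concluding claim $\btSo(\T{k}) \ge \rstar/2$ on $\eventrM$ is a routine monotonicity check: $I(C_i \ge t)$ and $Y_i(t)$ are non-increasing in $t$, so for $\T{k} \le t^*$ we have $I(C_i \ge \T{k})Y_i(\T{k}) \ge I(C_i \ge t^*)Y_i(t^*)$; combined with $\RR{i}{t}{o} \ge \min\{\M,\RR{i}{t}{o}\}$, this delivers $\btSo(\T{k}) \ge \btSo(\T{k};\M) \ge \rstar/2$. The only genuine technical obstacle is the randomness of the sampling times $\T{k}$, which rules out a direct pointwise Hoeffding argument for the second piece; the role of Lemma~\ref{lemma:supij} is precisely to upgrade pointwise concentration to concentration uniform over the jump points of the observed counting processes, and the per-subject sparsity of those jumps keeps the union-bound cost linear in $n$, matching the leading factor in the stated probability bound.
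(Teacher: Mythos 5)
Your proposal is correct and follows essentially the same route as the paper's proof: a one-sided Hoeffding bound on the i.i.d.\ indicators $I(X_i \ge t^*)$ using the lower bound $\Pr(X_i \ge t^*) \ge \rstar/\M$ from \eqref{aseq:denom}, then Lemma \ref{lemma:supij} (one-sided, with $K_S = \M$, $\KNi = 1$, jump points of $\Nobs_i$ being the observed type-1 event times) tuned so the deviation is $\rstar/2$, giving the tail $n e^{-n(\rstar-2/n)^2/(8\M^2)}$, and finally the same monotonicity argument $\btSo(\T{k}) \ge \btSo(\T{k};\M)$. No gaps worth noting.
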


\begin{lemma}\label{lemma:RR}
Let
$
  \Ke =e^{\rtz \Lz \|\betao\|_\infty \dt} \log(n/\varepsilon)/(\dt\lamlow)
$
be defined as in \eqref{def:Ke}.
Under \ref{Assum:CR},
the event
\begin{equation}\label{def:eventKe}
\eventKe = \left\{\supi\supt I(\delta_i\epsilon_i > 1)\RR{i}{t}{o} < \Ke \right\}
\end{equation}
occurs with probability at least $1-
\varepsilon$.
\end{lemma}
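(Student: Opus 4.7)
The plan is to leverage the Fine--Gray CIF identity \eqref{def:CIF}: a large time-integrated relative risk forces $F_1(\infty\mid\bZ_i(\cdot))$ close to $1$, which is incompatible with the occurrence of a type-2 event. Conditioning on the entire covariate path and recalling the improper variable $T_i^1$, I would write
\begin{equation*}
\Pr\bigl(\epsilon_i > 1 \bigm| \bZ_i(\cdot)\bigr)
\;=\; 1 - F_1(\infty\mid\bZ_i(\cdot))
\;=\; \exp\Bigl(-\int_0^\infty \RR{i}{u}{o}\, \lamT(u)\, du\Bigr).
\end{equation*}
Since $\delta_i\epsilon_i > 1$ implies $\epsilon_i > 1$, a union bound over the $n$ i.i.d.\ subjects reduces the task to proving that, for each fixed $i$, the conditional probability of $\{\epsilon_i > 1,\, \supt \RR{i}{t}{o} > \Ke\}$ given $\bZ_i(\cdot)$ is at most $\varepsilon/n$.

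So suppose $\RR{i}{t_0}{o} > \Ke$ for some $t_0 \in [0, t^*]$. Under \ref{Assum:CR}, $t_0$ lies inside a piece between two consecutive jumps of $\bZ_i$ of length at least $\dt$ on which at most $\rtz$ coordinates evolve with Lipschitz constant $\Lz$, so for $u$ in that piece with $|u-t_0| \le \dt$ I obtain the coordinate-wise bound
\begin{equation*}
\bigl|\betao^{\top}(\bZ_i(u)-\bZ_i(t_0))\bigr| \;\le\; \rtz\, \Lz\, \|\betao\|_\infty\, \dt,
\qquad
\RR{i}{u}{o} \;\ge\; \Ke\, e^{-\rtz \Lz \|\betao\|_\infty \dt}.
\end{equation*}
The minimum-gap condition lets me pick a side of $t_0$ on which a subinterval of length $\dt$ lies inside both the piece and $(0, t^*)$; combining this with $\lamT(u) \ge \lamlow$ there and the definition \eqref{def:Ke} of $\Ke$ gives
\begin{equation*}
\int_0^\infty \RR{i}{u}{o}\, \lamT(u)\, du \;\ge\; \dt\, \lamlow\, \Ke\, e^{-\rtz \Lz \|\betao\|_\infty \dt} \;=\; \log(n/\varepsilon),
\end{equation*}
so $\Pr(\epsilon_i > 1,\ \supt \RR{i}{t}{o} > \Ke \mid \bZ_i(\cdot)) \le \varepsilon/n$, and a union bound over $i=1,\dots,n$ closes the argument.

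The main subtlety I anticipate is the edge-case bookkeeping in the propagation step: when $t_0$ is near $0$, near $t^*$, or within $\dt$ of a jump on both sides, one must still extract an interval of length $\dt$ (or a fixed fraction thereof) that lies inside both the current Lipschitz piece and $(0, t^*)$. The minimum-gap condition resolves interior cases because every jump is followed by at least $\dt$ of continuity, while a one-sided choice takes care of the boundary; in the worst case one replaces $\dt$ by $\dt/2$ inside the exponent of $\Ke$, which only perturbs multiplicative constants without changing the stated rate. Everything else reduces to the conditional CIF identity, a coordinate-wise Lipschitz bound, and a union bound, so no concentration machinery is needed here.
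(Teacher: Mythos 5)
Your proof follows essentially the same route as the paper's: the conditional CIF identity bounding $\Pr(\epsilon_i>1\mid\bZ_i(\cdot))$ by $\exp\{-\int_0^\infty \RR{i}{u}{o}\lamT(u)du\}$, propagation of a large relative risk over a jump-free interval of length $\dt$ via the Lipschitz bound $\rtz\Lz\|\betao\|_\infty\dt$, and a union bound over the $n$ subjects. Your edge-case bookkeeping near $0$, $t^*$, or a jump is actually more careful than the paper's own proof, which simply asserts a length-$\dt$ jump-free interval containing the exceedance time; the only caveat is that your worst-case fallback to $\dt/2$ would modify the constant in $\Ke$ as stated, though it leaves the rate unchanged.
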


\begin{lemma}\label{lemma:KM}
Define the IPW weights with true $G(t)$,
$
\tilde{\omega}_i(t) = r_i(t)G(t)/G(X_i\wedge t)
$, as in \eqref{def:wtil}
and
\begin{equation}\label{def:CKM}
\CKM = 4(\M/\rstar)^2\left\{(1+\Lc t^*)\sqrt{4\log(2/\varepsilon)/n}+ \Lc t^*/n\right\}.
\end{equation}
Under \ref{Assum:design},
\begin{equation}\label{def:eventKM}
\eventKM = \left\{ \supt\supt|\omega_i(t) - \tilde{\omega}_i(t)| \le \CKM \right\}
\end{equation}
occurs on event $\eventrM$ with probability at least $\Pr(\eventrM) - \varepsilon$.
\end{lemma}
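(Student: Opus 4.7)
The plan is to use the Duhamel (Volterra) identity for the Kaplan--Meier estimator to convert $\omega_i(t) - \tilde\omega_i(t)$ into a martingale integral with respect to the censoring martingales $\Mc_i$, and then apply the martingale sup concentration (Lemma \ref{lemma:supopM}(i)) uniformly in $t$ and $i$. The difference $\omega_i(t)-\tilde\omega_i(t) = r_i(t)\bigl[\hat G(t)/\hat G(t\wedge X_i) - G(t)/G(t\wedge X_i)\bigr]$ vanishes unless $r_i(t)=1$ and $X_i<t$ (an observed type-2 event), so $s := t \wedge X_i$ lies in $[0,t^*]$ and one only needs uniform control over $s,t \in [0,t^*]$.

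Writing $\psi(u) = \hat G(u)/G(u)$, the Duhamel representation
\[
\psi(t) - \psi(s) \;=\; -\int_s^t \frac{\hat G(u-)}{G(u)}\,\frac{d\bar M^c(u)}{Y^c(u)}, \qquad Y^c(u)=\sum_i I(X_i\ge u),
\]
gives
\[
\frac{\hat G(t)}{\hat G(s)} - \frac{G(t)}{G(s)} \;=\; -\frac{G(t)/G(s)}{\psi(s)}\int_s^t \frac{\hat G(u-)}{G(u)}\,\frac{d\bar M^c(u)}{Y^c(u)}.
\]
On $\eventrM$ from Lemma \ref{lemma:Ytau}, the deterministic prefactor and the predictable integrand are bounded: $G(t)/G(s)\le 1$; $G(t^*)\ge \rstar/\M$ (by \eqref{aseq:denom}) yields $1/G(u)\le \M/\rstar$ on $[0,t^*]$; $\eventrM$ gives $n/Y^c(u) \le 2\M/\rstar$; and $\hat G(u-)\le 1$. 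Hence the integrand $H(u) := \frac{\hat G(u-)}{G(u)}\cdot\frac{n}{Y^c(u)}$ satisfies $H(u)\le 2(\M/\rstar)^2$ pointwise.

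I would then apply Lemma \ref{lemma:supM} to $\supt |n^{-1}\sum_i \int_0^t H(u)\,dM^c_i(u)|$ with $\KH = 2(\M/\rstar)^2$ and $\KhA = \Lc$ (since $\lamC \le \Lc$), converting $2e^{-nx^2/4}\le \varepsilon$ into $x=2\sqrt{\log(2/\varepsilon)/n}$ and noting $|\int_s^t|\le 2\supt|\int_0^t|$ to cover arbitrary $s=X_i$. After absorbing constants, this produces exactly the bound $\CKM$. The prefactor $1/\psi(s)$ is handled by the very same concentration inequality applied to $\supt|\psi(t)-1|$: for $n$ sufficiently large, $\psi(s)\ge 1/2$ on the same high-probability event, so the prefactor contributes only a constant that is absorbed into $\CKM$. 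Since we intersect with $\eventrM$ and one high-probability event coming from Lemma \ref{lemma:supM}, the total probability loss is at most $\varepsilon$ beyond $\Pr(\eventrM)$.

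The main obstacle is the apparent circularity in $H(u)$, which itself contains $\hat G$. I resolve this by using the crude deterministic bound $\hat G(u-)\le 1$ in combination with the $\eventrM$ bound on $Y^c$: this decouples $H$ from the martingale and avoids any fixed-point or stopping-time iteration. A secondary subtlety is keeping constants sharp enough to land exactly on $\CKM = 4(\M/\rstar)^2\{(1+\Lc t^*)\sqrt{4\log(2/\varepsilon)/n}+\Lc t^*/n\}$; the factor of $\sqrt 4$ comes from the $|\int_s^t|\le 2\supt|\int_0^t|$ reduction combined with the $1/\psi(s)\le 2$ factor, and the $\Lc t^*/n$ term is the low-order term from Lemma \ref{lemma:supM}.
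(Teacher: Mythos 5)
Your plan is essentially the paper's: express the Kaplan--Meier error through the martingale $M^G(t)=\hat G(t)/G(t)-1$, control its integrand on $\eventrM$ by $2(\M/\rstar)^2$, and apply Lemma~\ref{lemma:supopM}(i) with $x=\sqrt{4\log(2/\varepsilon)/n}$. The algebraic decomposition differs, however, and this is where your constant bookkeeping goes wrong. Your Duhamel form carries a $1/\psi(X_i)$ prefactor, and you then bound it by $2$ via a separate appeal to $\psi\ge 1/2$ ``for $n$ sufficiently large.'' This is a genuine factor multiplying the whole bound in addition to the factor of $2$ from $|\int_{X_i}^{t}|\le 2\supt|\int_0^t|$, so your route produces a constant of $2\cdot 2\cdot 2(\M/\rstar)^2 = 8(\M/\rstar)^2$, i.e.\ $2\CKM$, not $\CKM$. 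Your explanation that ``the factor of $\sqrt 4$ comes from the $|\int_s^t|\le2\sup$ reduction combined with the $1/\psi(s)\le2$ factor'' conflates the $\sqrt 4$ inside the square root (which is forced by solving $2e^{-nx^2/4}=\varepsilon$ in Lemma~\ref{lemma:supopM} and has nothing to do with either reduction) with the two multiplicative factors of $2$ outside. The ``$n$ sufficiently large'' asymptotic condition is also at odds with the finite-sample form of the lemma.

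The paper avoids the prefactor entirely by using the identity, for $t>X_i$ and $\delta_i\epsilon_i>1$,
\begin{equation*}
\omega_i(t)-\tilde\omega_i(t)
= -\frac{\hat G(t)}{\hat G(X_i)}\,M^G(X_i)
  + \frac{G(t)}{G(X_i)}\,M^G(t),
\end{equation*}
in which both coefficients are bounded by $1$ (the first because $\hat G$ is nonincreasing and $t>X_i$, the second likewise for $G$). This yields $|\omega_i(t)-\tilde\omega_i(t)|\le 2\supt|M^G(t)|$ with no auxiliary event and no lower bound on $\psi$, and $2\times 2(\M/\rstar)^2 = 4(\M/\rstar)^2$ gives $\CKM$ exactly. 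If you replace your Duhamel decomposition by this one, the rest of your argument (the $\eventrM$ bound on the integrand, the choice of $x$, and the probability accounting) goes through unchanged and lands on the stated constant.
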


\begin{lemma}\label{lemma:Sk}
 Define
 $$\bDelta{l}(t) = \bS{l}(t,\betao)-\btS{l}(t,\betao),$$
  with $\bS{l}$ and $\btS{l}$ defined in \eqref{def:Sk} and \eqref{def:Stk}.
 Let $\T{1}, \dots, \T{\KN}$ be the observed type-1 events for some $\KN \le n$.
 Denote
 $
  \Ke =e^{\rtz \Lz \|\betao\|_\infty \dt} \log(n/\varepsilon)/(\dt\lamlow)
$ and
 $$
 \CSk{l} =\frac{\Ke \KZ^l}{2^l}\left\{\frac{4\M^2(1+\Lc t^*)}{\rstar^2}\sqrt{\frac{4\log(2/\varepsilon)}{n}}+ \frac{4\M^2\Lc t^*}{\rstar^2n}
 + \sqrt{\frac{2\log(2np^l/\varepsilon)}{n}}+ \frac{1}{n}\right\}
 $$
 as in \eqref{def:Ke} and \eqref{def:Csk}.
Under \ref{Assum:design} and \ref{Assum:CR},
 \begin{align}
 \eventx = \left\{ \max_{l=0,1,2}\supk \left\|\bDelta{l}\left(\T{k}\right)\right\|_{\max}\le \CSk {l} \right\} \cap \eventrM \cap \eventKe \cap \eventKM \label{def:eventx},
 \end{align}
 with $\eventrM$, $\eventKe$ and $\eventKM$ defined in Lemmas \ref{lemma:Ytau}, \ref{lemma:RR} and \ref{lemma:KM},
  occurs with probability at least $1 - e^{-n \rstar^2/(2\M^2)} - n e^{-n(\rstar-2/n)^2/(8\M^2)}- 5\varepsilon$.

On $\eventx$, we have for $l=1,2$,
$$
\supk\left\|\frac{\bS{l}\left(\T{k},\betao\right)}{\bSo\left(\T{k},\betao\right)}
            -\frac{\btS{l}\left(\T{k},\betao\right)}{\btSo\left(\T{k},\betao\right)}\right\|_{\max} \le
           2\{\CSk{l}+(\KZ/2)^l\CSk{0}\}/\rstar.
$$
\end{lemma}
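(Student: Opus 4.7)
The plan is to split $\bDelta{l}(t)=\bS{l}(t,\betao)-\btS{l}(t,\betao)$ through an intermediate oracle quantity $A_l(t)=n^{-1}\sum_i \tilde\omega_i(t) Y_i(t)\RR{i}{t}{o}\bZ_i(t)^{\otimes l}$ in which the Kaplan--Meier weight $\omega_i(t)$ has been replaced by the true-$G$ weight $\tilde\omega_i(t)$ from \eqref{def:wtil}. Writing $\bDelta{l}=(\bS{l}-A_l)+(A_l-\btS{l})$, the first bracket is a deterministic weighting error driven by $\hat G-G$, while the second is an i.i.d.\ mean-zero fluctuation amenable to concentration. This split is essential because the raw relative risks $\RR{i}{t}{o}$ can grow with the dimension, and only among subjects with observed type-2 events---where $\omega_i(t)$ genuinely differs from the indicator in $\btS{l}$---do those large relative risks enter; the event $\eventKe$ of Lemma~\ref{lemma:RR} exactly caps the dangerous contributions by $\Ke$.

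For the weighting error, $\omega_i(t)-\tilde\omega_i(t)$ vanishes unless subject $i$ has already had an observed type-2 event. On the remaining indices I would use $|\omega_i(t)-\tilde\omega_i(t)|\le\CKM$ on $\eventKM$ (Lemma~\ref{lemma:KM}), $\RR{i}{t}{o}\le\Ke$ on $\eventKe$ (Lemma~\ref{lemma:RR}), and the covariate bound $\|\bZ_i(t)^{\otimes l}\|_{\max}\le(\KZ/2)^l$ from Lemma~\ref{lemma:KZbar}, to obtain $\|\bS{l}-A_l\|_{\max}\le \Ke(\KZ/2)^l\CKM$ uniformly in $t$. Substituting the definition \eqref{def:CKM} of $\CKM$ reproduces the first two summands inside the braces of $\CSk{l}$, namely the $4\M^2(1+\Lc t^*)/\rstar^2\cdot\sqrt{4\log(2/\varepsilon)/n}$ and $4\M^2\Lc t^*/(\rstar^2 n)$ terms.

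For the sampling error $A_l(t)-\btS{l}(t)$, the $n$ summands are i.i.d.\ with mean zero and, on $\eventKe$, are uniformly bounded in coordinate-wise maximum by $\Ke(\KZ/2)^l$ (the prefactor $\tilde\omega_i\in[0,1]$, the relative risk is capped, and the covariates satisfy \ref{Assum:design}). Because only the finitely many observed type-1 event times $\T{1},\dots,\T{\KN}$ with $\KN\le n$ need to be controlled, I would invoke Lemma~\ref{lemma:supij} (equivalently Hoeffding plus a union bound over $n$ time points and $p^l$ matrix entries) to get a concentration term of order $\Ke(\KZ/2)^l\bigl(\sqrt{2\log(2np^l/\varepsilon)/n}+1/n\bigr)$, matching the remaining two summands inside $\CSk{l}$. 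A union bound across the failure events of Lemmas~\ref{lemma:Ytau}, \ref{lemma:RR}, \ref{lemma:KM}, and the concentration step applied separately for $l\in\{0,1,2\}$ delivers the stated success probability $1-e^{-n\rstar^2/(2\M^2)}-n e^{-n(\rstar-2/n)^2/(8\M^2)}-5\varepsilon$, the $5\varepsilon$ being $2\varepsilon$ from Lemmas~\ref{lemma:RR}--\ref{lemma:KM} and $3\varepsilon$ from the three concentrations.

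For the ratio consequence I would use the algebraic identity
\[
\frac{\bS{l}(t,\betao)}{\bSo(t,\betao)}-\frac{\btS{l}(t,\betao)}{\btSo(t,\betao)}=\frac{\bS{l}-\btS{l}}{\bSo}+\frac{\btS{l}}{\btSo}\cdot\frac{\btSo-\bSo}{\bSo},
\]
combined with $\|\btS{l}/\btSo\|_{\max}\le(\KZ/2)^l$ from Lemma~\ref{lemma:KZbar} and the denominator bound $\bSo(\T{k},\betao)\ge\btSo(\T{k})-\CSk{0}\ge\rstar/2$ on $\eventrM\cap\eventx$ (Lemma~\ref{lemma:Ytau}, with a trivial adjustment of constants absorbing the $\CSk{0}$ which is $o(1)$). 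The principal obstacle throughout is the sampling-error step: a naive Hoeffding bound on summands containing $\exp(\betao^\top\bZ_i(t))$ would scale like $e^{\sbeta}\sqrt{\log(p)/n}$ in high dimensions, and it is precisely the type-2 truncation argument underlying $\eventKe$ (which exploits that excessive relative risks would force a type-1 event with probability near one) that trades this exponential factor for the tame $\log(n)$ factor appearing in $\Ke$.
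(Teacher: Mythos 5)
Your proposal is correct and follows essentially the same route as the paper's proof: the same decomposition of $\bDelta{l}$ through the true-$G$ weights $\tilde{\omega}_i(t)$ (so that the large relative risks only enter via observed type-2 events, capped by $\Ke$ on $\eventKe$), the same use of $\eventKM$ for the Kaplan--Meier error, the same application of Lemma \ref{lemma:supij} at the observed type-1 event times for the mean-zero i.i.d.\ part, and the same $5\varepsilon$ accounting. The one place you differ is the final ratio step: your identity puts $\bSo$ in both denominators, so you only get the stated constant after a lower bound of the form $\bSo(\T{k},\betao)\ge \rstar/2-\CSk{0}$ and an absorption of the $o(1)$ term $\CSk{0}$, whereas the paper uses
$\bS{l}/\bSo-\btS{l}/\btSo=\bDelta{l}/\btSo-\bigl(\bS{l}/\bSo\bigr)\bDelta{0}/\btSo$,
bounding the self-normalized ratio $\bS{l}/\bSo$ by $(\KZ/2)^l$ via Lemma \ref{lemma:KZbar} so that only $1/\btSo\le 2/\rstar$ (on $\eventrM$, Lemma \ref{lemma:Ytau}) is needed and the exact finite-sample constant $2\{\CSk{l}+(\KZ/2)^l\CSk{0}\}/\rstar$ drops out directly.
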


\begin{lemma}\label{lemma:opNorm}
Denote
$
  \bDelta{l}(t) = \bS{l}(t,\betao)-\btS{l}(t,\betao)
$ as in Lemma \ref{lemma:Sk},
with $\bS{l}(t,\betao)$ and $\btS{l}(t,\betao)$ defined in \eqref{def:Sk} and \eqref{def:Stk}, respectively.
Under \ref{Assum:design}, \ref{Assum:betaI} - \ref{Assum:CRI} and \ref{Assum:norm rate},
\begin{enumerate}[label = (\roman*), ref = \ref{lemma:opNorm}(\roman*)]
  \item \label{lemma:opbmu}
  $\supt \|\Delta^{(0)}(t)\|_{\max} = O_p\left(\sqrt{\log(n)/n}\right)$;

  $\sup_{l=1,2}\supt \|\bDelta{l}(t)\|_{\max}$,
  $\supt \|\bZbar(t,\betao)-\bZtil(t,\betao)\|_\infty$,

  $\supt \|\bZtil(t,\betao) - \bmu(t)\|_\infty$
  and $\supt \|\bZbar(t,\betao) - \bmu(t)\|_\infty$ are all $O_p\left(\sqrt{\log(p)/n}\right)$;
  \item \label{lemma:opSk} Define
  \begin{equation}\label{def:dS}
  \dS_i(t) = \{\omega_i(t)-I(C_i > t)\}Y_i(t).
  \end{equation}
Let $\bfh(\bZ)$ be a differentiable operator $\R^p\mapsto \R^q$ uniformly bounded by $\Kh \asymp 1$
with $\|\nabla \bfh(\bZ)\|_1 < \Lh \asymp 1$,
and $\bfg(t)$ be a $\ccFtH$ adapted process in $\R^{q'}$ with bound $\supt\|\bfg(t)\|_{\max}\le\Kg \asymp 1$.
Whenever $q q' = p$, we have
\begin{equation}\label{eq:opSk}
\left\|n^{-1/2}\sumin \inttao n^{-1} \sumjn \dS_j(t)\bfh(\bZ_j(t)) \bfg(t)^\top I(C_i \ge t)d\MI_i(t)\right\|_{\max} = o_p(1);
\end{equation}
 \item \label{lemma:opbeta}
   for any $\tbeta \in \R^p$, $\supt \|\bZbar(t,\betao) - \bZbar(t,\tbeta)\|_\infty = O_p(\|\tbeta - \betao\|_1)$; if $\|\tbeta - \betao\|_1 = o_p(1)$,
  $$\supi\supt \left|\frac{\RR{i}{t}{o}}{\bSo(t,\betao)} - \frac{\RRt{i}{t}}{\bSo(t,\tbeta)}\right| = O_p(\|\tbeta - \betao\|_1).$$
\end{enumerate}
\end{lemma}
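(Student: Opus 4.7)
My plan is to handle the three parts of Lemma \ref{lemma:opNorm} in sequence, reusing the concentration machinery of Lemmas \ref{lemma:Ytau}--\ref{lemma:Sk} and exploiting the stronger regularity supplied by \ref{Assum:betaI}--\ref{Assum:norm rate}.

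For part (i), the key simplification is that \ref{Assum:betaI} forces $\RR{i}{t}{o} \in [e^{-\Kb},e^{\Kb}]$ almost surely, so the factor $\Ke$ that appeared in Lemma \ref{lemma:RR} and propagated into $\CSk{l}$ now collapses to a universal constant independent of $n$. Lemma \ref{lemma:Sk} applied at the observed type-1 event times then gives $\supk \|\bDelta{l}(\T{k})\|_{\max} = O_p(\sqrt{\log(np^l)/n})$, which reads $\sqrt{\log(n)/n}$ for $l=0$ and $\sqrt{\log(p)/n}$ for $l=1,2$. To lift the supremum from the event-time grid $\{\T{k}\}$ to all of $[0,t^*]$ I would invoke \ref{Assum:CRI}: the processes $\bZ_i(t)$ have only $\Kz=o(\sqrt{n/(\log p\log n)})$ jumps and Lipschitz trajectories between them, so the interior variation of $\bS{l}, \btS{l}$ between consecutive event times is controlled exactly as in Lemma \ref{lemma:supt}. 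The ratio bounds for $\bZbar - \bZtil$, $\bZtil - \bmu$ and $\bZbar - \bmu$ then follow algebraically from the identity
\begin{equation*}
\bZbar(t,\betao) - \bZtil(t,\betao) = \frac{\bDelta{1}(t)\,\btSo(t,\betao) - \btS{1}(t,\betao)\,\bDelta{0}(t)}{\bSo(t,\betao)\,\btSo(t,\betao)},
\end{equation*}
combined with the lower bound $\bSo,\btSo \ge \rstar/2$ on the event $\eventrM$ of Lemma \ref{lemma:Ytau} and the norm control $\|\btS{1}/\btSo\|_\infty \le \KZ/2$ of Lemma \ref{lemma:KZbar}; direct concentration of $\btS{l}$ around $\bs{l}$ via Lemma \ref{lemma:supijt} handles $\bZtil - \bmu$, and $\bZbar - \bmu$ is then a triangle inequality.

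Part (ii) is the main obstacle. The strategy is to decompose
\begin{equation*}
\dS_j(t) = \{\omega_j(t) - \tilde\omega_j(t)\}Y_j(t) + \{\tilde\omega_j(t) - I(C_j > t)\}Y_j(t).
\end{equation*}
The second summand is $\ccFt$-predictable, uniformly bounded, and supported on $\{C_j \le t\}$, so after substitution the contribution becomes a genuine martingale integral $n^{-1/2}\sumin \inttao \tilde{A}_n(t)\bfg(t)^\top I(C_i \ge t) d\MI_i(t)$ whose integrand is a sample average of bounded $\ccFt$-predictable terms; a component-wise application of Lemma \ref{lemma:opM} bounds it at rate $O_p(\sqrt{\log(p)/n}) = o_p(1)$. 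For the first summand I would insert the classical product-limit martingale identity
\begin{equation*}
\frac{\hat G(t)}{G(t)} - 1 = -\intt \frac{\hat G(u-)}{G(u)\,\hpi(u)}\,n^{-1}\sum_{k=1}^n d\Mc_k(u),
\end{equation*}
evaluated at both $t$ and $t \wedge X_j$, to expose $\omega_j(t) - \tilde\omega_j(t)$ as an integral against $d\Mc_k$. Fubini then turns the whole expression into a double stochastic integral in $d\Mc_k$ and $d\MI_i$; because $\Mc_k$ and $\MI_i$ are orthogonal martingales on the enlarged filtration $\ccFt$ (their jumps occur at disjoint times), iterating Lemma \ref{lemma:opM} in each coordinate and absorbing the $n^{-1/2}\cdot n^{-1}$ prefactor delivers $o_p(1)$. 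The delicate part---and the place I expect to spend the most effort---is the predictability bookkeeping: the integrand built from $\hat G(u-)/\hpi(u)$, $\bfh(\bZ_j)$ and $\bfg(t)$ must be verified to lie in the correct predictable $\sigma$-fields for each filtration before Lemma \ref{lemma:opM} can be invoked, and uniform boundedness of $\hat G(u-)/\hpi(u)$ on $[0,t^*]$ must be established, which follows from \ref{Assum:design} via the event $\eventrM$.

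Part (iii) is a routine Lipschitz calculation. By \ref{Assum:design} the map $\bbeta \mapsto e^{\bbeta^\top \bZ_i(t)}$ is Lipschitz in $\bbeta$ with constant $\tfrac{1}{2}\KZ \exp(\Kb + \KZ\|\tbeta-\betao\|_1/2)$ uniformly in $(i,t)$, so on the asymptotic event $\{\|\tbeta-\betao\|_1 = o(1)\}$ this constant is $O(1)$. Substituting into $\bZbar(t,\bbeta) = \bS{1}(t,\bbeta)/\bSo(t,\bbeta)$, sharing the common denominator $\bSo \ge \rstar/2$ available on $\eventrM$, and applying the triangle inequality yields $\supt \|\bZbar(t,\betao) - \bZbar(t,\tbeta)\|_\infty = O_p(\|\tbeta - \betao\|_1)$; the second claim, comparing $\RR{i}{t}{o}/\bSo(t,\betao)$ with $\RRt{i}{t}/\bSo(t,\tbeta)$, is essentially identical once the $\bbeta$-dependence in numerator and denominator is treated separately.
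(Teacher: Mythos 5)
Your treatment of parts (i) and (iii) follows essentially the same route as the paper (bounded relative risks collapse $\Ke$ to a constant; sup over $[0,t^*]$ is lifted from event times via the Lipschitz structure in \ref{Assum:CRI}; the $\bZbar - \bZtil$, $\bZtil - \bmu$ bounds come from the ratio identity plus the denominator control from $\eventrM$/Lemma \ref{lemma:YtauI}; part (iii) is a mean-value calculation sharing a common bounded-away-from-zero denominator). Those portions are fine.

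Part (ii) has a genuine gap. You split $\dS_j(t) = \{\omega_j(t)-\tilde\omega_j(t)\}Y_j(t) + \{\tilde\omega_j(t)-I(C_j>t)\}Y_j(t)$ and claim the second piece is $\ccFt$-predictable, hence amenable to a direct application of Lemma \ref{lemma:opM}. That predictability claim is false. The process $\dtS_j(t) = \{\tilde\omega_j(t)-I(C_j>t)\}Y_j(t)$ is nonzero only when $\delta_j\epsilon_j>1$ and $t>X_j$, and on that set it equals $G(t)/G(X_j)-I(C_j>t)$: this depends explicitly on the type-2 event time $X_j=T_j$, which is \emph{not} revealed by $\ccFt = \sigma\{\Nobs_i(u), I(C_i\ge u), \bZ_i(\cdot):u\le t\}$ --- $\Nobs_i$ records only type-1 events. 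The paper itself flags this obstruction in the proof of Lemma \ref{lemma:score cov} ("$\tilde\omega_i(t)$ is not $\ccFt$ measurable"). Your "orthogonal martingales on the enlarged filtration $\ccFt$" claim also fails: $\Mc_k$ is a $\cFt$-martingale, and $\cFt$ is not a subfiltration of $\ccFt$ (it contains $T_i,\epsilon_i$ for everyone), so $\Mc_k$ has no reason to remain a martingale under $\ccFt$, let alone an orthogonal one. This filtration mismatch is precisely the technical heart of part (ii) and what makes it nontrivial.

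The paper resolves it by an entirely different mechanism: it derives an \emph{exact} martingale representation $\dS_i(t) = G(t)\int_{X_i}^t G(u)^{-1}\,d\Md_i(u)$, where $\Md_i$ is a genuine $\cFt$-martingale built from the censoring counting processes, by solving the first-order linear integral equation \eqref{eq:ode} that $\dS_i$ satisfies. It then applies integration by parts to transfer the stochastic integral from the inner ($d\Md_j$, $\cFt$) layer to the outer ($d\MI_i$, $\ccFt$) layer, producing three terms $I_1,I_2,I_3$ each controllable by the concentration lemmas within a single coherent filtration. To repair your argument you would need some analogue of that representation (or a careful conditioning-on-the-complete-data argument that replaces the invalid martingale step); the additive decomposition plus a direct invocation of Lemma \ref{lemma:opM} does not close the gap.
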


\begin{lemma}\label{lemma:YtauI}
Let $\bSo$ and $\btSo$ be defined as in \eqref{def:Sk} and \eqref{def:Stk}, respectively.
Under \ref{Assum:design} and \ref{Assum:betaI}, $\supt|n/\{\sumin I(X_i \ge t^*)\}|$,
 $\supt|\bSo(t,\betao)^{-1}|$ and $\supt|\btSo(t,\betao)^{-1}|$
are all $O_p(1)$.
\end{lemma}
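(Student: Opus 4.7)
The plan is to reduce all three bounds to the single empirical lower bound on the tail at-risk fraction $n^{-1}\sum_i I(X_i \ge t^*)$ that Lemma \ref{lemma:Ytau} already supplies. On the event $\eventrM$ defined in \eqref{def:eventrM}, which has probability tending to one under \ref{Assum:design}, we have $n^{-1}\sum_i I(X_i \ge t^*) \ge \rstar/(2\M)$. The first quantity does not depend on $t$, so its $\sup_t$ is trivial: $n/\sum_i I(X_i \ge t^*) \le 2\M/\rstar$ on $\eventrM$, which is $O_p(1)$.

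For the two denominators $\btSo(t,\betao)$ and $\bSo(t,\betao)$, the strategy is the same: factor out the exponential using \ref{Assum:betaI}, and then lower bound the remaining sum of nonnegative terms by $\sum_i I(X_i \ge t^*)$. Condition \ref{Assum:betaI} gives $e^{\betao^\top \bZ_i(t)} \ge e^{-\Kb}$ almost surely. For $\btSo$ I would use the elementary inclusion $I(C_i \ge t)Y_i(t) \ge I(X_i \ge t^*)$ for $t \le t^*$, which follows by splitting on $\epsilon_i = 1$ versus $\epsilon_i > 1$ to check that $X_i \ge t^*$ forces both $C_i \ge t^*$ and $T_i^1 \ge t^*$. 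For $\bSo$, I would note directly from the definition \eqref{def:ipw} that whenever $X_i \ge t$ one has $r_i(t) = Y_i(t) = 1$ and $\hat G(t)/\hat G(t\wedge X_i) = 1$, so $\omega_i(t)Y_i(t) = 1$; nonnegativity of the weights then yields $\omega_i(t) Y_i(t) \ge I(X_i \ge t) \ge I(X_i \ge t^*)$ for $t \le t^*$. In both cases the resulting uniform lower bound is $e^{-\Kb}\rstar/(2\M) > 0$ on $\eventrM$, from which the $O_p(1)$ conclusion on the reciprocals follows.

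There is no serious obstacle. The three statements are algebraic consequences of the pointwise indicator inequalities together with the two-sided boundedness of the linear predictor in \ref{Assum:betaI}; no concentration result beyond what is already invoked inside Lemma \ref{lemma:Ytau} is needed. The only modest care is recognizing that condition \ref{Assum:design} alone does not supply a lower bound on $e^{\betao^\top \bZ_i(t)}$ (the truncation at $\M$ in \eqref{aseq:denom} operates in the opposite direction), which is exactly why \ref{Assum:betaI} is assumed in this lemma.
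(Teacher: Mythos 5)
Your proof is correct and follows essentially the same route as the paper: obtain a high-probability lower bound on the empirical tail at-risk fraction, then lower-bound both denominators by $e^{-\Kb}$ times that fraction via the indicator inequalities $\omega_i(t)Y_i(t)\ge I(X_i\ge t^*)$ and $I(C_i\ge t)Y_i(t)\ge I(X_i\ge t^*)$ for $t\le t^*$. The only (immaterial) difference is that you recycle the event $\eventrM$ from Lemma \ref{lemma:Ytau}, whereas the paper applies Hoeffding's inequality afresh to $I(X_i\ge t^*)I(\epsilon_i=1)$, using \ref{Assum:betaI} to set $\M=e^{\Kb}$; both yield the same $O_p(1)$ conclusion with slightly different constants.
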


\begin{lemma}\label{lemma:gamma-oracle}
Let $\Gr_j$, $\bini$ and $\gro_j$ be defined as in \eqref{def:Gr}, \eqref{def:bini} and \eqref{def:grotau},
respectively.
On the event
\begin{equation}\label{def:eventg}
\eventg :=\left \{ \left \|\dGr_j \bigl(\gro_j, \bini \bigl) \right \|_{\infty} \leq (\xi_j-1)\lambda_j/(\xi_j+1), \forall j = 1,\dots, p \right \},
\end{equation}
 we have under \ref{Assum:HessI}
\begin{enumerate}[label=(\roman*)]
\item \label{part:gamma-cone} the estimation error $\tilde{\bfgr}_j := \hgr_j - \gro_j$ belongs to the cone
\begin{equation}\label{def:conej}
\conegrj:= \{\bfv\in \mathbb{R}^{p-1}: \|\bfv_{\mathcal{O}^c_j}\|_1 \leq \xi_j \|\bfv_{\mathcal{O}_j}\|_1\}
\end{equation}
\item and
$\|\hgr_j - \gro_j\|_1 \le \{s_j\lambda_j(\xi_j+1)\}/\{2\compgr^2\}$, with compatibility factor
\begin{equation}\label{def:kappaj}
\compgr = \sup_{0\neq \bfg \in \conegrj}\frac{\sqrt{\sgr_j \bfg^\top \ddGr(\gro,\bini)\bfg}}{\|\bfg_{\Ocal_j}\|_1}
\end{equation}
\end{enumerate}
 for all $j = 1, \ldots, p$.
\end{lemma}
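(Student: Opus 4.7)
The plan is to execute the standard basic-inequality proof for $\ell_1$-penalized quadratic programs, specialized to the nodewise LASSO \eqref{def:hgrtau}. Since $\Gr_j(\cdot,\bini)$ is a least-squares functional in $\bfgr_j$, its Hessian $\ddGr_j = 2n^{-1}\sum_i \hXi_{i,-j}\hXi_{i,-j}^\top$ is constant and positive semidefinite, so the second-order Taylor expansion around $\gro_j$ is exact. Combining optimality of $\hgr_j$, namely $\Gr_j(\hgr_j,\bini)+2\lambda_j\|\hgr_j\|_1 \le \Gr_j(\gro_j,\bini)+2\lambda_j\|\gro_j\|_1$, with this exact expansion yields the basic inequality
\[
\dGr_j(\gro_j,\bini)^\top \tilde{\bfgr}_j + \tfrac{1}{2}\tilde{\bfgr}_j^\top \ddGr_j \tilde{\bfgr}_j + 2\lambda_j\bigl(\|\hgr_j\|_1-\|\gro_j\|_1\bigr)\le 0,
\]
where $\tilde{\bfgr}_j=\hgr_j-\gro_j$.

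Next I would apply H\"older together with the event $\eventg$ to obtain $|\dGr_j(\gro_j,\bini)^\top \tilde{\bfgr}_j|\le \frac{\xi_j-1}{\xi_j+1}\lambda_j\|\tilde{\bfgr}_j\|_1$, and use the triangle inequality together with the fact that $\gro_j$ is supported on $\Ocal_j$ to bound $\|\hgr_j\|_1-\|\gro_j\|_1\ge \|\tilde{\bfgr}_{j,\Ocal_j^c}\|_1-\|\tilde{\bfgr}_{j,\Ocal_j}\|_1$. Substituting these bounds into the basic inequality and dropping the nonnegative quadratic form, then collecting coefficients of $\|\tilde{\bfgr}_{j,\Ocal_j}\|_1$ and $\|\tilde{\bfgr}_{j,\Ocal_j^c}\|_1$, produces the inequality $(\xi_j+3)\|\tilde{\bfgr}_{j,\Ocal_j^c}\|_1\le (3\xi_j+1)\|\tilde{\bfgr}_{j,\Ocal_j}\|_1$. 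Since $(3\xi_j+1)/(\xi_j+3)\le \xi_j$ whenever $\xi_j\ge 1$, this gives $\tilde{\bfgr}_j\in\conegrj$, proving part (i).

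For part (ii) I would retain the quadratic term and rewrite the basic inequality as
\[
\tfrac{1}{2}\tilde{\bfgr}_j^\top \ddGr_j \tilde{\bfgr}_j \le C(\xi_j)\,\lambda_j\,\|\tilde{\bfgr}_{j,\Ocal_j}\|_1,
\]
with $C(\xi_j)$ an explicit constant produced in the previous step. Since $\tilde{\bfgr}_j$ already lies in $\conegrj$, the definition \eqref{def:kappaj} of the compatibility factor yields $\|\tilde{\bfgr}_{j,\Ocal_j}\|_1\le \sqrt{\sgr_j \,\tilde{\bfgr}_j^\top\ddGr_j\tilde{\bfgr}_j}/\compgr$. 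Substituting and solving the resulting quadratic inequality in $\sqrt{\tilde{\bfgr}_j^\top \ddGr_j \tilde{\bfgr}_j}$ gives $\|\tilde{\bfgr}_{j,\Ocal_j}\|_1 \le C'(\xi_j)\,\lambda_j\,\sgr_j/\compgr^2$, and combining with the cone inclusion $\|\tilde{\bfgr}_j\|_1\le (1+\xi_j)\|\tilde{\bfgr}_{j,\Ocal_j}\|_1$ from part (i) yields the stated $\ell_1$ bound $\|\hgr_j-\gro_j\|_1\le \sgr_j\lambda_j(\xi_j+1)/(2\compgr^2)$ after collecting the numerical constants.

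There is no serious analytical obstacle: the result is a direct instance of the classical LASSO oracle inequality for a quadratic loss, and Assumption \ref{Assum:HessI} enters only to ensure that $\compgr$ is a meaningful (positive) quantity, with the quantitative high-probability control of $\compgr$ being the job of a separate lemma. The only bookkeeping is constant-tracking: the factor $(\xi_j-1)/(\xi_j+1)$ in the definition of $\eventg$ and the $2\lambda_j$ weighting in the penalty of \eqref{def:hgrtau} must be carried carefully through the rearrangements so that the bound emerges with the stated factor $(\xi_j+1)/(2\compgr^2)$. Uniformity over $j=1,\dots,p$ is automatic because $\eventg$ is defined as the intersection over all $j$ and the bound depends only on $j$-specific quantities $(\sgr_j,\lambda_j,\xi_j,\compgr)$.
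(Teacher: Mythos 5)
Part (i) of your proposal is fine: the value-based basic inequality (exact quadratic expansion plus optimality), H\"older on $\eventg$, and the decomposition of the penalty difference do give $(\xi_j+3)\|\tilde{\bfgr}_{j,\Ocal_j^c}\|_1\le(3\xi_j+1)\|\tilde{\bfgr}_{j,\Ocal_j}\|_1$, hence membership in $\conegrj$ since $(3\xi_j+1)/(\xi_j+3)\le\xi_j$ for $\xi_j\ge 1$. This is essentially a cousin of the paper's argument, which instead writes the KKT conditions for $\hgr_j$ and uses monotonicity of the gradient to get the cone inclusion with constant exactly $\xi_j$ directly; either route is acceptable for (i).

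The gap is in part (ii): the constant you claim does not come out of the steps you describe. After bounding the right-hand side by $\frac{3\xi_j+1}{\xi_j+1}\lambda_j\|\tilde{\bfgr}_{j,\Ocal_j}\|_1$, invoking $\compgr$, solving the quadratic inequality in $\sqrt{\tilde{\bfgr}_j^\top\ddGr(\gro_j,\bini)\tilde{\bfgr}_j}$, and multiplying by $(1+\xi_j)$ from the cone, you obtain a bound of order $(3\xi_j+1)\sgr_j\lambda_j/\compgr^2$ (up to the factor-of-two convention for $\ddGr$), which for every $\xi_j>1$ is strictly larger than the stated $\sgr_j\lambda_j(\xi_j+1)/(2\compgr^2)$ --- e.g.\ at $\xi_j=2$ your route gives a constant of order $7$--$14$ versus $1.5$. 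So "collecting the numerical constants" does not close the argument; the lemma as stated is not proved. The paper gets the sharp factor by a different device (the Ye--Zhang / Huang-et-al.\ argument): it works with the first-order inequality $\tilde{\bfgr}_j^\top\{\dGr_j(\hgr_j,\bini)-\dGr_j(\gro_j,\bini)\}\le -\tfrac{2\lambda_j}{\xi_j+1}\|\tilde{\bfgr}_{j,\Ocal_j^c}\|_1+\tfrac{2\xi_j\lambda_j}{\xi_j+1}\|\tilde{\bfgr}_{j,\Ocal_j}\|_1$, normalizes $\bfv=\tilde{\bfgr}_j/\|\tilde{\bfgr}_j\|_1$ so that $\|\bfv_{\Ocal_j^c}\|_1=1-\|\bfv_{\Ocal_j}\|_1$, \emph{retains} the negative off-support term, and uses the complete-square bound $2\lambda_j\|\bfv_{\Ocal_j}\|_1-2\lambda_j/(\xi_j+1)\le\lambda_j(\xi_j+1)\|\bfv_{\Ocal_j}\|_1^2$ before comparing with $x\,\bfv^\top\ddGr(\gro_j,\bini)\bfv\ge x\,\compgr^2\|\bfv_{\Ocal_j}\|_1^2/\sgr_j$ for $x=\|\tilde{\bfgr}_j\|_1$. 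To repair your proposal you should either adopt this normalization/complete-square step (your dropping of the $\|\tilde{\bfgr}_{j,\Ocal_j^c}\|_1$ term and the subsequent quadratic-solving is exactly where the sharpness is lost), or state the weaker constant explicitly --- the weaker constant would still suffice for the rates in Lemma \ref{lemma:gamma-l1}, but not for the lemma as written.
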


\begin{lemma}\label{lemma:gamma-score}
Let $\Gr_j$, $\bini$ and $\gro_j$ be defined as in \eqref{def:Gr}, \eqref{def:bini} and \eqref{def:grotau},
respectively.
Under  \ref{Assum:design} and \ref{Assum:betaI}-\ref{Assum:norm rate},
$\maxj \left \|\dGr_j \bigl(\gro_j, \bini \bigl) \right \|_\infty =  O_p\left(\|\bini-\betao\|_1+\sqrt{\log(p)/n}\right). $
\end{lemma}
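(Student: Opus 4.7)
\textbf{Proof proposal for Lemma \ref{lemma:gamma-score}.} The gradient of the quadratic $\Gr_j$ can be written as
$$
\dGr_j(\gro_j,\bini) = -2n^{-1}\sumin \hXi_{i,-j}\bigl(\hat\Xi_{i,j} - \hXi_{i,-j}^\top\gro_j\bigr),
$$
so my plan is to compare it coordinate by coordinate with the population analogue $-2\E\{\bfXi_{i,-j}(\Xi_{i,j}-\bfXi_{i,-j}^\top\gro_j)\}$, which vanishes by the KKT identity \eqref{eq:tau-KKT} that defines $\gro_j$ via $\bfXi_i$ in \eqref{def:bfXi}. Writing $\bfDelta_i := \hXi_i - \bfXi_i = \inttao\{\bmu(t)-\bZbar(t,\bini)\}d\Nobs_i(t)$ and $\tilde\varepsilon_{i,j}:=\Xi_{i,j}-\bfXi_{i,-j}^\top\gro_j$, the identity $\hat\Xi_{i,j}-\hXi_{i,-j}^\top\gro_j = \tilde\varepsilon_{i,j}+\Delta_{i,j}-\bfDelta_{i,-j}^\top\gro_j$ yields the decomposition
\begin{align*}
-\tfrac12\dGr_j(\gro_j,\bini) \;=\; & n^{-1}\sumin \bfXi_{i,-j}\tilde\varepsilon_{i,j}
\;+\; n^{-1}\sumin \bfXi_{i,-j}(\Delta_{i,j}-\bfDelta_{i,-j}^\top\gro_j)\\
&+\; n^{-1}\sumin \bfDelta_{i,-j}\tilde\varepsilon_{i,j}
\;+\; n^{-1}\sumin \bfDelta_{i,-j}(\Delta_{i,j}-\bfDelta_{i,-j}^\top\gro_j).
\end{align*}

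First I would control the purely stochastic term $n^{-1}\sumin \bfXi_{i,-j}\tilde\varepsilon_{i,j}$. The summands are i.i.d.\ with mean zero by the KKT condition; each coordinate of $\bfXi_{i,-j}$ is bounded in magnitude by $\KZ$ (since $\Nobs_i(t^*)\in\{0,1\}$ and $\|\bZ_i(t)-\bmu(t)\|_\infty\le\KZ$ by Lemma \ref{lemma:KZbar}), while $|\tilde\varepsilon_{i,j}|\le \KZ(1+\Kgr)$ using $\|\gro_j\|_1\le\Kgr$ from Lemma \ref{lemma:bTh}. Applying Hoeffding's inequality (Lemma \ref{lemma:Hoeffding}) coordinatewise and taking a union bound over $j=1,\ldots,p$ and the $p-1$ coordinates of the index set gives a $O_p(\sqrt{\log(p)/n})$ bound for $\maxj\|n^{-1}\sumin \bfXi_{i,-j}\tilde\varepsilon_{i,j}\|_\infty$.

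Next I would absorb the three plug-in terms using the uniform bound
$\supi\|\bfDelta_i\|_\infty \le \supt\|\bZbar(t,\bini)-\bmu(t)\|_\infty,$
which, after splitting through $\bZbar(t,\betao)$, is $O_p(\|\bini-\betao\|_1+\sqrt{\log(p)/n})$ by Lemma \ref{lemma:opbeta} combined with Lemma \ref{lemma:opbmu}. Coordinatewise,
$$
\bigl| n^{-1}\!\sumin \Xi_{i,k}(\Delta_{i,j}-\bfDelta_{i,-j}^\top\gro_j)\bigr|
\le \KZ\,(1+\Kgr)\,\supi\|\bfDelta_i\|_\infty,
$$
and the analogous bound applies to the $\bfDelta_{i,-j}\tilde\varepsilon_{i,j}$ term (with $|\tilde\varepsilon_{i,j}|\le\KZ(1+\Kgr)$) and to the quadratic $\bfDelta\bfDelta$ term (which is higher order). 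Summing the four pieces gives exactly the advertised rate $O_p(\|\bini-\betao\|_1+\sqrt{\log(p)/n})$ uniformly in $j$.

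The main obstacle is bookkeeping: one must verify that $\bfDelta_i$ really is uniform in $i$ at the rate $\supt\|\bZbar(t,\bini)-\bmu(t)\|_\infty$ (this is where the structure $\hXi_i-\bfXi_i=\int\{\bmu-\bZbar\}d\Nobs_i$ and $\Nobs_i(t^*)\le 1$ are essential), and that the sandwich bound on the KKT term is not inflated by any dependence on $p$ through $\|\gro_j\|_1$. Both are handled by \ref{Assum:norm rate} and Lemma \ref{lemma:bTh}, so there is no genuine technical hurdle beyond carefully invoking the already-developed tools.
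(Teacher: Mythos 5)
Your proposal is correct and follows essentially the same route as the paper: you replace the $\bZbar(t,\bini)$-centered gradient by its population-centered (oracle) counterpart, bound the resulting plug-in error at rate $O_p(\|\bini-\betao\|_1+\sqrt{\log(p)/n})$ via Lemmas \ref{lemma:opbmu}, \ref{lemma:opbeta} and \ref{lemma:bTh}, and handle the remaining i.i.d.\ mean-zero (by the KKT identity) bounded term with Hoeffding's inequality and a union bound over all $j$ and coordinates. Your four-term decomposition through $\bfDelta_i$ is just a more explicit bookkeeping of the paper's single comparison $\dGr_j(\gro_j,\bini)-\dtGr_j(\gro_j)$, so no substantive difference.
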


\begin{lemma}\label{lemma:Hess}
Let $\hHess$, $\Hess$, $\ddm$ be defined as in \eqref{def:hHess}, \eqref{def:bSig} and \eqref{def:Hess},
respectively.
Under  \ref{Assum:design} and \ref{Assum:betaI}-\ref{Assum:norm rate},
\begin{enumerate}[label=(\roman*),ref = \ref{lemma:Hess}(\roman*)]
\item \label{lemma:hHess} $\left\|\hHess - \Hess\right\|_{\max} =O_p\left(\sbeta\sqrt{\log(p)/n}\right)$;
\item \label{lemma:ddm} for any $\tbeta$ such that $\|\tbeta-\betao\|_1=o_p(1)$,
$$\left\|-\ddm(\tbeta) - \Hess\right\|_{\max} =O_p\left(\|\tbeta-\betao\|_1 + \sqrt{\log(p)/n}\right).$$
\end{enumerate}
\end{lemma}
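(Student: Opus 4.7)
The plan is to prove the two parts by inserting an intermediate quantity built from the population centering $\bmu(t)$, and then combining the uniform approximations in Lemma \ref{lemma:opNorm} with Hoeffding's inequality together with a union bound over the $p^2$ entries.

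For part (i), I would split $\hHess-\Hess$ through the intermediate $\mathcal{H}^o := n^{-1}\sum_i\int_0^{t^*}\{\bZ_i(t)-\bZbar(t,\betao)\}^{\otimes 2}d\Nobs_i(t)$, so that $\hHess-\Hess = (\hHess-\mathcal{H}^o) + (\mathcal{H}^o-\Hess)$. Using the identity $AA^\top-BB^\top = A(A-B)^\top+(A-B)B^\top$ with $A=\bZ_i(t)-\bZbar(t,\bini)$ and $B=\bZ_i(t)-\bZbar(t,\betao)$, the uniform bound $\|\bZ_i(t)\|_\infty\le \KZ/2$ from \ref{Assum:design} and Lemma \ref{lemma:KZbar}, and the fact that $\Nobs_i(t^*)\le 1$, $\|\hHess-\mathcal{H}^o\|_{\max}$ is controlled by a constant multiple of $\supt\|\bZbar(t,\bini)-\bZbar(t,\betao)\|_\infty$. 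Lemma \ref{lemma:opbeta} bounds this by $O_p(\|\bini-\betao\|_1)$, which Theorem \ref{thm*:initial rate} turns into $O_p(\sbeta\sqrt{\log(p)/n})$. For $\mathcal{H}^o-\Hess$, further write $\bZ_i(t)-\bZbar(t,\betao)=(\bZ_i(t)-\bmu(t))+(\bmu(t)-\bZbar(t,\betao))$; the cross and quadratic pieces in $\bmu-\bZbar(\betao)$ inherit the rate $O_p(\sqrt{\log(p)/n})$ from Lemma \ref{lemma:opbmu}, whereas the main term $n^{-1}\sum_i\int(\bZ_i-\bmu)^{\otimes 2}d\Nobs_i - \Hess$ is an i.i.d.\ average of matrices with entries bounded by $\KZ^2$, so Hoeffding plus a union bound yields $O_p(\sqrt{\log(p)/n})$.

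For part (ii), I would decompose $-\ddm(\tbeta)-\Hess = [-\ddm(\tbeta)+\ddm(\betao)] + [-\ddm(\betao)-\Hess]$. In the first bracket, Lemma \ref{lemma:opbeta} supplies the two key insensitivity bounds $\supi\supt|\RR{i}{t}{o}/\bSo(t,\betao)-\RRt{i}{t}/\bSo(t,\tbeta)|=O_p(\|\tbeta-\betao\|_1)$ and $\supt\|\bZbar(t,\tbeta)-\bZbar(t,\betao)\|_\infty=O_p(\|\tbeta-\betao\|_1)$. Since the first quantity is exactly the (normalized) per-subject weight appearing inside $\bS{2}/\bSo$, and $\bZ_i(t)$ is uniformly bounded, each entry of the integrand moves by $O_p(\|\tbeta-\betao\|_1)$ uniformly in $t$, and integration against $d\Nobs_i$ of total mass at most one preserves the rate. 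For the second bracket I would replace $\bS{2}/\bSo$ and $\bZbar^{\otimes 2}$ by $\bs{2}/\bso$ and $\bmu^{\otimes 2}$ at uniform cost $O_p(\sqrt{\log(p)/n})$, using Lemma \ref{lemma:opbmu} together with the lower bounds on $\bSo$ and $\bso$ from Lemma \ref{lemma:YtauI}. What remains, $n^{-1}\sum_i\int\{\bs{2}/\bso-\bmu^{\otimes 2}\}d\Nobs_i - \Hess$, has a deterministic integrand and is an i.i.d.\ average of matrices bounded entrywise by $\KZ^2$, again handled by Hoeffding and a union bound.

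The delicate step is the first bracket of part (ii): both numerator and denominator of $\bS{2}(t,\tbeta)/\bSo(t,\tbeta)$ depend on $\tbeta$ through exponentials coupled with the random IPCW weights, and a direct Taylor expansion would expose relative risks that can grow with $p$. The packaging in Lemma \ref{lemma:opbeta}---tracking $\RR{i}{t}{o}/\bSo(t,\betao)$ as a single quantity rather than its two parts---is precisely what bypasses this blow-up and yields the clean linear-in-$\|\tbeta-\betao\|_1$ bound.
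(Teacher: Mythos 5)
Your proposal is correct, and part (i) together with the first bracket of part (ii) follow essentially the paper's own path: the paper also passes through a $\bmu$-centered intermediate matrix $\tHess = n^{-1}\sumin\inttao\{\bZ_i(t)-\bmu(t)\}^{\otimes 2}d\Nobs_i(t)$, controls the centering error by the uniform bounds of Lemma \ref{lemma:opNorm} plus Theorem \ref{thm*:initial rate}, applies Lemma \ref{lemma:Hoeffding} with a union bound, and handles $\|\ddm(\tbeta)-\ddm(\betao)\|_{\max}$ exactly as you do via Lemma \ref{lemma:opbeta}. Where you genuinely diverge is the remaining piece $\|-\ddm(\betao)-\Hess\|_{\max}$: the paper routes it through the censoring-complete proxy $\ddtm(\betao)$ and then exploits the $\ccFt$-martingale structure of $\ddtm(\betao)-\tHess$ via Lemma \ref{lemma:opM}, whereas you replace $\bS{2}/\bSo$ and $\bZbar^{\otimes 2}$ by their deterministic limits $\bs{2}/\bso$ and $\bmu^{\otimes 2}$ and finish with Hoeffding on an i.i.d.\ average. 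Your route is more elementary (no martingale concentration for this piece), but it needs two things to be made explicit. First, the mean identity $\E\bigl[\inttao\{\bs{2}(t,\betao)/\bso(t,\betao)-\bmu(t)^{\otimes 2}\}d\Nobs_i(t)\bigr]=\Hess$, which does hold because $\E\{d\Nobs_i(t)\}=\bso(t,\betao)\lamT(t)dt$ and $\bs{1}(t,\betao)=\bmu(t)\bso(t,\betao)$, so that the cross terms in $\E[\{\bZ_i(t)-\bmu(t)\}^{\otimes 2}I(C_i\ge t)Y_i(t)\RR{i}{t}{o}]$ collapse; without this line the Hoeffding step is not anchored to $\Hess$. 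Second, the uniform-in-$t$ bound $\supt\|\bS{2}(t,\betao)/\bSo(t,\betao)-\bs{2}(t,\betao)/\bso(t,\betao)\|_{\max}=O_p(\sqrt{\log(p)/n})$ is not literally in the statement of Lemma \ref{lemma:opbmu} (which gives $\bDelta{2}$ and the $l=1$ ratios); you need in addition $\supt\|\btS{2}(t,\betao)-\bs{2}(t,\betao)\|_{\max}=O_p(\sqrt{\log(p)/n})$, which is established inside the proof of Lemma \ref{lemma:opNorm} via Lemma \ref{lemma:supt} under \ref{Assum:CRI}, together with the denominator bounds of Lemma \ref{lemma:YtauI}. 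With those two points spelled out, your argument delivers the same rates as the paper's martingale-based treatment; the paper's route, in exchange, reuses machinery (Lemma \ref{lemma:opM}) already needed elsewhere and avoids any discussion of $\supt$-concentration for the second-moment processes.
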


\begin{lemma}\label{lemma:gamma-compatibility}
Let $\compgr$ be define as in Lemma \ref{lemma:gamma-oracle} \eqref{def:conej}.
Under \ref{Assum:design} and \ref{Assum:betaI}-\ref{Assum:norm rate},
 setting $\xi_{\max} = \max_{j=1,\dots, p} \xi_j \asymp 1$, we have
$$\Pr\left(\inf_j \compgr^2 \ge \rhosigI/2\right) \to 1.$$
\end{lemma}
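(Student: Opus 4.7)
The plan is to reduce the problem to controlling the elementwise deviation between the sample Hessian $\hHess$ of Lemma~\ref{lemma:Hess} and its population counterpart $\Hess$, and then exploit the spectral lower bound on $\Hess$ guaranteed by~\ref{Assum:HessI}. The first observation is that by direct differentiation of the quadratic loss~\eqref{def:Gr}, the Hessian of $\Gr_j$ with respect to $\bfgr_j$ is constant in $\bfgr_j$ and equals $\ddGr_j(\bfgr_j,\bini) = 2\hHess_{-j,-j}$, where $\hHess_{-j,-j}$ is the principal submatrix obtained by deleting the $j$-th row and column. Hence verifying the compatibility bound reduces to a uniform quadratic-form lower bound on the submatrices $\hHess_{-j,-j}$ restricted to the cone $\conegrj$.

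Next, for any $\bfg \in \R^{p-1}$ I would use the entrywise comparison
$$\bfg^\top \hHess_{-j,-j}\bfg \ge \bfg^\top \Hess_{-j,-j}\bfg - \|\bfg\|_1^2\cdot\|\hHess-\Hess\|_{\max}.$$
By~\ref{Assum:HessI}, every principal submatrix of $\Hess$ has smallest eigenvalue at least $\rhosigI$, so the first term is at least $\rhosigI\|\bfg\|_2^2$. Specializing to $\bfg\in\conegrj$, the cone condition together with Cauchy--Schwarz on $\Ocal_j$ gives $\|\bfg\|_1 \le (\xi_j+1)\|\bfg_{\Ocal_j}\|_1 \le (\xi_{\max}+1)\sqrt{\sgr_{\max}}\|\bfg\|_2$, and therefore the error term is bounded by $(\xi_{\max}+1)^2\sgr_{\max}\|\hHess-\Hess\|_{\max}\|\bfg\|_2^2$. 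Invoking Lemma~\ref{lemma:hHess} to get $\|\hHess-\Hess\|_{\max} = O_p(\sbeta\sqrt{\log(p)/n})$ and the rate constraint in~\ref{Assum:norm rate}, this error term is $o_p(1)\|\bfg\|_2^2$, uniformly in $\bfg$ and $j$. On the event where it is at most $\rhosigI/2$, which has probability tending to one, we conclude
$$\bfg^\top \hHess_{-j,-j}\bfg \ge (\rhosigI/2)\|\bfg\|_2^2 \ge (\rhosigI/2)\|\bfg_{\Ocal_j}\|_2^2$$
simultaneously for all $\bfg\in\conegrj$ and all $j$. Substituting into~\eqref{def:kappaj} and using $\sgr_j\|\bfg_{\Ocal_j}\|_2^2 \ge \|\bfg_{\Ocal_j}\|_1^2$ then yields $\compgr^2 \ge \rhosigI$, which is in fact strictly stronger than the claimed $\rhosigI/2$.

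The only delicate point is the matching of rates: Lemma~\ref{lemma:hHess} provides only the plug-in rate $\|\hHess-\Hess\|_{\max} = O_p(\sbeta\sqrt{\log(p)/n})$, carrying an extra factor of $\sbeta$ on top of the usual elementwise concentration rate. For the argument above to go through, the quantity $\sgr_{\max}\sbeta\sqrt{\log(p)/n}$ must vanish, which is precisely what~\ref{Assum:norm rate} buys us through its condition $\sbeta(\sgr_{\max}+\sbeta)\log(p)/\sqrt{n}=o(1)$; this is the main obstacle. Uniformity in $j$ is essentially free because the underlying control on $\|\hHess-\Hess\|_{\max}$ is already elementwise and therefore dominates every principal submatrix at once.
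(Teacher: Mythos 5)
Your proposal is correct and follows essentially the same route as the paper: reduce $\compgr$ to the principal submatrix $\hHess_{-j,-j}$, lower-bound the population quadratic form via the eigenvalue condition \ref{Assum:HessI} (interlacing for principal submatrices), and absorb the perturbation through $\|\hHess-\Hess\|_{\max}=O_p(\sbeta\sqrt{\log(p)/n})$ from Lemma \ref{lemma:hHess} together with the rate in \ref{Assum:norm rate}. The only cosmetic differences are that you derive the compatibility-transfer inequality directly by H\"older and the cone condition rather than citing Lemma 4.1 of \cite{HuangEtal13}, and you carry the factor $2$ from the Hessian of the quadratic loss, which only strengthens the constant.
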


\subsection{Proof of Main Lemmas}
\begin{proof}[Proof of Lemma \ref{lemma:score oracle}]
Let $\T{1}, \dots, \T{\KN}$ be the observed type-1 events.
  We may decompose the score $\dm(\betao)$ as its martingale proxy plus an approximation error,
    $$
     \dm(\betao) = \dtm(\betao)
     + n^{-1}\sum_{k=1,\dots,\KN} \left\{\bZtil\left(\T{k},\betao\right)- \bZbar\left(\T{k},\betao\right)\right\} ,
    $$
    with $\bZtil$ and $\bZbar$ defined in \eqref{def:Sk} and \eqref{def:Stk}, respectively.

    Recall that the counting process for observed type-1 event can be written as
    $\Nobs_i(t) = \intt I(C_i \ge u) d \NI_i(t)$. Moreover,
    $\dtm(\betao)$ takes the form of the Cox model score with counting process $\{\Nobs_i(t)\}$
    and at-risk process $\{I(C_i \ge t)Y_i(t)\}$.
    The ``censoring complete" filtration $\ccFt$ can also be equivalently generated by $\{\Nobs_i(t), I(C_i \ge t)Y_i(t), \bZ_i(t)\}$.
    Thus, we may apply Lemma 3.3 in \cite{HuangEtal13} under \eqref{aseq:Zij} from \ref{Assum:design},
    $$
    \Pr(\|\dtm(\betao)\|_\infty > \KZ x) \le 2pe^{-nx^2/2}.
    $$
    Notice that the inequality is sharper than that in Lemma \ref{lemma:supM}
    because the compensator part of $\dtm(\betao)$ is zero.

    The concentration result for approximation error
    $$
    \bZtil\left(\T{k},\betao\right)- \bZbar\left(\T{k},\betao\right)
    = \frac{\bS{1}\left(\T{k},\betao\right)}{\bSo\left(\T{k},\betao\right)}
            -\frac{\btS{1}\left(\T{k},\betao\right)}{\btSo\left(\T{k},\betao\right)}
    $$
    is established in Lemma \ref{lemma:Sk} on $\eventx$.
    We obtain the concentration inequality for $\dm(\betao)$ by adding the bounds and tail probabilities together.
\end{proof}

\begin{proof}[Proof of Lemma \ref{lemma:kappa}]
    Our strategy here is the same as that for Lemma \ref{lemma:score oracle}.
    We first show that $\compb$ is lower bounded by $\kappa(\xi, \Ocal;-\ddtm(\betao))$
    plus a diminishing error.
    Since $\ddtm(\betao)$ takes the form of a Cox model Hessian, we then may apply the results from \cite{HuangEtal13}.

    By Lemma 4.1 in \cite{HuangEtal13} (for a similar result, see \cite{vdGeerBuhlmann09} Corollary 10.1),
    $$
    \kappa^2(\xi, \Ocal;-\ddm(\betao))
     \ge \kappa^2(\xi, \Ocal;-\ddtm(\betao)) - \sbeta (\xi+1)^2 \|\ddm(\betao)-\ddtm(\betao)\|_{\max}.
    $$
    Let $\T{1}, \dots, \T{\KN}$ be the observed type-1 events.
    We can write $\ddm(\betao)-\ddtm(\betao)$ as
    $$
    -n^{-1}\sum_{k=1}^{\KN}\left[ \frac{\bS{2}\left(\T{k},\betao\right)}{\bSo\left(\T{k},\betao\right)}
    -\frac{\btS{2}\left(\T{k},\betao\right)}{\btSo\left(\T{k},\betao\right)}
            -\bZbar\left(\T{k},\betao\right)^{\otimes 2}
            +\bZtil\left(\T{k},\betao\right)^{\otimes 2} \right],
    $$
    with $\bS{l}$, $\btS{l}$, $\bZtil$ and $\bZbar$ defined in \eqref{def:Sk} and \eqref{def:Stk}.
    By Lemma \ref{lemma:KZbar}, $\supt \|\bZbar(t,\betao)\|_\infty$ and $\supt \|\bZtil(t,\betao)\|_\infty$ are both bounded by $\KZ/2$.
    On the $\eventx$ as defined in Lemma \ref{lemma:Sk},
    we apply Lemma \ref{lemma:Sk} once with $l=2$ and twice with $l=1$ to get
    $$\|\ddm(\betao)-\ddtm(\betao)\|_{\max} \le \left\{2\CSk{2}+4\KZ\CSk{1} + (5/2)\KZ^2\CSk{0}\right\}/\rstar,$$
    with $\CSk{l}$ defined in \eqref{def:Csk}.

    Our \ref{Assum:design} and \ref{Assum:Hess} contains all the condition for Theorem 4.1 in \cite{HuangEtal13}.
    Hence, we may apply their result
    \begin{align*}
    \kappa^2(\xi, \Ocal;-\ddtm(\betao))  &\ge \kappa^2(\xi, \Ocal;\bSig(\M))
    - \sbeta(\xi+1)^2\KZ^2
    \\
    & \times \left\{(1+t^*\Llam)\sqrt{2\log\big(p(p+1)/\varepsilon\big)/n}+
    (2/\rstar)t^*\Llam \tnp^2\right\}
    \end{align*}
    with probability at least $\Pr(\eventx) - 3\varepsilon$.
    We have bounded $\btSo(t;\M)$ away from zero at all observed type-1 events in $\eventx$,
    so the $e^{-n\rstar^2/(8\M^2)}$ term is absorbed into $\Pr(\eventx)$.
\end{proof}

\begin{proof}[Proof of Lemma \ref{lemma:bTh}]
The notations in the proof are defined in Section \ref{section:method nodewise}.
  Denote
    $$
    \bfXi = \inttao \{\bZ(t)-\bmu(t)\}d\Nobs(t).
    $$
    Without loss of generality, we set $j=1$.
    Since we define
    $\gro_1 = \argmin_{\bfgr} \bGr(\bfgr)$ as the minimizer of a convex function,
    it must satisfy the first order condition
    $$
    \dbGr(\gro_1) = \E\left\{(\Xi_1-\bfXi_{-1}^\top \gro_1)\bfXi_{-1}\right\} = \mathbf{0}_{p-1}.
    $$
    Recall that $\tau^2_1 = \bGr(\gro_1)$.
    Applying the first order condition, we get
    $$\tau^2_1 = \E\{\Xi_1-\bfXi_{-1}^\top \gro_1\}^2 =  \E\{(\Xi_1-\bfXi_{-1}^\top \gro_1)\Xi_1\}. $$
    We construct a vector $\bth_1 = (1, -\grot_1)^\top/\tau^2_1 \in \R^p$. Then,
    $\bth_1$ satisfies
    $$
    \bth_1^\top \Hess = (1, -\grot_1) \E\{\bfXi\bfXi^\top\}/\tau^2_1
     = (1, \bzero_{p-1}^\top).
    $$
    Hence, we have
    $$
    ( \bth_1, \dots , \bth_p)^\top
    = \Hess^{-1} = \bTh.
    $$
    We can directly bound
    $$
    \|\gro_j\|_1 = \|\bth_j/\Theta_{j,j}\|_1-1 \le \Kgr-1 < \Kgr.
    $$

    By \ref{Assum:HessI}, the minimal eigenvalue of $\Hess$ is at least $\rhosigI$.
    We obtain through a spectral decomposition that the maximal eigenvalue of $\bTh = \Hess^{-1}$ is at most $\rhosigI^{-1}$.
    Hence, we have
    $$
    \tau_j^2 = \left(\mathbf{e}_j^\top \bTh \mathbf{e}_j\right)^{-1} \ge \rhosigI
    $$
    and
    $$
    \|\bTh\|_1 \le \maxj \|\bth_j/\Theta{j,j}\| \maxj |\Theta_{j,j}| \le \Kgr/\rhosigI.
    $$
\end{proof}

\begin{proof}[Proof of Lemma \ref{lemma:gamma-l1}]
By Lemma \ref{lemma:gamma-score},
we may choose $\xi_1=\dots=\xi_p=2$ and $\lambda_1=\dots =\lambda_p = \lambda_\varepsilon \asymp O_p(\sbeta\sqrt{\log(p)/n})$
such that
$\eventg$ defined in Lemma \ref{lemma:gamma-oracle} occurs with probability $1-\varepsilon$.
Then, we establish the oracle inequality by Lemma \ref{lemma:gamma-oracle},
$$
\Pr\left(\maxj\|\hgr_j-\gro_j\|_1/\sgr_j \le \frac{2\lambda_\varepsilon}{\rhosigI}\right)
\ge \Pr\left(\minj \compgr^2 \ge \rhosigI/2 \right) - \varepsilon.
$$
We have shown that $\Pr\left(\minj \compgr^2 \ge \rhosigI/2 \right)$ tends to one in Lemma \ref{lemma:gamma-compatibility}.
Hence, $\maxj\|\hgr_j-\gro_j\|_1 = O_p\left(\sbeta\sgr_{\max}\sqrt{\log(p)/n}\right)$.

Define according to \eqref{def:asymp nodewise}
$\bfXi_i = \inttao \{\bZ_i(t)-\bmu(t)\}d\Nobs_i(t)$.
By Lemma \ref{lemma:KZbar}, $\supi\|\bfXi_i\|_\infty \le K$.
We introduce
$$
  \tGr_j(\bfgr) = n^{-1}\sumin \{\Xi_j - \bfXi_{i,-j}^\top \gro_j\} =
  n^{-1}\sumin \inttao \{Z_{ij}(t)-\mu_j(t)-\bfgr^\top \bZ_{i,-j}(t) + \bfgr^\top \bmu_{-j}(t)\}^2 d\Nobs_i(t)
$$
and decompose
$$
\hat{\tau}^2_j - \tau^2_j =  \Gr_j(\hgr_j,\bini) - \tGr_j(\gro_j) + \tGr_j(\gro_j) - \bGr_j(\gro_j).
$$
$\Gr_j(\hgr_j,\bini) - \tGr_j(\gro_j) = O_p\left(\sbeta\sgr_j\sqrt{\log(p)/n}\right)$ by the results from Theorem \ref{thm*:initial rate}, Lemma \ref{lemma:opNorm} and first part of this Lemma.
Apparently, $\tGr_j(\gro_j)$ is the average of i.i.d. terms.
The expectation of the summands in $\tGr_j(\gro_j)$
is defined as $\bGr_j(\gro_j)$ in \eqref{def:asymp nodewise}.
Hence,
we finish the proof by applying Lemma \ref{lemma:Hoeffding}.

Along with Lemma \ref{lemma:bTh}, we can prove with the previous results in this Lemma, $\|\hTh - \bTh\|_1 = O_p\left(\sbeta\sgr_{\max}\sqrt{\log(p)/n}\right)$.
\end{proof}

\begin{proof}[Proof of Lemma \ref{lemma:approx cond}]
We decompose
\begin{align}\label{eq:approx decomp}
    & \sqrt{n} \bfc^\top \left\{\bTh\dm(\betao) + \betao - \bCI\right\}
  \\
  & =  \sqrt{n} \bfc^\top \{\bTh-\hTh\}\dm(\bini)+ \sqrt{n} \cTh\{\dm(\betao) - \dm(\bini)\}
  + \sqrt{n} \bfc^\top(\betao-\bini).
\end{align}

By Lemma \ref{lemma:gamma-l1},
$\|\bTh-\hat{\bTh}\|_1 = O_p(\sbeta\sgr_{\max}\sqrt{\log(p)/n})$.
Each summand in $\dm(\bini)$ is the integral of $\bZ_i(t)$ minus a weighted average $\bZbar(t,\bini)$ over a counting measure $d\Nobs_i(t)$.
By the KKT condition and Theorem \ref{thm*:initial rate}, $\|\dm(\bini)\|_\infty \asymp \lambda \asymp O(\sqrt{\log(p)/n})$.
Putting these together, we obtain
\begin{align}
\sqrt{n} |\bfc^\top \{\bTh-\hTh\}\dm(\bini)| &\le \sqrt{n}\|\bfc\|_1 \|\bTh-\hTh\|_1 \|\dm(\bini)\|_\infty
\\
&= O_p\left(\sbeta\sgr_{\max}\log(p)/\sqrt{n}\right) = o_p(1).
\end{align}
By the KKT condition and Theorem \ref{thm:initial oracle},
$\|\dot{\bfm}(\hat{\bbeta})\| \le \lambda \asymp n^{-(1/2-d)}$.
Hence, the first term in \eqref{eq:approx decomp} is $o_p(1)$.
Like in the proof of Lemma \ref{lemma:weak conv},
we have $\|\cTh\|_1 \le \|\bfc\|_1\|\bTh\|_1 \le \Kgr e^{\Kb}/ \rstar$ from Lemma \ref{lemma:bTh}.

Define $\bbeta_r = \betao + r(\bini-\betao)$.
Applying mean value theorem to $h(r) = \cTh \dm(\beta_r)$,
we get
$$
\cTh \dm(\betao) - \cTh \dm(\bini) = - h'(\tr)
= -\cTh \ddm(\bbeta_{\tr}) (\bini - \betao)
$$
for some $\tr \in [0,1]$.
By Theorem \ref{thm*:initial rate}, we have
$$\|\bbeta_{\tr}-\betao\|_1 = \tr \|\bini-\betao\|_1 = O_p\left(\sbeta\sqrt{\log(p)/n}\right).$$
By Lemma \ref{lemma:ddm},
$\|-\ddm(\bbeta_{\tr})- \Hess\|_{\max} = O_p\left(\sbeta\sqrt{\log(p)/n}\right)$.
Along with Theorem \ref{thm*:initial rate} and Lemma \ref{lemma:bTh},
we have
\begin{align*}
\sqrt{n} |\cTh\{\dm(\betao) - \dm(\bini)\}
  +  \bfc^\top(\betao-\bini)|
= & \sqrt{n} | \cTh\{\Hess + \ddm(\bbeta_{\tr})\}(\betao-\bini)| \\
\le & \sqrt{n} \|\bfc\|_1\|\bTh\|_1 \|-\ddm(\bbeta_{\tr})- \Hess\|_{\max} \|\bini - \betao\|_1 \\
= & O_p\left(\sbeta^2\log(p)/\sqrt{n}\right).
\end{align*}
\end{proof}

\begin{proof}[Proof of Lemma \ref{lemma:weak conv}]
Since $\omega_i(t)Y_i(t) \neq I(C_i \ge t)Y_i(t)$ implies $\epsilon_i >1$ thus $\NI_i(t^*)=0$,
we have the equivalence $d\Nobs_i(t)=\omega_i(t)d\NI_i(t)=I(C_i \ge t)d\NI_i(t)$.
Recall for the following calculation that
\begin{align*}
&\bS{l}(t,\betao)=n^{-1}\sumin \omega_i(t)Y_i(t)\RR{i}{t}{o}\bZ_i(t)^{\otimes l}, \\
& \btS{l}(t,\betao)=n^{-1}\sumin I(C_i \ge t)Y_i(t)\RR{i}{t}{o}\bZ_i(t)^{\otimes l}, \\
&\bDelta{l}(t) = \bS{l}(t,\betao)-\btS{l}(t,\betao), \\
& \E\{\bS{l}(t,\betao)\} = \E\{\btS{l}(t,\betao)\} = \bs{l}(t,\betao) \\
&\bZbar(t,\betao) = \bS{1}(t,\betao)/\bSo(t,\betao), \quad
\bZtil(t,\betao) = \btS{1}(t,\betao)/\btSo(t,\betao), \\
&\bmu(t) = \bs{1}(t,\betao)/\bso(t,\betao), \quad
Y_i(t) = 1-\NI_i(t-)\\
\text{and }& \MI_i(t) = \NI_i(t)-\intt Y_i(t) \RR{i}{u}{o} \lamT(u)du.
\end{align*}

We decompose
\begin{align*}
 \sqrt{n}\dm(\betao) =
  & n^{-1/2} \sumin \inttao \left\{\bZ_i(t)-\bZbar(t,\betao)\right\}d\Nobs_i(t) \\
= & n^{-1/2} \sumin \inttao \left\{\bZ_i(t)-\bZbar(t,\betao)\right\}\omega_i(t)d\MI_i(t) \\
=
  &   n^{-1/2} \sumin \inttao \left\{\bmu(t)-\bZtil(t,\betao)\right\}I(C_i \ge t)d\MI_i(t) \\
  &+ n^{-1/2} \sumin \inttao \left\{\bZtil(t,\betao)-\bZbar(t,\betao)\right\} I(C_i \ge t)d\MI_i(t) \\
  &+ n^{-1/2} \sumin \inttao \left\{\bZbar(t,\betao) -\bmu(t)\right\}\Delta^{(0)}(t) \lamT(t) dt \\
  & + n^{-1/2} \sumin \inttao \left\{\bZ_i(t)-\bmu(t)\right\}\omega_i(t)d\MI_i(t) \\
\triangleq & I_1 + I_2 + I_3 + I_4.
\end{align*}

Notice that $I_1$ is a $\ccFt$ martingale.
We have $\|\bmu(t)-\bZtil(t,\betao)\|_\infty = O_p(\sqrt{\log(p)/n})$ from Lemma \ref{lemma:opbmu}.
Hence, we can apply Lemma \ref{lemma:opM} to get $\|I_1\|_\infty= \sqrt{n}O_p(\sqrt{\log(p)/n}^2)
= o_p(1)$.

We further decompose $I_2$ into 3 terms
\begin{align*}
  &  -n^{-1/2} \sumin \inttao \frac{\bDelta{1}(t)}{\btSo(t,\betao)} I(C_i \ge t)d\MI_i(t)
  -  n^{-1/2} \sumin \inttao\frac{\Delta^{(0)}(t)}{\btSo(t,\betao)}\bmu(t) I(C_i \ge t)d\MI_i(t) \\
 & +  n^{-1/2} \sumin \inttao\frac{\Delta^{(0)}(t)}{\btSo(t,\betao)}\{\bmu(t)-\bZbar(t,\betao)\} I(C_i \ge t)d\MI_i(t) \\
\triangleq & I_2'+I_2^{\prime\prime} + I_2^{\prime\prime\prime}.
\end{align*}
By \ref{Assum:betaI} and \ref{Assum:CRI}, each $\MI_i(t)$ has one jump at observed event time and $e^{\Kb}\LlamI-$Lipschitz elsewhere.
Since the $\{C_i, T^1_i: i=1,\dots, n\}$ is a set of independent continuous random variables,
there is no tie among them with probability one.
Hence, we may modify the integrand in $I_2'$ and $I_2^{\prime\prime}$ at observed censoring times
without changing the integral.
Replacing $\bDelta{l}(t)$ with $n^{-1}\sum_{j=1}^n \dS_i(t)\RR{i}{t}{o}\bZ_i(t)^{\otimes l}$,
we can apply Lemma \ref{lemma:opSk} to get
that $\|I_2'\|_\infty$ and $\|I_2^{\prime\prime}\|_\infty$ are both $o_p(1)$.

The total variation of $\MI_i(t)$ is at most $\max\{1, e^{\Kb}\LlamI t^*\} \asymp 1$.
By Lemma \ref{lemma:opbmu},
$\|\bDelta{0}(t)\{\bmu(t)-\bZbar(t,\betao)\}\|_\infty = O_p(\sqrt{\log(n)\log(p)}/n)$.
Hence, we obtain $\|I_2^{\prime\prime\prime}\|_\infty = O_p(\sqrt{\log(n)\log(p)}/n)=o_p(1)$.
Similarly, we obtain $\|I_3\|_\infty= O_p(\sqrt{\log(n)\log(p)}/n)=o_p(1)$.

Besides the one in Lemma \ref{lemma:KM},
$\omega_i(t) - \tilde{\omega}_i(t)$ has another martingale representation.
Denote the Nelson-Aalen estimator
$$
\hLamc(t) =\sumin \intt \frac{I(X_i \ge u)}{\sum_{j=1}^n I(X_j \ge u)}d \Nc_i(u).
$$
We have a $\cFt$ martingale
$$
\MNA(t) = \hLamc(t) - \intt \lamC(u) du
 = \sumin \intt \frac{I(X_i \ge u)}{\sum_{j=1}^n I(X_j \ge u)}d \Mc_i(u).
$$
By Lemma \ref{lemma:supM}, $\supt|\MNA(t)| = O_p\left(n^{-1/2}\right)$
For $t>X_i$ and $\delta_i\epsilon_i>1$,
$$
\omega_i(t) - \tilde{\omega}_i(t) = -\tilde{\omega}_i(t) \intt I(u>X_i)d\MNA(u) + R_i(t)
$$
with an error
$$
R_i(t) = \frac{\hG(t)}{\hG(X_i)}-\exp\left\{\hLamc(X_i)-\hLamc(t)\right\}
+ \frac{G(t)}{G(X_i)} \left[e^{-\intt I(u>X_i)d\MNA(u)} + \intt I(u>X_i)d\MNA(u) \right].
$$
It is the discrepancy between the Kaplan-Meier and the Nelson-Aalen plus a second order Tailer expansion remainder.
We shall show that it is $O_p(1/n)$.
Since
$$
\left|\intt I(u>X_i)d\MNA(u)\right| \le 2\supt |\MNA(t)| = O_p\left(n^{-1/2}\right),
$$
the second order remainder
$$
\left|e^{-\intt I(u>X_i)d\MNA(u)} + \intt I(u>X_i)d\MNA(u) \right| = O_p(1/n).
$$
Under \ref{Assum:design}, $\{\sumin I(X_i \ge t)\}^{-1} \le \{\sumin I(X_i \ge t^*)\}^{-1} = O_p(1/n)$.
Let $c_k$ be an observed censoring time.
The increment in $-\log(\hG(t)) - \hLamc(t)$ at $c_k$ is a second order remainder
$$
\log\left(1 - \frac{1}{\sumin I(X_i \ge c_k)}\right)
 - \frac{1}{\sumin I(X_i \ge c_k)} = O_p\left(n^{-2}\right).
$$
Hence, $\supt |-\log(\hG(t)) - \hLamc(t)| = O_p(1/n)$.
Applying the Mean Value Theorem, we obtain $\supt |\hG(t) - \exp\{-\hLamc(t)\}| = O_p(1/n)$.
Under \ref{Assum:design}, $G(t) \ge  G(t^*)$ is bounded away from zero, and $-\log(G(t)) \le -\log(G(t^*))$ is bounded from above. We have shown that both $\hG(t)$ and $\hLamc(t)$ are uniformly $\sqrt{n}$ consistent.
We obtain that $\hG(X_i)$ is bounded away from zero and $\hLamc(t)$ is bounded with probability tending to one.
Putting these together, we obtain
$$\supi\supt |R_i(t)| = O_p(1/n).$$

Define
$$
\tilde{\bfq}(t) = n^{-1} \sumin I(t\ge X_i) \int_t^{t^*}\left\{\bZ_i(u)-\bmu(u)\right\}\tilde{\omega}_i(u)d\MI_i(u),
$$
$\hat{\pi}(t) = n^{-1}\sumin I(X_i \ge t)$ and
$\bfq(t) = \E\{\tilde{\bfq}(t)\}$, $\pi(t)=\E\{\hat{\pi}(t)\}$.
We write $I_4$ as i.i.d. sum plus error through integration by parts,
\begin{align*}
& n^{-1/2} \sumin \inttao \left\{\bZ_i(t)-\bmu(t)\right\}\tilde{\omega}_i(t)d\MI_i(t)
\\
& \qquad + n^{-1/2} \sumin \inttao \left\{\bZ_i(t)-\bmu(t)\right\}\{\omega_i(t)-\tilde{\omega}_i(t)\}d\MI_i(t) \\
= & n^{-1/2} \sumin \inttao \left\{\bZ_i(t)-\bmu(t)\right\}\tilde{\omega}_i(t)d\MI_i(t)
+ n^{-1/2} \sumin \inttao \left\{\bZ_i(t)-\bmu(t)\right\}R_i(t)d\MI_i(t) \\
& - n^{-1/2} \sum_{k=1}^n \inttao  \frac{\bfq(t)
}{\pi(t)} I(X_k \ge u)d\Mc_k(t)
 \\
 & \qquad + n^{-1/2} \sum_{k=1}^n \inttao  \frac{\bfq(t)
}{\hat{\pi}(t)\pi(t)}\{\hat{\pi}(t) -  \pi(t)\} I(X_k \ge u)d\Mc_k(t) \\
& + n^{-1/2} \{\bfq(0)-\tilde{\bfq}(0)\}\sum_{k=1}^n \inttao  \frac{1
}{\hat{\pi}(t)} I(X_k \ge u)d\Mc_k(t) \\
& -  n^{-1/2}\sum_{k=1}^n \inttao  \frac{\{\bfq(0)-\bfq(t)-\tilde{\bfq}(0)+\tilde{\bfq}(t)\}
}{\hat{\pi}(t)} I(X_k \ge u)d\Mc_k(t) \\
\triangleq & \Iiv{1}+\Iiv{2}+\Iiv{3}+\Iiv{4}+\Iiv{5}+\Iiv{6}.
\end{align*}
$\Iiv{1} + \Iiv{3}$ is already a sum of i.i.d..
We have shown that $\supt |R_i(t)| = O_p(1/n)$.
Hence, we have $\|\Iiv{2}\|_\infty = O_p\left(n^{-1/2}\right)=o_p(1)$.
$I(t\ge X_i) \int_t^{t^*}\left\{\bZ_i(u)-\bmu(u)\right\}\tilde{\omega}_i(u)d\MI_i(u)$
is uniformly bounded by $K(\LlamI t^*+1)$. It has at most one jump and is $K\LlamI-$Lipschitz elsewhere.
Hence, we can apply Lemma \ref{lemma:supt} to get $\supt\|\bfq(t)-\tilde{\bfq}(t)\|_\infty=O_p(\sqrt{\log(p)/n})$
and $\supt|\pi(t)-\hat{\pi}(t)| = O_p(\sqrt{\log(n)/n})$.
Notice that $\Iiv{4}$, $\Iiv{6}$ and $n^{-1}\sum_{k=1}^n \inttao  \hat{\pi}(t)^{-1} I(X_k \ge u)d\Mc_k(t)$ in $\Iiv{5}$ are all $\cFt$ martingales.
We may apply Lemmas \ref{lemma:supM} and \ref{lemma:opM} to obtain
$\Iiv{4} = O_p(\sqrt{\log(n)\log{p}/n})=o_p(1)$, $\Iiv{5} = O_p(\sqrt{\log{p}/n})=o_p(1)$
and $\Iiv{6} = O_p(\log{p}/\sqrt{n})=o_p(1)$.

By Lemma \ref{lemma:bTh}, we can bound the $l_1$ norm of $\cTh$ by
$$
\|\cTh\|_1 = \sum_{i=1}^p \sum_{j=1}^p |c_i| |\Theta_{ij}|
\le \sum_{i=1}^p  |c_i| \Kgr/\rhosigI = \Kgr/\rhosigI.
$$
Finally, we write $\cTh \dm(\betao)$ as i.i.d. sum
\begin{align*}
&n^{-1/2} \sumin \cTh \left[\inttao \left\{\bZ_i(t)-\bmu(t)\right\}\tilde{\omega}_i(t)d\MI_i(t)
- \inttao  \frac{\bfq(t)
}{\pi(t)} I(X_i \ge u)d\Mc_i(t)\right] + o_p(1) \\
\triangleq &   n^{-1/2} \sumin \cTh \{\bfeta_i -  \bfpsi_i\} + o_p(1).
\end{align*}
We have $\E\{\cTh\bfeta_i\} = 0 $ because of its martingale structure.
We show $\E\{\cTh\bfpsi_i\} = 0$ again by introducing its martingale proxy
\begin{align*}
\E\{\cTh\bfpsi_i\} = & \E \left[\inttao \cTh \left\{\bZ_i(t)-\bmu(t)\right\}I(C_i \ge t)d\MI_i(t)\right] \\
 & +
\E \left[\inttao \cTh \left\{\bZ_i(t)-\bmu(t)\right\}\E\{\tilde{\omega}_i(t)-I(C_i \ge t)| T_i,\bZ_i(\cdot)\}d\MI_i(t)\right].
\end{align*}
The first term above is zero because of the martingale structure.
The second term is zero because the IPW weights satisfy $\E\{\tilde{\omega}_i(t)-I(C_i \ge t)| T_i,\bZ_i(\cdot)\} = 0$.
Each $\cTh \{\bfpsi_i -  \bfeta_i\}$ is mean zero and bounded by
$\Kgr/\rhosigI K(1+\LlamI t^*)+\Kgr/\rhosigI K(1+\LlamI t^*)(1+\LcI t^*)2e^{\Kb}/\rstar$ with probability equaling one.
The variance $\cTh \Scov \bTh \bfc$ has a bounded and non-degenerating limit $\nu^2$.
Hence, $\{ \cTh(\bfpsi_i -  \bfeta_i): i =1, \dots, n\}$ satisfies the Lindeberg condition.
%{\color{red} You need to show that 3rd moments are bounded asymptotically (i.e. convergent to zero tails) in order for the lindeberg's condition to hold ... I am not sure that the above suffices for that} \hj{I was looking at one version in the van der Vaart textbook. I also checked the wikipedia version. Both do not involve the third moment requirement. But you are right--- my earlier statement was imprecise. I actually mean each term is bounded by that constant almost surely. Then, any moment must be finite. }

By Lindeberg-Feller CLT,
$$
\sqrt{n} \frac{\bfc^\top \bTh \dot{\bfm}(\bbeta^o)}{\sqrt{\bfc^\top \bTh \Scov \bTh \bfc}} = \frac{\cTh\sumin \{\bfeta_i -  \bfpsi_i\}}{\sqrt{n\cTh \Scov \bTh \bfc}} + o_p(1) \stackrel{d}{\to} N(0,1).
$$
We conclude the proof of the Lemma.
 \end{proof}

 \begin{proof}[Proof of Lemma \ref{lemma:score cov}]
We define
$$
\teta_i = \inttao \{\bZ_i(u)-\bmu(u)\}\tilde{\omega}_i(u)d\MIt_i(u), $$
with
$$
\MIt_i(t) = \Nobs_i(t) - n^{-1}\sumjn \intt \frac{Y_i(u)\RR{i}{u}{o}}{\btSo(u,\betao)}d\Nobs_j(u).
$$
Under \ref{Assum:betaI} and \ref{Assum:design},
the total variation of $\MIt_i(t)$ is at most $1+2e^{2\Kb}/\rstar$ with probability tending to one by Lemma \ref{lemma:YtauI}.
The difference between $\teta_i$ and $\heta_i$ is
\begin{align*}
\heta_i - \teta_i = & n^{-1}\sumjn \inttao \{\bZ_i(u)-\bZbar(u,\bini)\}\omega_i(u)Y_i(u)
\left\{\frac{\RR{i}{u}{o}}{\btSo(u,\betao)}-\frac{\RRh{i}{u}}{\bSo(u,\bini)}\right\}d\Nobs_j(u) \\
& + \inttao  \{\bmu(u)\tilde{\omega}_i(u)-\bZbar(u,\bini)\omega_i(u)\}d\MIt_i(u).
\end{align*}
By Lemmas \ref{lemma:KM}, \ref{lemma:opbmu} and \ref{lemma:opbeta},
$\supi\|\heta_i - \teta_i\|_\infty = O_p\left(\|\bini-\betao\|_1 + \sqrt{\log(p)/n}\right)$.

Then, we study
$$
\bfeta_i-\teta_i = n^{-1}\sumjn \inttao \{\bZ_i(u)-\bmu(u)\}\tilde{\omega}_i(u)I(C_j \ge u)d\MI_j(u).
$$
We have the bound $\|\bZ_i(u)-\bmu(u)\|_\infty \le K$
from Lemma \ref{lemma:KZbar}.
$\tilde{\omega}_i(u)$ is not $\ccFt$ measurable, but we can define a new filtration
$\ccFit = \sigma\{X_i, \delta_i, \epsilon_i, \bZ_i(\cdot), I(C_j \ge u), \NI_j(u), \bZ_j(\cdot):
u\le t, j\neq i\}$ for each $i$, such that
$$
n^{-1}\sum_{j\neq i} \inttao \{\bZ_i(u)-\bmu(u)\}\tilde{\omega}_i(u)I(C_j \ge u)d\MI_j(u)
= \bfeta_i-\teta_i + O_p(1/n)
$$
is a $\ccFit$ martingale.
Hence, we can apply Lemma \ref{lemma:supM} to get
$$
\Pr\left(\|\bfeta_i-\teta_i\|_\infty \ge K(1+e^{\Kb}\LlamI t^*)\sqrt{4\log(2np/\varepsilon)/n}+K(1+2 e^{\Kb}\LlamI t^*)/n\right) \le \varepsilon/n.
$$
Taking union bound, we get $\|\bfeta_i-\teta_i\|_\infty = O_p(\sqrt{\log(p)/n})$.
Hence, $\supi\|\heta_i - \bfeta_i\|_\infty = O_p\left(\|\bini-\betao\|_1 + \sqrt{\log(p)/n}\right)$.

Recall that $\hbfq(t)$ and $\bfq(t)$ also take a similar form.
We can likewise define
$$
\tbfq(t)=n^{-1}\sumin I(t>X_i) \int_t^{t^*} \{\bZ_i(u)-\bmu(u)\}\tilde{\omega}_i(u) d\MIt_i(u)
$$
and
$$
\tbfq^*(t) = n^{-1}\sumin I(t>X_i) \int_t^{t^*} \{\bZ_i(u)-\bmu(u)\}\tilde{\omega}_i(u) d\MI_i(u).
$$
By Lemmas \ref{lemma:KM}, \ref{lemma:opbmu} and \ref{lemma:opbeta},
we have
$$\supi\supt\|\tbfq(t)-\hbfq(t)\|_\infty = O_p\left(\|\bini-\betao\|_1 + \sqrt{\log(p)/n}\right).$$
By Lemma \ref{lemma:supt}, $\supt\|\tbfq^*(t)- \bfq(t)\| = O_p\left(\sqrt{\log(p)/n}\right)$.
We only need to find the rate for
$$
\tbfq^*(t) - \tbfq(t) = n^{-1} \sumin I(t>X_i) n^{-1}\sumjn \int_t^{t^*} n^{-1} \sumin \{\bZ_i(u)-\bmu(u)\}\tilde{\omega}_i(u) I(C_j \ge u)d\MI_j(u).
$$
We repeat the trick for $\bfeta_i-\teta_i$.
Applying Lemma \ref{lemma:opM} to the $\ccFit$ martingale
$$
\bfM^q_i(t) = n^{-1}\sum_{j\neq i} \intt n^{-1} \sumin \{\bZ_i(u)-\bmu(u)\}\tilde{\omega}_i(u) I(C_j \ge u)d\MI_j(u)
$$
and obtain $\supi\supt\|\bfM^q_i(t)\|_\infty = O_p(\sqrt{\log(p)/n})$.
Hence,
$$
\supt\|\tbfq^*(t) - \tbfq(t)\|_\infty  \le 2 \supi\supt\|\bfM^q_i(t)\|_\infty + O_p(1/n) = O_p(\sqrt{\log(p)/n}).
$$
Putting the rates together, we have $\supt\|\hbfq(t) - \bfq(t)\|_\infty = O_p\left(\|\bini-\betao\|_1 + \sqrt{\log(p)/n}\right)$.

We can directly obtain $\supt|\hpi(t)-\pi(t)| = O_p\left(\sqrt{\log(n)/n}\right)$ from Lemma \ref{lemma:supt}.
Define
$$
\tpsi_i = \inttao \frac{\bfq(t)}{\pi(t)} d \Mch_i(t)
$$
The total variation of $\Mch_i(t)$ is at most $1+2e^{\Kb}/\rstar$ with probability tending to one by Lemma \ref{lemma:YtauI}.
Using the results so far,
we have
$$\supi\|\hpsi_i - \tpsi_i\|_\infty = O_p\left(\|\bini-\betao\|_1 + \sqrt{\log(p)/n}\right).$$
The remainder
$$
\bfpsi_i-\tpsi_i = n^{-1} \sumjn \inttao \frac{\bfq(t)}{\pi(t)} I(C_i \ge t)I(X_j \ge t)d\Mc_j(t)
$$
is a $\cFt$ martingale.
We can put the $n$ martingales in $\R^p$ into a $\R^{np}$ vector and apply Lemma \ref{lemma:supM},
$$
\supi\|\bfpsi_i-\tpsi_i\|_\infty = O_p\left(\sqrt{\log(np)/n}\right)=O_p\left(\sqrt{\log(p)/n}\right).
$$
Therefore, we get $\supi\|\bfpsi_i-\hpsi_i\|_\infty = O_p\left(\|\bini-\betao\|_1 + \sqrt{\log(p)/n}\right)$.

Finally, we decompose
\begin{align*}
\|\hScov - \Scov\|_{\max} \le &
n^{-1}\sumin \|\heta_i+\hpsi_i\|_\infty \|\heta_i+\hpsi_i - \bfeta_i-\bfpsi_i\|_\infty
\\
&+n^{-1}\sumin \|\heta_i+\hpsi_i - \bfeta_i-\bfpsi_i\|_\infty\|\bfeta_i+\bfpsi_i\|_\infty\\
& + \left\|n^{-1}\sumin (\bfeta_i+\bfpsi_i)(\bfeta_i+\bfpsi_i)^\top - \Scov\right\|_{\max}.
\end{align*}
We have shown that $\supi\|\heta_i+\hpsi_i - \bfeta_i-\bfpsi_i\|_\infty = o_p(1)$.
Moreover, $\supi\|\heta_i+\hpsi_i\|_\infty$ is $O_p(1)$ by Lemmas \ref{lemma:KZbar} and \ref{lemma:YtauI}. In addition, we observe  that
$n^{-1}\sumin (\bfeta_i+\bfpsi_i)(\bfeta_i+\bfpsi_i)^\top$ is an average of i.i.d. terms whose expectation is defined as $\Scov$.
By Lemmas \ref{lemma:KZbar} and \ref{lemma:YtauI}, we have the uniform maximal bound
$$
\supi\|(\bfeta_i+\bfpsi_i)(\bfeta_i+\bfpsi_i)^\top\|_{\max} = \supi \|(\bfeta_i+\bfpsi_i)\|_\infty^2
$$
is also $O_p(1)$.
We finish the proof by applying Lemma \ref{lemma:Hoeffding}  to  the last term in the decomposition above, $\left\|n^{-1}\sumin (\bfeta_i+\bfpsi_i)(\bfeta_i+\bfpsi_i)^\top - \Scov\right\|_{\max}$.

\end{proof}

\subsection{Proofs of Auxiliary Lemmas}\label{section:aux proof}

\begin{proof}[Proof of Lemma \ref{lemma:supijt}]
\begin{enumerate}[label = (\roman*)]
\item
Without loss of generality, let $t_{11}$ be the first jump time of $N_1(t)$.
By the i.i.d. assumption, $t_{11}$ is independent of all $\bfS_i(t)$ with $i\ge 2$.
Thus, the sequence
$$
\bfL_l = n^{-1}\sum_{i=2}^l  \left\{ \bfS_i(t_{11})-\bfs(t_{11})\right\}
$$
is a martingale with respect to filtration $\left\{\sigma\big(\bfS_i(t), i\le l\big), l = 2, \dots, n\right\}$.
The increment is bounded as
$$
 n^{-1}\left\{ \bfS_i(t_{11})-\bfs(t_{11})\right\}
 = n^{-1} {\E}_{\bfS_j}\left\{ \bfS_i(t_{11})-\bfS_j(t_{11})\right\}
 \le n^{-1} K_S.
$$
Applying Lemma \ref{lemma:Azuma}  to $\bfL_n$, we get
$
\Pr\left(\left\|\bfL_n\right\|_{\max} > K_Sx\right) < 2q e^{-nx^2/2}
$.
Since the dropped first term is also bounded by $K_S/n$, we get
$$
\Pr\left(\left\|\bar{\bfS}(t_{11})-\bfs(t_{11})\right\|_{\max} > K_Sx+K_S/n\right) < 2q e^{-nx^2/2}.
$$
We use simple union bound to extend the result to all $t_{ij}$'s whose number is at most $n\KNi$.

\item Define a deterministic set
$\Tcal_n = \{kt^*/n: k=1,\dots, n\} \cup \Tz$.
  By the union bound of Hoeffding's inequality \cite{Hoeffding63} ,
  we have
  $$
  \Pr\left( \sup_{t\in \Tcal_n} \left\|\bar{\bfS}(t)-\bfs(t)\right\|_{\max}> K_S x\right) < 2(n+|\Tz|)qe^{-nx^2/2}.
  $$  Combining the result from Lemma \ref{lemma:supij},
  we obtain
  $$
  \left\|\bar{\bfS}(t) - \bfs(t)\right\|_{\max} = O_p(\sqrt{\log(npq)/n})
  $$
  over a grid containing $\Tcal_n$ and jumps of $N_i(t)$.
  We only need to show that the variation of $\bar{\bfS}(t) - \bfs(t)$ is sufficiently small inside each bin created by the grid.

Let $t'$ and $t^{\prime\prime}$ be consecutive elements by order in $\Tcal_n$.
By our construction, there is no jump of any of the counting processes $N_i(t)$ in the interval  $(t',t^{\prime\prime})$.
Otherwise, the jump time is another element in $\Tcal_n$ between $t'$ and $t^{\prime\prime}$ so that $t'$ and $t^{\prime\prime}$ are not consecutive elements by order.
Under the assumption of the lemma, elements of all
$\bfS_i(t)$'s are $\LS-$Lipschitz in $(t',t^{\prime\prime})$.
Moreover, $|t^{\prime\prime}-t'|\le t^*/n$ because of the deterministic $\{kt^*/n: k=1,\dots, n\} \subset \Tcal_n$.
  Along with the c\`{a}gl\`{a}d property, we obtain a bound of variation of $\bar{\bfS}(t)$ in $(t',t^{\prime\prime})$
  $$
  \sup_{t\in(t',t^{\prime\prime})} \|\bar{\bfS}(t) - \bar{\bfS}(t^{\prime\prime})\|_{\max}
  \le \supi \sup_{t\in(t',t^{\prime\prime})}  \|\bfS_i(t) -\bfS_i(t^{\prime\prime})\|_{\max}
  \le \LS |t^{\prime\prime}-t'| \le \LS t^*/n.
  $$
  For any $t \in (t',t^{\prime\prime})$, we bound the variation of $\bfs(t)$ by
  $$
  \|\bfs(t)-\bfs(t^{\prime\prime})\|_{\max} \le \int_t^{t^{\prime\prime}} \E \|\bfd_s(u)\|_{\max}du
  +  \int_t^{t^{\prime\prime}} \E\{ \|\bfJ_s(u)\|_{\max}\lamN_i(u)\} du
  \le (\LS+K_S \LN) t^*/n.
  $$
  For arbitrary $t \in [0,t^*]$, we find the the corresponding bin $(t',t^{\prime\prime}]$ contains $t$.
  Putting the results together, we have
  \begin{align*}
  \|\bar{\bfS}(t) - \bfs(t)\|_{\max} &\le \|\bar{\bfS}(t) -\bar{\bfS}(t^{\prime\prime})\|_{\max} +
  \| \bfs(t) - \bfs(t^{\prime\prime})\|_{\max} +\|\bar{\bfS}(t^{\prime\prime}) - \bfs(t^{\prime\prime})\|_{\max}
  \\
  &\le O_p(\sqrt{\log(npq)/n}) + O(1/n).
 \end{align*}
\end{enumerate}
\end{proof}

\begin{proof}[Proof of Lemma \ref{lemma:supopM}]
\begin{enumerate}[label = (\roman*)]
\item
  The summands in $\MH(t)$ are the integrals of $\mathcal{F}_{t-}$-measurable processes over $\Ft$-adapted martingales,
  so $\MH(t)$ is a $\Ft$-adapted martingale \cite[p.165]{KalbfleischPrentice02}.

  Suppose $\{T_i: i= 1, \dots, n\}$ are the jump times of $\{N_i(t)\}$.
  We artificially set $T_i = t^*$ if $N_i(t)$ has no jump in $[0,t^*]$.
  Define $0 \le R_1 \le \dots \le R_{2n}$ be the order statistics of
  $\{T_i: i= 1, \dots, n\} \cup \{kt^*/n: k = 1, \dots, n\}$.
  Hence, $\{R_k: k = 1, \dots, 2n\}$ is a set of ordered $\Ft$ stopping times.
  Applying optional stopping theorem,
  we get a discrete time martingale $\MH(R_k)$
  adapted to $\mathcal{F}_{R_k}$.

  The increment of $\MH(R_k)$ comes from either the counting part or the compensator part,
  which we can bound separately.
  By our construction of $R_k$'s,
  each left-open right-closed bin $(R_k, R_{k+1}]$ satisfies two conditions.
  There is at most one jump from $\sumin N_i(t)$ in the bin at $R_{k+1}$.
  The length of the bin is at most $t^*/n$.
  The increment of the martingale $\MH(t)$ over $(R_k, R_{k+1}]$ is decomposed into two coordinate-wise integrals, a jump minus a compensator,
  $$
  \MH(t) = n^{-1}\sumin \int_{R_k}^{R_{k+1}}  \bfH_i(u) dN_i(u) - n^{-1}\sumin \int_{R_k}^{R_{k+1}} \bfH_i(u) h_i(u) du.
  $$
  With the assumed a.s. upper bound for $\supt\|\bfH_i(t)\|_{\max} \le \KH$,
  we have almost surely the jump of $\MH(t)$ in the bin be bounded by
  $$
  \left\|n^{-1}\sumin \int_{R_k}^{R_{k+1}}  \bfH_i(u) dN_i(u)\right\|_{\max} \le \KH/n.
  $$
  Additionally with the assumed upper bound for $\supi\supt h_i(t) \le \KhA$,
  we have the compensator of $\MH(t)$ increases over the bin by at most
  $$
  \left\| \int_{R_k}^{R_{k+1}} n^{-1}\sumin \bfH_i(u) h_i(u) du \right\|_{\max} \le \KH \KhA (R_{k+1}-R_k) \le \KH \KhA t^*/n.
  $$
  We obtain a uniform concentration inequality for $\MH(R_k)$ by Lemma \ref{lemma:Azuma}
  $$
  \Pr\left(\sup_{k=1,\dots,2n} \|\MH(R_k)\|_{\max} \ge \KH(1+\KhA t^*)x\right)
  \le 2q e^{-nx^2/4}.
  $$
  Remark that the uniform version of Lemma \ref{lemma:Azuma} is the application of Doob's maximal inequality \cite[Theorem 5.4.2, page 213]{Durrett13}.
  For $t \in (R_k, R_{k+1})$, we use the bounded increment derived above
  $$
  \|\MH(t)-\MH(R_k)\|_{\max} \le \left\| \int_{R_k+}^{R_{k+1}+} n^{-1}\sumin \bfH_i(u) h_i(u) du \right\|_{\max} \le \KH \KhA t^*/n.
  $$

  \item Under the additional assumption $\supi\supt\|\bfH_i(t)\|_{\max} = O_p(a_n)$,
we can find $K_{\Phi,\varepsilon}$ for every $\varepsilon>0$ such that
$$
\Pr\left(\supi\supt\|\bfH_i(t)\|_{\max} \le K_{\Phi,\varepsilon}a_n \right)
\ge 1 - \varepsilon/2
$$
for any $n$.
We apply Lemma \ref{lemma:supM} to obtain that event
\begin{align*}
& \left\{  \supt \|\MH(t)\|_{\max} \le K_{\Phi,\varepsilon}a_n\{(1+\KhA t^*)\sqrt{2\log(4q)/n} + \KhA t^*/n\}, \right.
\\
& \qquad \qquad  \left. \supi\supt\|\bfH_i(t)\|_{\max} \le K_{\Phi,\varepsilon}a_n \right\}
\end{align*}
occurs with probability no less than $1-\varepsilon$.
  \end{enumerate}
\end{proof}

\begin{proof}[Proof of Lemma \ref{lemma:KZbar}]
  Notice all $a_i(t)$'s are nonnegative.
  Hence, $\sumin |a_i(t)| = \sumin a_i(t)$.
  We apply H\"{o}lder's inequality for each coordinate
 $$
  \left|\left\{\frac{\sumin a_i(t)\bZ_i(t)^{\otimes l}}{\sumin a_i(t)}\right\}_j\right|
 =
  \left| \sumin \frac{a_i(t)}{\sumin a_i(t)}\left\{\bZ_i(t)^{\otimes l}\right\}_j\right|
   \le \frac{\sumin |a_i(t)|}{\left|\sumin a_i(t)\right|} \supi \left|\left\{\bZ_i(t)^{\otimes l}\right\}_j\right|.
  $$
  Hence, the maximal norm of $\sumin a_i(t)\bZ_i(t)^{\otimes l}$ is bounded by $(\KZ/2)^l$ under \eqref{aseq:Zij}.
  Similar result can be achieved with the sum replaced by the expectation.

  To apply the result above to $\bS{l}(t,\bbeta)/\bSo(t,\bbeta)$, $\btS{l}(t,\bbeta)/\btSo(t,\bbeta)$ and
$\bs{l}(t,\bbeta)/\bso(t,\bbeta)$,
we set $a_i(t)$ as $\omega_i(t)Y_i(t)\RR{i}{t}{}$ and $I(C_i \ge t)Y_i(t)\RR{i}{t}{}$.
\end{proof}

\begin{proof}[Proof of Lemma \ref{lemma:Ytau}]
Since $\{I(X_i \ge t^*), i =1,\dots, n\}$ are i.i.d. Bernoulli random variable,
we may apply Lemma \ref{lemma:Hoeffding}  for lower tail,
$$
\Pr\left(n^{-1}\sumin I(X_i \ge t^*) < \Pr(X_i \ge t^*) - x\right)
\le \exp(-2nx^2).
$$
By \eqref{aseq:denom}, we can find lower bounds for the probability
$$
\Pr(X_i \ge t^*) \ge \Pr(C_i \ge t^*, \infty>T^1_i \ge t^*)
 = G(t^*) \E\{F_1(\infty;\bZ_i)-F_1(t^*;\bZ_i)\} \ge \rstar/\M.
$$
We may relax the inequality at $x = \rstar/(2\M)$ to
$$
\Pr\left(n^{-1}\sumin I(X_i \ge t^*) < \rstar/(2\M)\right)
\le e^{-n\rstar^2/(2\M^2)}.
$$

Because $I(C_i \ge t) \ge I(C_i \ge t^*)$ and $Y_i(t) \ge Y_i(t^*)$,
$\btSo(t;\M)$
is a lower bound for $\btSo(t)$.
The summands in $\btSo(t;\M)$ are i.i.d. uniformly bounded by $\M$.
Thus, we may apply Lemma \ref{lemma:supij} with one-sided version,
$$
\Pr\left(\supk \btSo(t;\M) < \E\{\btSo(t;\M)\} - \M x - \M /n \right) < n e^{-nx^2/2}.
$$
By \ref{Assum:CR}, the expectation has a lower bound
$$
\E\{\btSo(t;\M )\} = G(t^*)\E\left[\{1-F_1(t;\bZ_i)\}\min\{\M ,\RR{i}{t}{o}\}\right]
> \rstar.
$$
We relax the inequality at $x = (\rstar/2 - 1/n)/\M $,
$$
\Pr\left(\supk \btSo(t;\M ) < \rstar \right) < n e^{-n(\rstar-2/n)^2/(8\M ^2)}.
$$
\end{proof}

\begin{proof}[Proof of Lemma \ref{lemma:RR}]
Since $\epsilon_i>1$ implies $T^1_i = \infty$,
the probability of observing a type-2 event conditioning on $\bZ_i(\cdot)$ has an upper bound
\begin{align*}
\Pr\big(\epsilon_i > 1| \bZ_i(\cdot)\big) = &
\exp\left\{-\int_0^\infty \RR{i}{u}{o} \lamT(u) du\right\} \\
\le  & \exp\left\{-\Ketmp x\int_0^\infty I\left(\RR{i}{u}{o} \ge \Ketmp x \right) \lamT(u) du\right\}.
\end{align*}
Hence, we may derive a bound for
$$
\Pr\left(\delta_i\epsilon_i > 1, \supt \RR{i}{t}{o} > \Ketmp \right)
\le \Pr\left(\epsilon_i > 1\Big| \supt \RR{i}{t}{o} > \Ketmp\right)
$$
if we can bound $\int_0^\infty I\left(\RR{i}{u}{o} \ge \Ketmp x \right) \lamT(u) du$ away from zero with a certain $x$ whenever $\RR{i}{t'}{o} > \Ketmp$ for some $t'\in [0,t^*]$.

Under \ref{Assum:CR}, there is an interval $I'$ containing $t'$ of length $\dt$
in which $\bZ_i(\cdot)$ has no jumps.
The variation of linear predictor is bounded
$$
\sup_{t\in I'}\left|\betaot\bZ_i(t) - \betao\bZ_i(t')\right|
\le \rtz \Lz \|\betao\|_\infty \dt.
$$
So, the relative risk $\RR{i}{t}{o}$ is greater than $\Ketmp \exp\{-\rtz \Lz \|\betao\|_\infty \dt\}$ over $I'$.
Hence, we get a lower bound for
$$
\int_0^\infty I\left(\RR{i}{u}{o} \ge \Ketmp\exp\{-\rtz \Lz \|\betao\|_\infty \dt\} \right) \lamT(u) du
\ge \dt \lamlow.
$$
We finish the proof by taking a union bound over $i = 1, \dots, n$.
\end{proof}

\begin{proof}[Proof of Lemma \ref{lemma:KM}]
  Recall that $\Mc_i(t) = I(C_i \le t) - \intt I(C_i \ge u)\lamC(u)du$ is
a counting process martingale adapted to complete data filtration $\cFt$.
The Kaplan-Meier estimator $\hG(t)$ has the martingale representation \cite[p.170 (5.45)]{KalbfleischPrentice02},
$$
M^G(t) = \frac{\hG(t)}{G(t)}-1=n^{-1}\sumin \intt\frac{\hG(u-)I(X_i \ge u)}{G(u)n^{-1}\sumjn I(X_j \ge u)}d\Mc_i(u).
$$
For $\delta_i\epsilon_i>1$ and $t>X_i$,
$$
\omega_i(t) - \tilde{\omega}_i(t) = -\frac{\hG(t)}{\hG(X_i)} M^G(X_i)
+ \frac{G(t)}{G(X_i)} M^G(t),
$$
so we will be able to establish a concentration result for the error from Kaplan-Meier
$$
\left\|n^{-1} \sumin \{\omega_i(t) - \tilde{\omega}_i(t)\}Y_i(t)\RR{i}{t}{o}\bZ_i(t)^{\otimes l}\right\|_{\max}
\le 2\Ke (\KZ/2)^l \supt |M^G(t)|
$$
if we first obtain a concentration result for $\supt |M^G(t)|$.
On event $n^{-1}\sumjn I(X_j \ge u) \ge \rstar/(2\M )$, the integrated functions are $\cFtH$-adapted with uniform bound $2(\M ^/\rstar)^2$.
The hazard $\lamC(t) \le \Lc$ by \ref{Assum:design}.
Hence, we may apply Lemma \ref{lemma:supM} with $x=\sqrt{4\log(2/\varepsilon)/n}$ to obtain
\begin{align}
&\Pr\left(\supt |M^G(t)| < 2(\M /\rstar)^2\left\{(1+\Lc t^*)\sqrt{4\log(2/\varepsilon)/n}+ \Lc t^*/n\right\} \right)
\\
&\qquad \qquad \le \Pr(\eventrM \cap \eventKe) - \varepsilon.
\end{align}
\end{proof}

\begin{proof}[Proof of Lemma \ref{lemma:Sk}]
A sharper inequality is available if $\bZ_i$'s are not time-dependent.
We may exploit the martingale structure of $\bDelta{l}(t)/G(t)$.
With general time-dependent covariates,
we would decompose the approximation error $\bDelta{l}(t)$ into two parts,
the error from Kaplan-Meier estimate $\hG(t)$ and
the error from missingness in $C_i$'s among the type-2 events.

Define the indicator $\ipw_i(t) = I(t > X_i)I(\delta_i\epsilon_i > 1)$.
Since $\{\omega_i(t) - I(C_i \ge t)\}Y_i(t)$ is non-zero only when $\ipw(t) = 1$,
we may alternatively write
$$
\bDelta{l}(t) = n^{-1}\sumin \{\omega_i(t) - I(C_i \ge t)\}\ipw_i(t) \RR{i}{t}{o} \bZ_i(t)^{\otimes l}.
$$
We may use the upper bound $\supi\supt\left|\ipw_i(t) \RR{i}{t}{o}\right| \le \Ke$ on $\eventKe$.
By Lemma \ref{lemma:KM},
$$
\left\|n^{-1} \sumin \{\omega_i(t) - \tilde{\omega}_i(t)\}\ipw_i(t)\RR{i}{t}{o}\bZ_i(t)^{\otimes l}\right\|_{\max}
\le \Ke (\KZ/2)^l \CKM
$$
on $\eventKe \cap \eventKM$.

Define the error from missingness in $C_i$'s among the type-2 events as
$$
\btDelta{l}(t) = n^{-1} \sumin \{\tilde{\omega}_i(t)- I(C_i \ge t)\}\ipw_i(t)\RR{i}{t}{o}\bZ_i(t)^{\otimes l}.
$$
Since $\E\{r_i(t)| T_i\} = G(t\wedge T_i)$, \cite{FineGray99} has shown that
$$
\E\{\tilde{\omega}_i(t)| T_i\} = \E\{I(C_i \ge t)| T_i\} = G(t).
$$
Applying tower property, we have $\E\left\{\btDelta{l}(t)\right\} = \bzero$.
Hence, we can apply Lemma \ref{lemma:supij} with $x=\sqrt{2\log(2np^l/\varepsilon)/n}$
$$
\Pr\left(\supk\left\|\btDelta{l}\left(\T{k}\right)\right\|_{\max} \le \Ke (\KZ/2)^l \left\{\sqrt{2\log(2np^l/\varepsilon)/n}+ 1/n\right\} \right)
\ge \Pr(\eventrM \cap \eventKe) - \varepsilon.
$$
This finishes the proof of the first result.

We prove the other result by decomposing the differences into terms with $\bDelta{l}(t)$,
$$
 \frac{\bS{l}(t,\betao)}{\bSo(t,\betao)}
            -\frac{\btS{l}(t,\betao)}{\btSo(t,\betao)} = \frac{1}{\btSo(t,\betao)}\bDelta{l}(t)-\frac{\bS{l}(t,\betao)}{\bSo(t,\betao)\btSo(t,\betao)}
\Delta^{(0)}(t).
$$
$\bS{l}(t,\betao)/\bSo(t,\betao)$ is the weighted average of $\bZ_i(t)^{\otimes l}$, so its maximal norm is bounded by $(\KZ/2)^l$.
    On the event $\eventrM$,
    $$
    \left\|\frac{\bS{l}(t,\betao)}{\bSo(t,\betao)}
            -\frac{\btS{l}(t,\betao)}{\btSo(t,\betao)} \right\|_\infty \le
            \frac{2}{\rstar }\|\bDelta{l}(t)\|_\infty
            + \frac{\KZ^l}{2^{l-1}\rstar }|\Delta^{(0)}(t)|.
    $$
    We can simply plug in the bounds and tail probabilities for $\Delta^{(0)}\left(\T{k}\right)$ and $\bDelta{1}\left(\T{k}\right)$ in \eqref{def:eventx}.
\end{proof}

\begin{proof}[Proof of Lemma \ref{lemma:opNorm}]
\begin{enumerate}[label = (\roman*)]
\item
By \ref{Assum:design} and \ref{Assum:betaI},
we have $\left\|\RR{i}{t}{o}\bZ_i(t)^{\otimes l}\right\|_{\max} \le (\KZ/2)^l e^{\Kb} \asymp 1$.
Thus, all terms involved are bounded.
Moreover, $\RR{i}{t}{o}\bZ_i(t)^{\otimes l}$ jumps only at the jumps of $\Nz_i(t)$ by \ref{Assum:CRI}.
Define the outer product of arrays $\bfu \in \R^{p_1\times\dots \times p_d}$ and
$\bfv \in \R^{q_1\times \dots \times q_{d'}}$ as
$$
\bfu \otimes \bfv \in \R^{p_1\times\dots \times p_d\times q_1 \times\dots \times q_{d'}},
\quad
(\bfu \otimes \bfv)_{i_1,\dots,i_{d+d'}} = \bfu_{i_1,\dots, i_d}\times \bfv_{i_{d+1},\dots,i_{d+d'}}.
$$
Between two consecutive jumps of $\Nz_i(t)$,
\begin{align*}
 & \left\|\frac{d}{dt} \RR{i}{t}{o}\bZ_i(t)^{\otimes l}\right\|_{\max}
\\
= &
\left\|\RR{i}{t}{o}\bZ_i(t)^{\otimes l} \betaot \dZ_i(t) + I(l > 0)\RR{i}{t}{o} l \bZ_i(t)^{\otimes l-1}\otimes \dZ_i(t) \right\|_{\max} \\
\le & e^{\Kb} \{ (\KZ/2)^l\LzI+I(l>1)(\KZ/2)^{l-1}\LzI \}
\asymp 1.
\end{align*}
Hence, $\RR{i}{t}{o}\bZ_i(t)^{\otimes l}$ satisfies the continuity condition for Lemma \ref{lemma:supt}.

Like in Lemma \ref{lemma:Sk},
we first replace $\omega_i(t)$ by $\tilde{\omega}_i(t) = r_i(t)G(t)/G(X_i\wedge t)$.
Denote $\btDelta{l}(t) = n^{-1}\sumin \{\tilde{\omega}_i(t)-I(C_i \ge t)\}Y_i(t) \RR{i}{t}{o}\bZ_i(t)^{\otimes l}$.
By Lemma \ref{lemma:KM},
$\supt\|\bDelta{l}(t) - \btDelta{l}(t)\|_{\max} = O_p\left(n^{-1/2}\right)$.
Then, we apply Lemma \ref{lemma:supt} to the i.i.d. mean zero process $\btDelta{l}(t)$,
$$
\supt\|\btDelta{l}(t)\|_{\max} = O_p\left(\sqrt{\log(np^l\Kz)/n}\right).
$$
Similarly,
$$
\supt\|\btS{l}(t,\betao) - \bs{l}(t,\betao)\|_{\max} = O_p\left(\sqrt{\log(np^l\Kz)/n}\right).
$$

Finally, we extend to results to the quotients by decomposition
$$
\frac{\bS{1}(t,\betao)}{\bSo(t,\betao)} - \frac{\btS{1}(t,\betao)}{\btSo(t,\betao)}
= \frac{1}{\btSo(t,\betao)}\bDelta{l}(t)-\frac{\bS{l}(t,\betao)}{\bSo(t,\betao)\btSo(t,\betao)}
\Delta^{(0)}(t).
$$
The denominators are bounded away from zero by Lemma \ref{lemma:YtauI} by choosing $\M  = e^{\Kb}$.

\item
  First, we show that $\dS_i(t)$ is related to the martingales $\Mc_i(t)=\Nc_i(t) - \intt \{1-\Nc_i(u-)\}\lamC(u) du$.
  $\dS_i(t)$ is non-zero only after an observed type-2 event.
  To simplify notation, we define the indicator for non-zero $\dS_i(t)$, $\ipw_i(t) = r_i(t)Y_i(t)I(t>X_i) = I(\delta_i\epsilon_i>1)I(t>X_i)$.

  Denote the Nelson-Aalen type estimator for censoring cumulative hazard as
  $$
  \NAc(t) = \sumin \intt \Big\{\sumjn I(X_j \ge u)\Big\}^{-1} I(X_i \ge u)d \Nc_i(u).
  $$
  Define $
R_i(t) = \hG(t)/\hG(X_i) - 1 + \int_{X_i}^t\hG(u-)d\NAc(u)/\hG(X_i)
$.
Let $c_k$ and $c_{k+1}$ be two consecutive observed censoring times greater than $X_i$.
The increment $R_i(c_{k+1}) - R_i(c_k)$ is in fact
$$
 \frac{\hG(c_k)}{\hG(X_i)}\left\{\frac{\sumjn I(X_j \ge c_{k+1})-1}{\sumjn I(X_j \ge c_{k+1})} - 1 + \frac{1}{\sumjn I(X_j \ge c_{k+1})}\right\}
 %\\
 = 0.
$$
  For $t > X_i$, we have $R_i(t) = 0 $. Thus,
\begin{align}
\dS_i(t) = &  \{\hG(t)/\hG(X_i) - 1 + \Nc_i(t) - \Nc_i(X_i) - R_i(t)\}\ipw_i(t) \notag \\
 = & \int_{X_i}^t \ipw_i(u) d M_i^c(u) - \int_{X_i}^t \omega_i(u-)\ipw_i(u)\frac{\sumjn I(X_j \ge u) dM^c_j(u)}{\sumjn I(X_j \ge u)} + \notag \\
& + \int_{X_i}^t \{I(C_i \ge u) - \omega_i(u-)\}\ipw_i(u) \lamC(u) du.  \label{eq:ode}
\end{align}
Notice $\ipw_i(t)$ does not change beyond $X_i$ if $C_i > X_i$, i.e. an event is observed.
Since $\lamC(u)\le \LcI< \infty$, we may modify the integrand at countable many points without changing the integral
$$
\int_{X_i}^t \{I(C_i \ge u) - \omega_i(u-)\}\ipw_i(u) \lamC(u) du
= - \int_{X_i}^t \dS_i(u) \lamC(u) du.
$$
Hence, \eqref{eq:ode} gives an first order linear integral equation for $\dS_i(u)$.
The general solution to the related homogeneous problem
$$
\dS_i(t) = -  \int_{X_i}^t \dS_i(u)\lamC(u)du, \quad \dS_i(X_i) =0
$$
has only one unique solution $\dS_i(t)=0$.
Thus, we only need to find one specific solution to \eqref{eq:ode}.
Define an integral operator
$
I \circ f = \int_{X_i}^t f(u) \lamC(u)du.
$
Then, the solution to
$
f(t) = g(t) - I \circ f (t)
$
can be written as
$$
f(t) = (1-I+I^2-I^3 +\dots)\circ g (t) \triangleq e^{-I}\circ g (t).
$$
By inductively using integration by parts, we are able to calculate
$$
I^n \circ g (t) = \frac{1}{n!} \sum_{k = 1}^n \binom{n}{k} (-1)^k{\LamC(t)}^{n-k}\int_{X_i}^t{\LamC(u)}^k d g (t).
$$
Hence, the solution can be calculated as the series
\begin{align*}
f(t) = &
  \sum_{n=1}^\infty (-1)^n  I^n \circ g(t)
=  \sum_{n=1}^\infty \sum_{k = 1}^n  \frac{\{-\LamC(t)\}^{n-k}}{(n-k)!}\int_{X_i}^t\frac{{\LamC(u)}^k}{k!} d  g (u) \\
  = & \sum_{k = 1}^\infty \int_{X_i}^t\frac{{\LamC(u)}^k}{k!} d  g (u) \sum_{n=k}^\infty \frac{\{-\LamC(t)\}^{n-k}}{(n-k)!}
  =   G(t) \int_{X_i}^tG(u)^{-1} d  g(u).
\end{align*}
Applying to \eqref{eq:ode}, we get
$$
\dS_i(t) =  G(t) \int_{X_i}^t G(u)^{-1} d\Md_i(u),
$$
with a $\cFt-$ martingale
$$
\Md_i(t) = \intt I(C_i \ge u)\ipw_i(u) d\Mc_i(u) - \intt \omega_i(u-)\ipw_i(u)\frac{\sumjn I(X_j \ge u) dM^c_j(u)}{\sumjn I(X_j \ge u)}.
$$

Now, we use the martingale structure to prove the Lemma.
%We choose a finer filtration
%$\Fct = \sigma\{\Nc_i(u), \bZ_i(\cdot), \NI_i(\cdot): u \le t, i=1,\dots, n\}$
%containing all information about events and covariates throughout $[0,t^*]$.
%By the independence assumption in \ref{Assum:designI},
%$\Mc_i(t)$'s, hence $\Md_i(t)$'s, are all $\Fct$ martingales.
 Denote the $\ccFt$ martingale
  $$
  \Mg(t) =n^{-1}\sumin \intt G(u) \RR{j}{u}{o} \bfg(u) I(C_i \ge u) d\MI_i(u).
  $$
  $\Mg(t)$ satisfies the condition for Lemma \ref{lemma:supM}.
  Hence, we have $\supt \|\Mg(t)\|_{\max} = O_p\left(\sqrt{\log(q')/n}\right)$.
  Also define
  $$\dtS_i(t) = \{\tilde{\omega}_i(t)-I(C_i > t)\}Y_i(t). $$
  By Lemma \ref{lemma:KM}, $\supi\supt|\dS_i(t)-\dtS_i(t)| = O_p\left(n^{-1/2}\right)$.
The total variation of each $\dS_i(t)$ is at most $2$.
Hence, we can apply integration by parts
to \eqref{eq:opSk},
\begin{align*}
& G^{-1}(t^*) \Mg(t^*-)\otimes n^{-1/2}\sumjn \dS_j(t^*-)\bfh(\bZ_j(t^*-))
\\
& -  n^{-1/2}\sumjn \inttao \Mg(t)\otimes \bfh(\bZ_j(t)) d\Md_j(t)\\
& - n^{1/2} \inttao \Mg(t)\otimes G^{-1}(t)n^{-1}\sumjn\dS_j(t) d \bfh(\bZ_j(t)) \\
\triangleq & I_1-I_2-I_3.
\end{align*}

We have shown that $|\Mg(t^*-)| = O_p\left(\sqrt{\log(q')/n}\right) $ and $\sup_{j=1,\dots,n} |\dS_j(t^*-)-\dtS_j(t^*-)|= O_p\left(n^{-1/2}\right)$.
By assumption, $\|\bfh(\bZ_j(t^*-))\|_{\max} \le \Kh \asymp 1$.
As a result, we may replace the $\dS_i(t)$ in $I_1$ by $\dtS_i(t)$ with with an $O_p\left(\sqrt{\log(q')/n}\right)$ error.
Since $\dtS_j(t^*-)\bfh(\bZ_j(t^*-))$'s are i.i.d. mean zero random variables,
$$\|n^{-1}\sumjn \dtS_j(t^*-)\bfh(\bZ_j(t^*-))\|_{\max} = O_p\left(\sqrt{\log(q)/n}\right)$$
 by Lemma \ref{lemma:Hoeffding}.
Multiplying the rates together, we get $\|I_1\|_{\max} = O_p\left(\sqrt{\log(q)\log(q')/n}\right) =o_p(1)$.

$I_2$ can be expanded as
\begin{align*}
&n^{-1/2}\sumjn \inttao G(t)^{-1}\Mg(t) \bigg\{
   I(C_j\ge t) \ipw_j(t)h(\bZ_j(t))
    \biggl.
   \\
&   \biggl.  \qquad \qquad -  \frac{\sumkn \omega_k(t-)\ipw_k(t)h(\bZ_k(t))}{\sumkn I(X_k \ge t)}I(X_j\ge t)\bigg\} d\Mc_j(t)
\end{align*}
By Lemma \ref{lemma:YtauI}, $n\left\{\sumkn I(X_k \ge t)\right\}^{-1} = O_p(1)$.
The integrand in $I_2$ is the product of $\Mg(t)$ and a $O_p(1)$ term.
Hence, we can apply Lemma \ref{lemma:opM} to get
$\|I_2\|_{\max} = O_p\left(\sqrt{\log(q')\log(qq')/n}\right)
 = o_p(1)$.

By \ref{Assum:CRI},
we may further expand $I_3$ into
\begin{align*}
&n^{1/2} \inttao \Mg(t) \otimes G^{-1}(t)n^{-1}\sumjn\dS_j(t) \nabla \bfh(\bZ_j(t))^\top \dZ_j(t)dt \\
& + n^{1/2} \inttao \Mg(t) \otimes G^{-1}(t)n^{-1}\sumjn\dS_j(t) \triangle \bfh(\bZ_j(t))d\Nz_j(t) \\
\triangleq & I_3' + I_3^{\prime\prime},
\end{align*}
where $\triangle \bfh(\bZ_j(t))= \bfh(\bZ_j(t)) - \bfh(\bZ_j(t-))$.
By assumption on $h(\bZ)$ and \ref{Assum:CRI}, $|\nabla \bfh(\bZ_j(t))^\top \dZ_j(t)|$ and $\triangle \bfh(\bZ_j(t))$ are bounded by
$\Lh \LzI$ and $ \Lh K$, respectively.
With $\supt|\Mg(t)| = O_p\left(\sqrt{\log(q')/n}\right)$ and $\Nz_j(t^*)<\Kz=o(\sqrt{n/(\log(p)\log(n))})$,
we may replace the $\dS_j(t)$'s by $\dtS_j(t)$'s with an $o_p(1)$ error.
Each $\dtS_j(t) \nabla \bfh(\bZ_j(t))^\top \dZ_j(t)$ has mean zero and at most $\Kz+1$ jumps,
and it is $(\Lh\LzI + \Kh\LzII)$-Lipschitz between two consecutive jumps under \ref{Assum:CRI} and conditions on $\bfh(\bz)$.
By applying Lemma \ref{lemma:supt}, we get
$$
\supt \left\|n^{-1}\sumjn\dtS_j(t) \nabla \bfh(\bZ_j(t))^\top \dZ_j(t)\right\|_{\max}
= O_p(\sqrt{\log(nq)/n}).
$$
Hence, $\|I_3'\|_{\max} = O_p(\sqrt{\log(q')\log(nq)/n})+o_p(1) = o_p(1)$.
By applying Lemma \ref{lemma:supij} to
$$\{\dtS_j(t)\triangle h(\bZ_j(t)), \Nz_j(t): j= 1,\dots, n\},$$
we get at the jumps of $\Nz_i(t)$'s, at the $t_{ik}$, satisfy
\begin{align*}
&\supi\supk \left|n^{-1}\sumjn\dS_j(t_{ik}) \triangle \bfh(\bZ_j(t_{ik}))\right|
\\
& \qquad \qquad = O_p(\sqrt{\log(n\Kz q)/n}) = O_p(\sqrt{\log(nq)/n}).
\end{align*}
Hence, $\|I_3^{\prime\prime}\|_{\max} = O_p\left(\Kz \sqrt{\log(nq)\log(q')/n} \right) = o_p(1)$.
This completes the proof.

\item
    Define $\bbeta_r = \betao + r \{\tbeta - \betao\}$ and
  $h_j(r; t) =  \bZbar_j(t,\bbeta_r)$.
  The subscript $j$ means the j-th element of corespondent vector.
  By mean-value theorem, we have some $r \in (0,1)$ such that
  \begin{align*}
  h_j(1; t) - h_j(0;t) = & \left( \{\tbeta - \betao\}^\top \frac{\bS{2}(t,\bbeta_r)\bSo(t,\bbeta_r) - \bS{1}(t,\bbeta_r)^{\otimes 2}}{\bSo(t,\bbeta_r)^2} \right)_j \\
  = & \left( \{\tbeta - \betao\}^\top \sumin \frac{\omega_i(t)Y_i(t)e^{\bbeta_r^\top \bZ_i(t)}}{n\bSo(t,\bbeta_r)} \{\bZ_i(t)-\bZbar_j(t,\bbeta_r)\}^{\otimes 2}\right)_j
  \end{align*}
  Since each $\left\|\{\bZ_i(t)-\bZbar_j(t,\bbeta_r)\}^{\otimes 2}\right\|_{\max} \le \KZ^2$ under \ref{Assum:design},
  their weighted average
  $$
  \left\|\sumin \frac{\omega_i(t)Y_i(t)e^{\bbeta_r^\top \bZ_i(t)}}{n\bSo(t,\bbeta_r)} \{\bZ_i(t)-\bZbar_j(t,\bbeta_r)\}^{\otimes 2}\right\|_{\max} \le \KZ^2.
  $$
  Hence, we have shown that
  $$
  \supt \|\bZbar(t,\betao) - \bZbar(t,\tbeta)\|_\infty \le  \|\tbeta - \betao\|_1 \KZ^2 = O_p(\|\tbeta - \betao\|_1).
  $$

  By a similar argument, we can show for some $r \in (0,1)$
  $$
  \frac{\RR{i}{t}{o}}{\bSo(t,\betao)} - \frac{\RRt{i}{t}}{\bSo(t,\tbeta)}
  = \frac{e^{\bbeta_r^\top \bZ_i(t)}(\tbeta - \betao)^\top}{\bSo(t,\bbeta_r)} \sumjn \frac{\omega_j(t)Y_j(t)e^{\bbeta_r^\top \bZ_j(t)}}{n\bSo(t,\bbeta_r)} \{\bZ_i(t)-\bZ_j(t)\}.
  $$
  On event $\{ \|\tbeta - \betao\|_1 \le \Kb, \, n^{-1}\sumin I(X_i \ge t^*)\ge e^{-\Kb}\rstar/2\}$,
  we have
  $$
  \inf_{t\in[0,t^*]}\bSo(t,\tbeta) > r^*/2*e^{2 \Kb},
  \quad \inf_{t\in[0,t^*]}\bSo(t,\betao) > r^*/2*e^{\Kb}.
  $$
  Hence,
  $$
  |\RR{i}{t}{o}/\bSo(t,\betao)-\RRt{i}{t}/\bSo(t,\tbeta)| \le \|\tbeta - \betao\|_1  2\KZ e^{4\Kb }e^{\Kb}/\rstar = O_p\left(\|\tbeta - \betao\|_1\right).
  $$
  The event occurs with probability tending to one because we have$\|\tbeta - \betao\|_1 = o_p(1)$ from Theorem \ref{thm*:initial rate} and
  $\supt|\bSo(t,\betao)^{-1}| = O_p(1)$ from Lemma \ref{lemma:YtauI}.
\end{enumerate}
\end{proof}

\begin{proof}[Proof of Lemma \ref{lemma:YtauI}]
Consider the event
$$
\eventrM^* = \left\{ n^{-1} \sumin I(X_i \ge t^*)I(\epsilon_i=1) \ge e^{-\Kb}\rstar/2\right\}.
$$
Each $I(X_i \ge t^*)I(\epsilon_i=1)$ is i.i.d. with expectation $G(t^*) \E [\{ F_1(\infty;\bZ) - F_1(t^*;\bZ)\}]$.
Applying Lemma \ref{lemma:Hoeffding} under \eqref{aseq:denom} and \eqref{aseq:bZ}
from \ref{Assum:design} and \ref{Assum:betaI}, we get
that $\eventrM$ occurs with probability $1-e^{-n e^{-2\Kb}\rstar^2}$.

Apparently, we have $I(X_i \ge t^*) \ge I(X_i \ge t^*)I(\epsilon_i=1)$.
Morevoer, $\bSo(t,\betao)$ and $\btSo(t,\betao)$ are both lower bounded by
$n^{-1} \sumin I(X_i \ge t^*) e^{-\Kb}$.

On $\eventrM$,
$\supt|n/\{\sumin I(X_i \ge t^*)\}| \le 2e^{\Kb}/\rstar$ and
$$
\max\left\{\supt|\bSo(t,\betao)^{-1}|, \supt|\btSo(t,\betao)^{-1}|\right\} \le 2e^{\Kb}e^{\Kb}/\rstar.
$$
\end{proof}

\begin{proof}[Proof of Lemma \ref{lemma:gamma-oracle}]
To simplify notation, wherever possible we will use
$\hGr_j(\bfgr) = \Gr_j(\bfgr, \bini)$.
\begin{enumerate}[label = (\roman*)]
  \item We want to prove that for all $j = 1, \ldots, p$,  the differences $\tilde{\bfgr}_j := \hgr_j - \gro_j$ belong to a certain convex cone.

It follows from the KKT conditions that, for $l = 1, \ldots, p-1$,
\[
\begin{cases}
\frac{\partial \hGr_j(\hgr_j)}{\partial \bfgr_{j,l}} + \lambda_j\mathrm{sgn}(\hgr_{j,l}) = 0 & \mbox{if } \hgr_{j,l} \neq 0;\\
\biggl|\frac{\partial \hGr_j(\hgr_j)}{\partial \bfgr_{j,l}}\biggr| \leq \lambda_j & \mbox{if } \hgr_{j,l} = 0.
\end{cases}
\]
Denote $\mathcal{O}_j := \bigl\{l \in \{1, \ldots, p-1\}: \, \gro_{j, l} \neq 0\bigr\}$ and $\mathcal{O}^c_j := \{1,\ldots,p-1\}\setminus\mathcal{O}_j$.  For $\xi_j > 1$, it follows from the KKT conditions above that on the event \[
\Omega_0 := \{\|\dhGr_j(\gro_j)\|_{\infty} \leq (\xi_j-1)\lambda_j/(\xi_j+1)\},
\]
with $\bar \bfgr_j = \alpha \hgr_j + (1-\alpha) \gro_j$, $\alpha \in (0,1)$
\begin{align*}
\label{Eq:KKT}
0 &\leq
 2\tilde{\bfgr}_j^{\top}\ddhGr_j(\bar{\bfgr}_j)  \tilde{\bfgr}_j \nonumber \\
 &=\tilde{\bfgr}_j^{\top}\bigl\{\dhGr_j(\hgr_j) -\dhGr_j(\gro_j) \bigr\} \nonumber \\
& = \sum_{l \in \mathcal{O}_j^c}\tilde{\bfgr}_{j,l}\frac{\partial \hGr_j(\hgr_j)}{\partial \bfgr_{j,l}}
+ \sum_{l\in \mathcal{O}_j} \tilde{\bfgr}_{j,l}\frac{\partial \hGr_j(\hgr_j)}{\partial \bfgr_{j,l}}
- \tilde{\bfgr}_j^{\top} \dhGr_j(\gro_j) \nonumber \\
& \leq -\lambda_j\sum_{l\in \mathcal{O}^c_j}\hgr_{j,l}\mathrm{sgn}(\hgr_{j,l})
 + \lambda_j\sum_{l \in \mathcal{O}_j}|\tilde{\bfgr}_{j,l}|
 + \frac{(\xi_j-1)\lambda_j}{\xi_j+1}\bigl\|\tilde{\bfgr}_{j, \mathcal{O}_j}\bigr\|_1
 + \frac{(\xi_j-1)\lambda_j}{\xi_j+1}\bigl\|\tilde{\bfgr}_{j, \mathcal{O}_j^c}\bigr\|_1 \nonumber \\
& = -\frac{2\lambda_j}{\xi_j+1}\bigl\|\tilde{\bfgr}_{j, \mathcal{O}_j^c}\bigr\|_1 + \frac{2\xi_j\lambda_j}{\xi_j+1}\bigl\|\tilde{\bfgr}_{j, \mathcal{O}_j}\bigr\|_1.
\end{align*}

  \item
Let $\bfv = \tilde{\bfgr}/\|\tilde{\bfgr}\|_1$ be the $l_1$-standardized direction for $\tilde{\bfgr} = \hgr - \gro$.
By part \ref{part:gamma-cone} and convexity of $\Gr_j$ in $\bfgr_j$, any $x \in (0, \|\tilde{\bfgr}\|_1]$ satisfies
$$
\bfv^\top\left\{ \dhGr_j(\bfgr^*+x\bfv) -\dhGr_j(\bfgr^*) \right\} \le
-\frac{2\lambda_j}{\xi_j+1}\|\bfv_{\Ocal_j^c}\|_1 + \frac{2\xi_j\lambda_j}{\xi_j+1}\bigl\|\bfv_{\Ocal_j}\|_1.
$$
We relax the inequality about $x$ above to establish an upper bound for $\|\tilde{\bfgr}\|_1$.
By the definition of $\kappa_j$, the left hand side can be bounded by
$$
\bfv^\top\left\{ \dhGr_j(\bfgr^*+x\bfv) -\dhGr_j(\bfgr^*) \right\}
 =  x \bfv^\top \ddhGr_j(\bfgr^*) \bfv
 \ge \frac{x \|\bfv_{\Ocal_j}\|_1^2  \compgr}{\sgr_j}.
$$
The right hand side can be bounded using the complete square $\{\|\bfv_{\Ocal_j}\|_1 - 2/(\xi_j+1)\}^2$,
$$
-\frac{2\lambda_j}{\xi_j+1}\|\bfv_{\Ocal_j^c}\|_1 + \frac{2\xi_j\lambda_j}{\xi_j+1}\bigl\|\bfv_{\Ocal_j}\|_1
= 2\lambda_j\|\bfv_{\Ocal_j}\|_1 - \frac{2\lambda_j}{\xi_j+1}
\le \lambda_j(\xi_j+1) \|\bfv_{\Ocal_j}\|_1^2.
$$
Combining the bounds for both sides in the inequality,
we get an upper bound for
$\|\tilde{\bfgr}\|_1 $.
\end{enumerate}
\end{proof}

\begin{proof}[Proof of Lemma \ref{lemma:gamma-score}]
  We define
  $$
  \tGr_j(\bfgr) = n^{-1}\sumin \inttao \{Z_{ij}(t)-\mu_j(t)-\bfgr^\top \bZ_{i,-j}(t) + \bfgr^\top \bmu_{-j}(t)\}^2 d\Nobs_i(t).
  $$
  By Lemmas \ref{lemma:opNorm} and \ref{lemma:bTh},
  $$\maxj\|\dhGr_j(\gro_j,\bini) - \dtGr_j(\gro_j)\|_\infty = O_p\left(\|\bini-\betao\|_1+\sqrt{\log(p)/n}\right).$$
  $\dtGr_j(\gro_j)$ is the average of i.i.d. vectors with mean $\dbGr_j(\gro_j) = \bzero$
  and maximal bound $K^2(1+\Kgr)$.
  We can apply Lemma \ref{lemma:Hoeffding} to the matrix $(\dtGr_1(\gro_1),\dots, \dtGr_p(\gro_p))$
  to get
\begin{align*}
  \maxj \|\dtGr_j(\gro_j)\|_\infty & = \|(\dtGr_1(\gro_1),\dots, \dtGr_p(\gro_p))\|_{\max}
  \\
  &= O_p(\sqrt{\log(p^2)/n})= O_p(\sqrt{\log(p)/n}).
  \end{align*}
\end{proof}

\begin{proof}[Proof of Lemma \ref{lemma:Hess}]
\begin{enumerate}[label=(\roman*)]
  \item   We define
  $$
  \tHess = n^{-1} \sumin \inttao \{\bZ_i(t)-\bmu(t)\}^{\otimes 2} d \Nobs_i(t).
  $$
  The total variation of each $\Nobs_i(t)$ is at most $1$.
  By Lemma \ref{lemma:opNorm},
  we have
  $$\supt\|\bZbar(t,\bini)-\bmu\|_\infty = O_p\left(\|\bini-\betao\|_1+\sqrt{\log(p)/n}\right).$$
  Hence,
  $$
  \|\hHess - \tHess\|_{\max} \le 2K O_p\left(\|\bini-\betao\|_1+\sqrt{\log(p)/n}\right) = O_p\left(\|\bini-\betao\|_1+\sqrt{\log(p)/n}\right).
  $$
  Now, $\tHess$ is average of i.i.d. with mean $\Hess$ and bounded maximal norm $K^2$.
  We apply Lemma \ref{lemma:Hoeffding} with union bound,
  $$
  \Pr\left(\|\tHess - \Hess\|_{\max} \ge K^2 x\right) \le 2p^2e^{2nx^2}.
  $$
  Choosing $x = \sqrt{\log(2p^2/\varepsilon)/(2n)}$, we have $\|\tHess - \Hess\|_{\max} = O_p(\sqrt{\log(p)/n})$.

  \item We alternatively use the following form
  $$
  \ddm(\bbeta) = n^{-1}\sumin \inttao \left\{n^{-1}\sumjn \frac{\omega_j(t)Y_j(t)\RR{j}{t}{}}{\bSo(t,\bbeta)}\bZ_i(t)^{\otimes 2} - \bZbar(t,\bbeta)^{\otimes 2}\right\} d\Nobs_i(t).
  $$
  By Lemma \ref{lemma:opbeta}, we have
  $$
  \|\ddm(\tbeta) - \ddm(\betao)\|_{\max} = O_p(\|\tbeta-\betao\|_1).
  $$
  We also have a similar form for
  $$
    \ddtm(\bbeta) = n^{-1}\sumin \inttao \left\{n^{-1}\sumjn \frac{I(C_j \ge t)Y_j(t)\RR{j}{t}{}}{\btSo(t,\bbeta)}\bZ_i(t)^{\otimes 2} - \bZtil(t,\bbeta)^{\otimes 2}\right\} d\Nobs_i(t).
  $$
  By Lemma \ref{lemma:opbmu}, we have
  $$
  \|\ddm(\betao) - \ddtm(\betao)\|_{\max} = O_p\left(\sqrt{\log(p)/n}\right).
  $$
  Finally, we use the martingale property of
  \begin{align*}
  &  \ddtm(\betao)-\tHess =   n^{-1}\sumin \inttao \left\{ \frac{\btS{2}(t,\betao)}{\btSo(t,\betao)} - \bZtil(t,\betao)^{\otimes 2}\right\} I(C_i\ge t)d\MI_i(t) \\
    &
    - n^{-1}\sumin \inttao \{\bZ_i(t)-\bZtil(t,\betao)\}^{\otimes 2} I(C_i \ge t) d\MI_i(t)\\
    & + n^{-1}\sumin \inttao \left[\{\bZ_i(t)-\bZtil(t,\betao)\}^{\otimes 2}
    - \{\bZ_i(t)-\bmu(t)\}^{\otimes 2}\right] I(C_i \ge t) d\Nobs_i(t)
  \end{align*}
  under filtration $\ccFt$. The integrands in the first two martingale terms are bounded by $K^2$.
  Hence, we can apply Lemma \ref{lemma:opM} to obtain that their maximal norms are both $O_p\left(\sqrt{\log(p)/n}\right)$.
  We apply Lemma \ref{lemma:opbmu} to the integrand of the third term, equivalently expressed as
  $$
  \{\bmu(t)-\bZtil(t,\betao)\}\{\bZ_i(t)-\bZtil(t,\betao)\}^\top
    + \{\bZ_i(t)-\bmu(t)\}\{\bmu(t)-\bZtil(t,\betao)\}^\top.
  $$
  Therefore, we obtain $\|\ddtm(\betao)-\tHess\|_{\max} = O_p\left(\sqrt{\log(p)/n}\right)$.

  We put the rates together by the triangle inequality.
  \end{enumerate}
\end{proof}

\begin{proof}[Proof of Lemma \ref{lemma:gamma-compatibility}]
The proof is similar to that of Lemma \ref{lemma:kappa}.
  Define the compatibility factor for $\conegrj$ and symmetric matrix $\bfH$ as
  $$
  \kappa_j(\xi_j,\Ocal_j; \bfH) =  \sup_{0\neq \bfg \in \conegrj}\frac{\sqrt{\sgr_j \bfg^\top \bfH\bfg}}{\|\bfg_{\Ocal_j}\|_1}.
  $$
  Apparently, $\compgr=  \kappa_j\big(\xi_j,\Ocal_j; \ddGr(\gro,\bini)\big)$.
  Notice that
  $$
  \ddGr(\gro,\bini) = n^{-1}\sumin \inttao \{\bZ_{i,-j}(t) - \bZbar_{-j}(t,\bini)\}^{\otimes 2} d\Nobs_i(t) = \hHess_{-j,-j},
  $$
  where $\hHess_{-j,-j}$ is a $\hHess$ dropping its $j$th row and column.
  By Lemma 4.1 in \cite{HuangEtal13} (for a similar result, see \cite{vdGeerBuhlmann09} Corollary 10.1),
    $$
    \compgr^2 = \kappa_j^2\big(\xi_j, \Ocal_j;\hHess_{-j,-j}\big)
     \ge \kappa_j^2(\xi_j, \Ocal_j;\Hess_{-j,-j}) - \sgr_j (\xi_j+1)^2 \|\Hess_{-j,-j}-\hHess_{-j,-j}\|_{\max}.
    $$
    For any non-zero $\bfg \in \R^{p-1}$, let $\bfg^*$ be its embedding into $\R^p$ defined as
    $$
    g^*_k = \left\{ \begin{array}{cc}
    g_k & k<j\\
    0   & k = j\\
    g_{k-1} & k>j\end{array}\right.
    $$
    Then, we may establish a lower bound for the smallest eigenvalue of $\Hess_{-j,-j}$ by \ref{Assum:HessI}
    $$
    \inf_{0\neq \bfg \in \R^{p-1}} \bfg^\top \Hess_{-j,-j} \bfg =
    \inf_{0\neq \bfg \in \R^{p-1}} \bfg^{*\top} \Hess \bfg^* \ge \rhosigI \|\bfg\|_2^2.
    $$
    Hence, $\inf_{j=1,\dots,p} \kappa_j^2(\xi_j, \Ocal_j;\Hess_{-j,-j}) \ge \rhosigI$.
    Using the result in Lemma \ref{lemma:hHess} under \ref{Assum:norm rate},
    we have
    $$
    \inf_{j=1,\dots,p} \compgr^2 \ge \rhosigI - \|\Hess-\hHess\|_{\max} \sgr_{\max}\max_{j=1,\dots,p} (\xi_j+1)^2 = \rhosigI - o_p(1).
    $$
    Therefor, if $\xi_{\max} \asymp 1$, we must have that
    $\left\{\inf_j \compgr^2 \ge \rhosigI/2\right\}$ occurs with probability tending to one.
\end{proof}

%% HERE WE DECLARE THE BIBLIOGRAPHYSTYLE TO USE AND THE BIBLIOGRAPHY DATABASE
\bibliographystyle{apalike}
\bibliography{HD_competing_risks}

\end{document}